\newtheorem{theorem}{{Theorem}}[section]
\newtheorem{proposition}[theorem]{Proposition}
\newtheorem{lemma}[theorem]{Lemma}
\newtheorem{definition}[theorem]{Definition}
\newtheorem{axiom}[theorem]{Axiom}
\newtheorem{example}[theorem]{{Example}}
\def\la{q}
\def\ME{\mathcal{E}}
\def\UE{\bar{U}}
\def\compset{\Psi}
\DeclareMathOperator*{\argmin}{\arg\!\min}
\DeclareMathOperator*{\argmax}{\arg\!\max}
\newif\ifcomments
\author{Erik Eyster, Shengwu Li, and Sarah Ridout\thanks{Eyster: UC Santa Barbara, \protect\url{erikeyster@ucsb.edu}; Li: Harvard University, \protect\url{shengwu_li@fas.harvard.edu}; Ridout: Vanderbilt University, \protect\url{sarah.e.ridout@vanderbilt.edu}. We thank Mohammad Akbarpour, Nicholas Barberis, Douglas Bernheim, Ben Bushong, Sam Gershman, Ben Golub, David Laibson, Muriel Niederle, Matthew Rabin, Gautam Rao, Jesse Shapiro, Philipp Strack, Bruno Strulovici, Tomasz Strzalecki, and Richard Thaler for valuable comments.  All errors remain our own.}}
\title{A Theory of \textit{Ex Post} Rationalization}
\begin{document}

\maketitle

\abstract{People rationalize their past choices, even those that were mistakes in hindsight.  We propose a formal theory of this behavior.  The theory predicts that sunk costs affect later choices. Its model primitives are identified by choice behavior and it yields tractable comparative statics. (JEL: D11, D81, D91)}

\newpage

\section{Introduction} \label{sec:intro}

People rationalize their past choices.  We look back on our lives and try to make sense of what we have done.  Upon realizing that, in hindsight, we have made a mistake, we can adapt our goals, attitudes, or beliefs to justify the decision. 

Classical economic theory rules out rationalization. It assumes that people make forward-looking choices according to fixed preferences, rather than adapting their preferences to rationalize past decisions. Introspection, common sense, and psychological research all suggest that the classical approach omits a key aspect of human decision-making.\footnote{We review psychological research on rationalization in \Cref{sec:psych_evidence}.}

How can economic models capture \textit{ex post} rationalization? We develop a theory that accommodates this behavior. To fix ideas, consider the following example from \cite{thaler1980toward}.\footnote{The ticket cost \$40 in Thaler's example; we have raised the price due to inflation.}
\begin{example}\label{ex:baseline}
Bob pays \$100 for a ticket to a basketball game to be played 60 miles from his home.  On the day of the game there is a snowstorm.  He decides to go anyway.  If the ticket had been free-of-charge, he would have stayed home.
\end{example}
The \$100 that Bob paid is a sunk cost.  It is not worth going to the basketball game during a snowstorm.  In hindsight, it was a mistake to have bought the ticket.\footnote{Of course, it may have been \textit{ex ante} optimal to buy the ticket. We use ``mistake" as shorthand for ``sub-optimal action".}  But if Bob goes to the game, then he can avoid acknowledging the mistake, by exaggerating his enthusiasm for basketball or by downplaying the hazards of driving through a snowstorm.  If he stays home, then he is inarguably worse off than if he had not bought a ticket in the first place.

There are two key ingredients for this behavior.  First, Bob must have made a choice that was an \textit{ex post} mistake.  Hence, this modified example is far-fetched:
\begin{example}\label{ex:nochoice}
Bob receives a free ticket to a basketball game and loses \$100 due to an unusually high utility bill.  On the day of the game there is a snowstorm. He decides to go anyway.  If he had not lost the \$100, he would have stayed home.
\end{example}
Second, there must be plausible preferences that, if adopted, would justify Bob's earlier decision.  To illustrate this, we replace the physical consequences in \Cref{ex:baseline} with monetary gains and losses.
\begin{example}\label{ex:money}
Bob pays \$100 for a financial option that can only be exercised on the day of the basketball game.  It yields \$180 if exercised in good weather and loses \$20 if exercised in a snowstorm.  On the day of the game there is a snowstorm.  He decides to exercise the option anyway, for a net loss of \$100 + \$20.
\end{example}
\Cref{ex:money} is unnatural because there is no way for Bob to rationalize his initial purchase.  Letting the option expire results in a net loss of \$100, whereas exercising the option results in a net loss of \$120.  More money is better, so Bob has to acknowledge the mistake and cut his losses.

Even at high stakes, decision-makers sometimes rationalize sunk costs rather than acknowledge error.  For instance, a senior Irish Republican Army leader was asked in 1978 whether the years of violent resistance had been worth it.  He replied, ``Virtually nothing has been achieved. We can't give up now and admit that men and women who were sent to their graves died for nothing." \citep[p. 225]{smith2002fighting}\footnote{For further reading on how rationalizations by combatants prolonged the Troubles, see Chapter 3 of \cite{alonso2007}.}

Motivated by these examples, we propose a theory about agents who seek to rationalize their past choices by adapting their preferences. We model an agent facing a decision problem with this structure:
\begin{enumerate}
    \item The agent chooses action $a_1$ from menu $A_1$.
    \item The agent learns the state of the world $s \in S$.
    \item The agent chooses action $a_2$ from menu $A_2(a_1,s)$, which can depend on his first action and the state.
\end{enumerate}

A utility function takes as arguments $a_1$, $a_2$, and $s$.  The agent's \emph{material utility function} is denoted $u$.  The agent may adopt any utility function in the set $\mathcal{V}$, which we call \emph{rationales}.  $u$ and $\mathcal{V}$ are primitives of the model. We assume that $u \in \mathcal{V}$.

We start by describing the agent's choice from menu $A_2(a_1,s)$, after having chosen $a_1$ from menu $A_1$ and learned that the state is $s$.  The agent chooses action $a_2$ from menu $A_2(a_1,s)$ and rationale $v$ from $\mathcal{V}$ to maximize \textit{total utility}, that is
\vspace{\baselineskip}
\begin{equation}\label{eq:intro_rationalizer}
    (1-\gamma)\underbrace{u(a_1,\tikzmarknode{a21}a_2,s)}_{\substack{\text{\scriptsize material} \\ \text{\scriptsize utility}}} + \gamma  \underbrace{\left[\tikzmarknode{v1}{v}(a_1,\tikzmarknode{a22}a_2,s) - \max_{\substack{\hat{a}_1 \in A_1 \\ \hat{a}_2 \in A_2(\hat{a}_1,s)}} \tikzmarknode{v2}{v}(\hat{a}_1,\hat{a}_2,s)\right]}_{\text{\scriptsize rationalization utility}},
\end{equation}
\begin{tikzpicture}[overlay,remember picture,>=stealth,nodes={align=left,inner ysep=1pt},<-]
        \path (a21.north) ++ (-.1,1.15) node[anchor=south west,color=red!67] (scalep){\scriptsize chosen action};
        \draw[<->,color=red!57] ([shift={(0,0.1)}]a21.north) -- ++(0,1)  -|  ([shift={(0,0.1)}]a22.north);
        \path (v1.north) ++ (1.65,.85) node[anchor=south west,color=blue!67] (scalep){\scriptsize chosen rationale};
        \draw[<->,color=blue!57] ([shift={(0,0.1)}]v1.north) -- ++(0,.7)  -| node[] {} ([shift={(0,0.1)}]v2.north);
\end{tikzpicture}
where $\gamma \in [0,1]$ is the weight on \emph{rationalization utility}. Rationalization utility measures how close $(a_1,a_2)$ is to the \textit{ex post} optimum under the chosen rationale $v$. By construction, rationalization utility is no more than $0$.

Observe that if $a_1$ is \textit{ex post} materially optimal, that is
$$\max_{\hat{a}_2 \in A_2(a_1,s)} u(a_1, \hat{a}_2,s) = \max_{\substack{\hat{a}_1 \in A_1 \\ \hat{a}_2 \in A_2(\hat{a}_1,s)}} u(\hat{a}_1,\hat{a}_2,s),$$
then the theory predicts that the agent chooses $a_2$ to maximize material utility, which yields rationalization utility $0$ for rationale $v = u$. Hence, the theory departs from the classical prediction only when the agent has made an \textit{ex post} mistake.

When the agent has made an \textit{ex post} mistake, he may be able to increase rationalization utility by choosing rationale $v \neq u$.  By construction, $a_2$ maximizes a weighted sum of his material utility $u(a_1,a_2,s)$ and his chosen rationale $v(a_1,a_2,s)$, which distorts his choices compared to the classical benchmark.

We now apply the model to the earlier examples.  The rationales $\mathcal{V}$ are parameterized by $\theta \in [0, 400]$.  Utility function $v_{\theta}$ specifies that the agent gets $\theta$ utils for attending the game, $-200$ utils for driving through a snowstorm, and $-p$ utils for paying $p$ dollars.  Material utility is $u = v_{180}$, so a classical agent ($\gamma = 0$) is willing to pay \$180 to attend the game in good weather, but will stay home in a snowstorm.

In \Cref{ex:baseline}, the menu $A_1$ has two alternatives; the agent can get a ticket and lose \$100 or he can decline.  After buying the ticket for \$100, the agent learns that there is a snowstorm.  If he stays home, then his material utility is $-100$, and his rationalization utility is $-100 - \max\{\theta - 200 - 100, 0\}$.  It is optimal to choose $\theta \leq 300$, yielding total utility $(1-\gamma)(-100) + \gamma(-100) = -100$.  By contrast, if the agent attends the game, then his material utility is $180 - 200 - 100 = -120$ and his rationalization utility is $\theta - 200 - 100 - \max\{\theta - 200 - 100, 0\}$. Now it is optimal to choose $\theta \geq 300$, for rationalization utility of $0$ and total utility of $(1-\gamma)(-120)$. By attending the game and exaggerating his enthusiasm, the agent is able to increase rationalization utility at the cost of material utility.  For $\gamma > \frac{1}{6}$, his total utility is strictly higher when he attends the game.

Suppose instead that the ticket was free-of-charge.  Then staying home in a snowstorm leads to no regret under the agent's material utility function $u$.  Hence, the agent maximizes total utility by adopting rationale $v = u$ and staying home.  The agent's behavior exhibits sunk-cost effects; his choice on the day of the basketball game depends on upfront costs that he cannot recover.

In \Cref{ex:nochoice}, the agent has no choice initially, so the menu $A_1$ contains only one alternative: the agent gets a ticket and loses \$100.  This is trivially \textit{ex post} optimal, so the agent maximizes total utility by staying home in the snowstorm.  Hence, removing unchosen alternatives from the menu $A_1$ can alter the agent's later choice from $A_2$.

In \Cref{ex:money}, we have taken the agent's material utility for the outcomes in \Cref{ex:baseline} and converted utils to dollars.  The agent bought the financial option for \$100.  Exercising the option in good weather yields \$180, and exercising it in a snowstorm loses \$20.  But every rationale agrees about money, so there is no room to reduce regret and the agent does not exercise the option in a snowstorm.

Perhaps Bob's behavior is not due to rationalization, but due to an error in reasoning. Bob falls for the sunk cost fallacy: If he buys the ticket and stays home, then he has wasted \$100, whereas if he goes to the game then it was not a waste. But the same logic applies to \Cref{ex:money}---if Bob buys the financial option and does not use it, then he has wasted \$100.  Why cannot he avoid waste by exercising the option (and losing a further \$20)? In the present theory, sunk-cost effects arise when one can rationalize incurring past costs. This provides an explanation for why people commit the sunk cost fallacy in some situations and not in others.\footnote{Of course, this is only a partial account of the psychology at work. \cite{imas2016realization} finds that lab subjects take on more risk after paper losses and less risk after realized losses. The present theory neglects this dependence on framing.}

To complete the model, we specify the agent's behavior when choosing from the first menu $A_1$.  At this point, the agent has no earlier choices to rationalize, so we assume that the agent evaluates choices from $A_1$ according to his expected material utility under some prior on the states $S$.  This depends on the agent's beliefs about his future choice from $A_2$.  A \textit{na\"if} believes he will maximize material utility when choosing from $A_2$; a \textit{sophisticate} correctly foresees his own choices from $A_2$.

Does the theory make testable predictions that do not rely heavily on functional form assumptions? How do the choices of \textit{ex post} rationalizers compare to the classical benchmark?  We study a broader class of decision problems, in which the first actions, second actions, and rationales are complements.  Let the first actions, the second actions, and parameter set $\Theta$ be totally ordered sets. We assume that the rationales $\mathcal{V}$ have the form $\{w(a_1,a_2,\theta,s) : \theta \in \Theta\}$, for some function $w$ that is supermodular in $(a_1,a_2,\theta)$.  For instance, this includes the rationales we posited for \Cref{ex:baseline}, if we impose that \underline{buy a ticket} is a higher action than \underline{don't buy a ticket}, and that \underline{go to the game} is a higher action than \underline{stay home}.  It also includes time-separable utility functions, of the form $w(a_1,a_2,\theta,s) = w_1(a_1,\theta,s) + w_2(a_2,\theta,s)$, with $w_t$ supermodular in $(a_t,\theta)$.  We assume that the menu $A_2(a_1)$ is monotone non-decreasing in $a_1$.

We prove that if the rationalizer's first action was \textit{ex post} too high, then his second action is distorted upwards compared to the classical benchmark. Symmetrically, if the rationalizer's first action was \textit{ex post} too low, then his second action is distorted {downwards} compared to the classical benchmark.

This result yields comparative statics for a variety of settings.  It predicts sunk-cost effects under risky investment---when time-$1$ investment and time-$2$ investment are complements, the agent responds to time-$1$ cost shocks by exaggerating the expected profits of the project and raising time-$2$ investment.  It predicts that agents repeatedly facing identical decisions will have `sticky' choice behavior, responding too little to new information.  In particular, lab subjects who make incentivized reports of priors and posteriors will report posteriors biased towards their priors, and will underweight informative signals compared to subjects who report only posteriors.

We study the effect of \textit{unchosen} time-$1$ alternatives on time-$2$ choice. In the classical model, such alternatives are irrelevant for time-$2$ behavior. By contrast, for a rationalizer facing a supermodular decision problem, raising the unchosen time-$1$ alternatives lowers the agent's time-$2$ choices. This prediction compares cleanly to the zero effect predicted by the classical model.

Next, we study the problem of identification. In general, how is the modeler to specify material utility $u$ and the rationales $\mathcal{V}$?  In some situations, we can use standard restrictions on the preferences agents may plausibly hold.  For instance, for an agent choosing between money lotteries, we could assume that $\mathcal{V}$ is a class of preferences with constant relative risk aversion.  Similarly, for an agent bidding in an auction, we could assume that the rationales $\mathcal{V}$ have different valuations for the object, but are all quasi-linear in money.

If we do not make \textit{a priori} restrictions on $\mathcal{V}$, can we nonetheless deduce the rationales? We prove that the model primitives $u$ and $\mathcal{V}$ are identified by choice behavior.  That is, suppose we start with finitely many outcomes, and each utility function depends on the outcome and the state.  We then construct objective lotteries over outcomes, and extend utility functions by taking expectations.  The agent faces decision problems of this form:
\begin{enumerate}
    \item The agent selects a menu $M$ from a collection of menus of lotteries.
    \item The agent learns the state of the world $s \in S$.
    \item The agent chooses a lottery from $M$.
\end{enumerate}
We find that the agent's choice correspondence pins down material utility $u$ and the rationales $\mathcal{V}$; essentially, these are unique up to a positive affine transformation. Hence, statements about the agent's rationales can be reduced to statements about the agent's choice behavior. We also prove a representation theorem, providing necessary and sufficient conditions for a choice correspondence to be consistent with the theory.

The paper proceeds as follows: \Cref{sec:litreview} reviews related economic theories. \Cref{sec:model_setup} states the theory and discusses interpretations. \Cref{sec:evidence} relates the theory to data; it surveys evidence from psychology about rationalization and contrasts \textit{ex post} rationalization with other explanations of sunk-cost effects. \Cref{sec:comp_stat} provides comparative statics for supermodular decision problems, yielding more testable predictions. \Cref{sec:id} provides a representation theorem and an identification theorem. \Cref{sec:extensions} discusses extensions.
\section{Literature review}\label{sec:litreview}

Various economic theories posit that agents change their beliefs or preferences to align with past actions, in particular settings such as belief updating, voting, and consumption \citep{yariv2005seeit,acharya2018explaining,bernheim2021theory,suzuki2019choice,nagler2021thoughts}.\footnote{This is part of a broader literature that studies agents who choose their beliefs or preferences \citep{akerlof1982economic,rabin1994cognitive,rotemberg1994human,becker1997endogenous,brunnermeier2005optimal}.} The present theory contributes to this literature in two ways: it applies to general two-stage decision problems and its parameters are identified from choice data. There are other key differences. In \cite{yariv2005seeit}, \cite{acharya2018explaining}, and \cite{bernheim2021theory}\footnote{The retrospective motive appears in the preprint of \cite{bernheim2021theory}, but not in the published version.}, the agent's choices do not depend on earlier forgone alternatives, whereas for an \textit{ex post} rationalizer they do. In \cite{suzuki2019choice} the agent suppresses signals about the state, and in \cite{nagler2021thoughts} the agent pays a cost to raise his marginal utility of consumption. By contrast, an \textit{ex post} rationalizer adopts different utility functions to justify his actions. 

Regret theory posits that the agent makes choices today so as to reduce regret tomorrow \citep{savage1951theory, loomes1982regret,loomes1987some, bell1982regret, sarver2008anticipating}. By contrast, an \textit{ex post} rationalizer distorts today's choices so as to justify yesterday's choices. This retrospective motive does not arise in standard regret theory.

Most directly, we build on ideas from \cite{eyster2002rationalizing} and \cite{ridout2020model}.

\cite{eyster2002rationalizing} studies a  two-period model with an agent who wishes to reduce \textit{ex post} regret, assessed according to \textit{material} utility, but limits attention to alternative first actions that are `consistent with' the chosen second action.  To illustrate, the theory of \cite{eyster2002rationalizing} would explain \Cref{ex:baseline} by positing that if Bob attends the game, then only \underline{buy a ticket} is consistent, so he feels no regret.  On the other hand, if Bob stays home, then both \underline{buy a ticket} and \underline{don't buy a ticket} are consistent, so he feels regret for having bought the ticket. One limitation of this approach is that the modeler's intuitions about consistency may vary with how the actions are framed -- \underline{stay home} seems consistent with \underline{don't buy a ticket}, but \underline{stay home with a ticket in hand} does not seem consistent with \underline{don't buy a ticket}, so Bob can also avoid regret by staying home with a ticket in hand. Our present approach
overcomes this framing objection. Instead of a frame-dependent consistency relation, our main primitive is a set of rationales, \textit{i.e.} \textit{post hoc} reasons for the agent's choice, and these are identified from choice data. 

\cite{ridout2020model} studies a model of one-shot choice, with an agent who has a set of `justifiable' preferences and a material preference that is not justifiable.  The agent is constrained to choose only alternatives that maximize some justifiable preference.  The key contrast between these theories is that Ridout's agent desires only material satisfaction but must conceal his true motives, whereas an \textit{ex post} rationalizer seeks to reduce regret from past mistakes.  This retrospective motive does not appear in \cite{ridout2020model}.
\section{Statement of theory}\label{sec:model_setup}

In our model, an agent chooses an action from a menu, then learns the state of the world, and then finally chooses an action from a second menu, which can depend on the first action and the state.

We now define the model primitives.  $\mathcal{A}_1$ denotes the \textbf{first actions}; $\mathcal{A}_2$ denotes the \textbf{second actions}; and $S$ denotes the \textbf{states of the world}, with representative elements $a_1$, $a_2$, and $s$, respectively.

A \textbf{decision problem} $D \equiv (A_1,A_2,F)$ consists of
\begin{enumerate}
    \item a first-period menu $A_1 \subseteq \mathcal{A}_1$,
    \item and a second-period menu correspondence $A_2: A_1 \times S \rightrightarrows \mathcal{A}_2$.
    \item a prior over states $F \in \Delta S$,
\end{enumerate}
We require that $A_1$ and $A_2$ be non-empty.

A \textbf{utility function} is a function $v: \mathcal{A}_1 \times \mathcal{A}_2 \times S \rightarrow \mathbb{R}$.  The \textbf{rationales} are denoted $\mathcal{V}$; these are a set of utility functions that the agent may adopt to justify her actions.  The agent's \textbf{material utility function} is denoted $u$, and we assume that $u \in \mathcal{V}$.

The set of rationales $\mathcal{V}$ captures the preferences that the agent regards as reasonable.  For instance, the rationales could specify the agent's utility from consuming a good or service.  Alternatively, rationales could specify the agent's subjective beliefs about some payoff-relevant event, with the observed state $s$ being a noisy signal about that event.

We start by describing choice in the second period.  The agent facing decision problem $D$ has chosen $a_1$ from menu $A_1$ and learned that the state is $s$.  She chooses $a_2 \in A_2(a_1,s)$ and $v \in \mathcal{V}$ to maximize
\begin{equation}\label{eq:full_rationalizer}
    U_D(a_2, v \mid a_1, s) \equiv
    (1-\gamma)\underbrace{u(a_1,a_2,s)}_{\substack{\text{material} \\ \text{utility}}} + \gamma  \underbrace{\left[ v(a_1,a_2,s) - \max_{\substack{\hat{a}_1 \in A_1 \\ \hat{a}_2 \in A_2(\hat{a}_1,s)}} v(\hat{a}_1,\hat{a}_2,s)\right]}_{\text{rationalization utility}}
\end{equation}
for parameter $\gamma \in [0,1]$.   Rationalization utility measures how close the agent's course of action is to the \textit{ex post optimum} under her chosen rationale $v$.  When $a_1$ was \textit{ex post} sub-optimal according to $u$, rationalization utility might be increased by adopting rationale $v \neq u$.  This distorts the agent's choice of $a_2$, which maximizes $(1-\gamma) u(a_1,a_2,s) + \gamma v(a_1,a_2,s)$.

We restrict attention to decision problems for which the relevant maxima are well-defined. This is implied, for instance, if every $v \in \mathcal{V}$ is continuous in the actions, and the sets $A_2(a_1,s)$, $\{(a'_1,a'_2): a'_1 \in A_1 \text{ and } a'_2 \in A_2(a'_1) \}$, and $\mathcal{V}$ are compact.

We discuss some natural benchmarks for first-period behavior.  A \textbf{na\"if} chooses $a_1$ to maximize $\mathbb{E}_F[u(a_1,a^*_2(a_1,s),s)]$ where $a^*_2(a_1,s)$ is a selection from
\begin{equation}
    \argmax_{a_2 \in A_2(a_1,s)} u(a_1,a_2,s).
\end{equation}
A \textbf{sophisticate} chooses $a_1$ to maximize $\mathbb{E}_F[u(a_1,\tilde{a}_2(a_1,s),s)]$ where $\tilde{a}_2(a_1,s)$ is a selection from
\begin{equation}\label{eq:soph_obj}
    \argmax_{a_2 \in A_2(a_1,s)} \max_{v \in \mathcal{V}} U_D(a_2,v \mid a_1,s).
\end{equation}
Na\"ifs and sophisticates both maximize expected material utility \textit{ex ante}, albeit with different beliefs about \textit{ex post} behavior. If choice correspondence \eqref{eq:soph_obj} is non-singleton, one may select between personal equilibria as in \cite{kHoszegi2006model}.

Another natural benchmark is the \textbf{empathetic sophisticate}, whose first action maximizes expected total utility, \textit{i.e.}
\begin{equation}
    \mathbb{E}_F\left[\max_{a_2 \in A_2(a_1,s)}\max_{v \in \mathcal{V}} U_D(a_2,v\mid a_1, s)\right].
\end{equation}
This agent both correctly foresees his \textit{ex post} behavior and desires to reduce regret when choosing \textit{ex ante}.

\subsection{Discussion of modeling choices}\label{sec:modeling_choices}

Plausibly, the agent's rationalization motive depends on the kind of \textit{ex ante} uncertainty she faced.  Choosing a risky investment is not like choosing a bet in roulette. It is easier to remember the \textit{ex ante} perspective when evaluating choices with objective risks.  By contrast, people are more likely to say, ``I should have known it!" for decisions that involved subjective uncertainty or required deliberation to weigh competing considerations.  Our model abstracts from this nuance, representing  uncertainty using only a distribution over states.  However, we interpret the scope of the theory to be confined to those kinds of uncertainty which seem predictable in hindsight.\footnote{We suggest that experimental tests of the theory use forms of uncertainty that require the subject to exercise judgment, rather than objective risks such as coin flips or dice rolls. Experimenters might also consider designs that give subjects the illusion of control \citep{langer1975illusion, presson1996illusion}.}

For the theory to depart from the classical prediction, the available rationales $\mathcal{V}$ must be limited.  For instance, if $\mathcal{V}$ includes a `stoic' rationale that is indifferent between all action sequences, then the second term in \eqref{eq:full_rationalizer} can always be set to zero, and the theory predicts material utility maximization.  Thus, the theory's novel predictions depend on plausible restrictions on the rationales that the agent can adopt.

In applying the theory, the rationales $\mathcal{V}$ should capture the preferences that it is psychologically plausible that the decision-maker could hold. Bob can convince himself that he will enjoy the game enough to pay \$100 and drive through the snowstorm; he cannot convince himself that he will get \$100 of pleasure from staying home and literally eating the ticket. These facts about Bob rest on our commonsense understanding of basketball games, snowstorms, and the human diet.

As a rule of thumb, we recommend that the rationales be limited to those that are consistent with the actual choices made by the relevant population under full information. Some fans would pay \$100 to attend a basketball game, even knowing for sure that it takes place during a snowstorm. Virtually no one would pay \$100 to eat a basketball ticket, or would choose to lose \$120 for sure instead of losing \$100 for sure. If nobody would choose according to $v$ with full information about $s$, then it is implausible that the agent can convince himself of rationale $v$ in state $s$.

Even in classical economic analyses, we make judgments about the set of plausible preferences. For instance, in mechanism design, the type space captures the \textit{a priori} plausible limits on the agent's preferences, in the canonical interpretation proposed by \cite{hurwicz1972}. Similar judgments are required to make functional form restrictions for structural models. These judgments are informed partly by introspection and partly by observing the choices made by other people.

Whenever possible, we suggest that the theory should be applied by importing standard preference restrictions from classical models. This serves to prevent \textit{ad hoc} explanations and to make the theory a portable extension of existing models, in the sense of \cite{rabin2013approach}. For instance, it is standard in auction theory to assume that the agent's preferences have the form $\theta \mathbbm{1} - \tau$ for $\theta \in [\underline{\theta}, \overline{\theta}]$, where $\mathbbm{1}$ is an indicator for whether the agent gets the object and $\tau$ is his net transfer. When applying the present theory to auctions, it is natural to adopt this same class of preferences as the set of rationales. Similarly, when studying belief updating, it is often assumed that the agent only cares about his money payments, but may have a range of prior beliefs about the underlying state. This is a natural structure to impose on $\mathcal{V}$, and we illustrate the construction in \Cref{sec:rationales_as_beliefs}. 

Nonetheless, because the theory's predictions depend on $\mathcal{V}$, its general applicability depends on whether $\mathcal{V}$ can be identified from choice behavior. We take up this challenge in \Cref{sec:id}.

\subsection{Rationales as subjective beliefs}\label{sec:rationales_as_beliefs}

What if, rather than adapting her tastes, the agent adapts her beliefs?  Let the states of the world be some finite set $\Omega$ and the agent's payoff in state $\omega \in \Omega$ be $\tilde{v}(a_1,a_2,\omega)$.  After choosing $a_1$, the agent observes a signal $X$ about the state, a random variable with full-support distribution conditional on $\omega$. Then she chooses $a_2$.

The agent adapts her beliefs by distorting her prior on $\Omega$.  Formally, the agent has a set of plausible priors $\Pi \subseteq \Delta \Omega$. One of these is the `material' prior $\pi^*$. For each prior $\pi \in \Pi$ and each signal realization $x$, we define
\begin{equation}
    v_\pi(a_1,a_2,x) \equiv E_\pi\left[ \tilde{v}(a_1,a_2,\omega) \mid X = x \right].
\end{equation}
From this perspective, belief adaptation is a special case of the model, with material utility and available rationales as follows:
\begin{equation}
    u(a_1,a_2,x) \equiv E_{\pi^*}\left[ \tilde{v}(a_1,a_2,\omega) \mid X = x \right],
\end{equation}
\begin{equation}
    \mathcal{V} \equiv \left\{ v_\pi: \pi \in \Pi \right\}.
\end{equation}
On this interpretation, the model yields predictions about how the agent's beliefs respond to her past choices. We explore some of these in \Cref{sec:beliefs}.

\subsection{Rationales as justifications to others}\label{sec:justif_to_others}

We have interpreted the theory as capturing individual psychological motives, but it has another interpretation in the context of organizations. On this interpretation, the theory describes a rational agent who is rewarded for past performance, but can influence the principal's criteria for performance evaluation.  The function $u$ represents the principal's default criterion and $\mathcal{V}$ represents the criteria that the principal would find acceptable. The parameter $\gamma$ captures the agent's degree of influence. The resulting performance criterion is
\begin{equation}
    \rho(a_1,a_2,s) = (1-\gamma)u(a_1,a_2,s) + \gamma v(a_1,a_2,s),
\end{equation}
and the agent's time-$2$ reward is some positive affine transformation of
\begin{equation}\label{eq:perf}
    \rho(a_1,a_2,s) - \max_{\substack{\hat{a}_1 \in A_1 \\ \hat{a}_2 \in A_2(\hat{a}_1,s)}}  \rho(\hat{a}_1,\hat{a}_2,s).
\end{equation}
The actions $a_2$ and rationales $v$ that maximize \eqref{eq:perf} are the same as those that maximize \eqref{eq:full_rationalizer}.
Thus, the agent's need to defend past decisions generates sunk-cost effects.\footnote{\cite{fujino2016neural} find that people who tend to adhere to social rules and regulations are more likely to exhibit the sunk-cost effect.}  This formalizes an observation by \cite{staw1980rationality} about the perverse incentives of retrospective performance evaluation.
\section{Evidence on \textit{ex post} rationalization}\label{sec:evidence}

\subsection{Evidence from psychology}\label{sec:psych_evidence}

We review psychological research pertinent to the present theory. We cover only a fraction of the vast literature on rationalization; as \cite{cushman2020rationalization} writes, ``Among psychologists, [rationalization] is one of the most exhaustively documented and relentlessly maligned acts in the human repertoire." 

The present theory posits that the agent evaluates the past from the \textit{ex post} perspective, as is consistent with the literature on hindsight bias. This literature finds that after an event has occurred, people are overconfident that they could have predicted it in advance \citep{fischhoff1975hindsight,blank2007hindsight}, and even misremember their own \textit{ex ante} predictions, falsely believing that they predicted what came to pass \citep{fischhoff1975knew, fischhoff1977perceived}. Hence, it is plausible that people adopt the \textit{ex post} rather than the \textit{ex ante} perspective when evaluating their past choices.

In the present theory, rather than sticking to her original motives, the agent adopts \textit{post hoc} rationales that justify her actions. The psychology literature on confabulation finds that people generate \textit{post hoc} rationales and sincerely believe them. Confabulation has been documented in split-brain patients \citep{gazzaniga1967split, gazzaniga2005forty} and in ordinary people manipulated to misremember what they chose \citep{johansson2005failure, johansson2006something}. \cite{nisbett1977telling} find that subjects are often unaware of the effects of experimental stimuli on their behavior, and offer spurious explanations when queried by experimenters.

Cognitive dissonance theory posits that people adapt their cognitions so as to achieve internal consistency \citep{festinger1957theory}.  In particular, they adapt their attitudes and beliefs to justify their past choices.  A variety of experiments find evidence for this hypothesis.  When a reading group conducts an unexpectedly humiliating initiation ritual, this causes new members to evaluate the group more positively, a result due to \cite{aronson1959effect} and replicated by \cite{gerard1966effects}.  The mere fact that an alternative was chosen in the past causes lab subjects to evaluate it more positively and causes them to be more likely to choose it in future \citep{arad2013past}.\footnote{This result originates from the experiment of \cite{brehm1956postdecision}, which is confounded by self-selection bias \citep{chen2010choice,risen2010study}. \cite{arad2013past} modified the paradigm to remove the confound.} \cite{harmon2007cognitive} review the substantial experimental evidence on cognitive dissonance.

In summary, the mental mechanisms behind \textit{ex post} rationalization have been studied by psychologists for decades. Our contribution is to formalize these in a tractable economic model.

\subsection{Sunk-cost effects}\label{sec:sunk_costs}

Sunk-cost effects have been documented in many settings, including  business decisions \citep{staw1976knee, mccarthy1993reinvestment,schoorman1988escalation,staw1997escalation}, consumption decisions \citep{arkes1985psychology, ho2018sunk}, professional sports \citep{staw1995sunk,camerer1999econometrics,keefer2017sunk}, and auctions \citep{herrmann2015beating,augenblick2016sunk}.  Some studies find no evidence of sunk-cost effects \citep{ashraf2010can,friedman2007searching,ketel2016tuition,negrini2020still}.\footnote{For a meta-analysis, see \cite{roth2015sunk}}

Despite the intuitive pull of sunk-cost effects and their clear relevance to economic decisions, there is no standard theory of sunk-cost effects with broad scope. Most empirical analyses of sunk costs either do not test a formal theory of sunk-cost effects or use bespoke theories designed for particular contexts.\footnote{Of the studies cited above, \cite{staw1976knee}, \cite{mccarthy1993reinvestment}, \cite{schoorman1988escalation}, \cite{staw1997escalation}, \cite{staw1995sunk}, \cite{camerer1999econometrics}, \cite{keefer2017sunk}, \cite{herrmann2015beating}, \cite{friedman2007searching}, and \cite{ketel2016tuition} do not test a formal theory of sunk costs. \cite{ho2018sunk} and \cite{augenblick2016sunk} test theories specialized to durable goods and dynamic auctions respectively. \cite{arkes1985psychology} and \cite{negrini2020still} test the predictions of prospect theory. \cite{ashraf2010can} test the taste-for-consistency theory of \cite{eyster2002rationalizing}.}

\textit{Ex post} rationalization is a candidate for a workhorse theory.  It predicts that sunk-cost effects will occur when the agent has made an \textit{ex post} mistake that can be rationalized by doubling down on the original course of action.

Sunk-cost effects do not rely on the functional form we chose for \Cref{ex:baseline}. Instead, consider an agent making a risky investment decision to invent a new product.  He chooses investment levels $a_1,a_2 \in [0,1]$.  The project succeeds with probability $\phi(a_1,a_2)$, where $\phi$ is increasing in both arguments, differentiable, and has a positive cross-partial derivative. The agent's rationales involve adapting his belief about the expected profits conditional on success; this is captured by $\theta \in \mathbb{R}$.  His payoff under rationale $v_\theta$ is
\begin{equation}
    \theta \phi(a_1,a_2) - s a_1 - a_2.
\end{equation}
$s$ is a non-negative shock to time-$1$ investment costs. These costs are sunk at time $2$, so a classical agent's time-$2$ choice does not depend on $s$. By contrast, for an \textit{ex post} rationalizer, high time-$1$ cost shocks lead to higher time-$2$ investments.

More broadly, suppose that the agent's rationales involve adapting his value $\theta$ for consuming some good or achieving some goal. In \Cref{sec:comp_stat}, we find that sunk-cost effects are implied by a general comparative statics result, that applies whenever the agent's payoffs are supermodular in $a_1$, $a_2$, and $\theta$.  Supermodularity is a standard condition that captures complementarity between choice variables.

Reputation concerns can give rise to sunk-cost effects. For instance, a project manager might respond to sunk costs so as to convince the market that he is competent \citep{prendergast1996impetuous}, and agents in bilateral relationships might respond to sunk costs so as to improve their reputation for future partners \citep{mcafee2010sunk}. The present model can be seen as a tractable reduced form for reputation concerns, as in \Cref{sec:justif_to_others}.  Its simple linear form abstracts from the details of signaling equilibria, and in return allows the theory to apply to a wider class of decision problems.

Self-signaling under limited memory can lead to sunk-cost effects, because investing at time $1$ despite high costs signals the agent's time-$1$ information to her time-$2$ self \citep{baliga2011mnemonomics, hong2019sunk}.  However, if a cost shock occurs only after the time-$1$ decision, then unexpectedly high costs are unrelated to the agent's time-$1$ information.  Consequently, self-signaling explains some but not all of the data. For instance, \cite{arkes1985psychology} and \cite{guenzel2021too} find sunk-cost effects from cost shocks that postdate the time-$1$ decision.\footnote{\cite{arkes1985psychology} study people who were at the ticket window, had already announced their intention to buy a season ticket, and then were given an unexpected discount. \cite{guenzel2021too} studies shocks to corporate acquisition costs that occur only after the acquisition decision.}

Reputation concerns and memory limitations are real and often important.  However, the evidence of \Cref{sec:psych_evidence} suggests that rationalization is a basic psychological process, operating even in situations where those features are absent.

Reference-dependent preferences are another explanation for sunk-cost effects \citep{thaler1980toward, arkes1985psychology}. Returning to \Cref{ex:baseline}, if Bob buys the ticket and stays home in the snowstorm, then this results in a sure loss relative to his reference point. On the other hand, attending the game in a snowstorm is a gain in some dimensions and a loss in others.  The standard prospect theory value function is convex over losses \citep{kahneman1979prospect}, which can lead Bob to attend the game after buying the ticket, but to stay home if the ticket was free.

\textit{Ex post} rationalization and reference-dependence make different predictions about \Cref{ex:nochoice}, in which Bob receives a ticket and loses \$100 through no choice of his own.  The usual way to close the reference-dependent model is to use the agent's expectations to set the reference point, as in \cite{kHoszegi2006model}.  For any personal equilibrium in which Bob buys the ticket for \$100 in \Cref{ex:baseline}, his reference point is the same as in \Cref{ex:nochoice}. Thus, expectations-based reference dependence (EBRD) is constrained to predict the same time-$2$ behavior regardless of whether Bob made a choice to buy the ticket. By contrast, \textit{ex post} rationalization predicts that Bob's time-$2$ decision depends on whether he could have chosen otherwise at time $1$.

Recent work finds that sunk-cost effects depend on whether the decision-maker was responsible for incurring those costs, as predicted by \textit{ex post} rationalization. \cite{martens2021escalating} study the effect of sunk costs on follow-up investment decisions, in a laboratory experiment. They find that sunk-cost effects increase substantially when subjects are responsible for the initial investment decision. Similarly, \cite{guenzel2021too} studies corporate acquisitions and finds that exogenous acquisition cost shocks (occurring {after} the acquisition decision) decrease the company's willingness to divest, but this effect is much reduced if the CEO who led the acquisition steps down.

\section{Comparative statics for complements}\label{sec:comp_stat}

When do rationalizers react to sunk costs? How does the behavior of rationalizers differ from the classical prediction? To answer these questions, we derive comparative statics for decision problems in which time-$1$ actions, time-$2$ actions, and rationales are complements.

We now assume that the rationales $\mathcal{V}$ have the form $\{w(a_1,a_2,\theta,s) : \theta \in \Theta\}$, for some set $\Theta$ and some function $w$.\footnote{On its own, this assumption is without loss of generality.}  We use $\theta^*$ to denote the parameter value that corresponds to material utility, so $u(a_1,a_2,s) = w(a_1,a_2,\theta^*,s)$. To ease notation, we suppress the dependence of $A_2$ on $s$.\footnote{To extend these results, one replaces the requirement that $A_2(a_1)$ is monotone non-decreasing with the requirement that for each $s$, $A_2(a_1,s)$ is monotone non-decreasing in $a_1$. A similar extension holds for \eqref{eq:strong_nondec}.} Hence, the agent facing some decision problem $D$, having chosen $a_1$ from menu $A_1$ and observed state $s$, chooses $a_2 \in A_2(a_1)$ and $\theta \in \Theta$ to maximize
\begin{equation}
    U_D(a_2,\theta \mid a_1, s) \equiv 
    (1 - \gamma) w(a_1,a_2,\theta^*,s) + \gamma \left[w(a_1,a_2,\theta,s) - \max_{\substack{\hat{a}_1 \in A_1 \\ \hat{a}_2 \in A_2(\hat{a}_1)}} w(\hat{a}_1,\hat{a}_2,\theta,s) \right].
\end{equation}
We assume that $w$ and $U_D$ have non-empty maxima with respect to $(a_1,a_2,\theta)$, and similarly for subsets of these arguments.

We will assume that the choice variables are complements---the marginal return of raising one variable is non-decreasing in the other variables. All our results cover the case of $\mathcal{A}_1$, $\mathcal{A}_2$, and $\Theta$ totally ordered, and $w$ supermodular in $(a_1,a_2,\theta)$. If additionally $w$ is differentiable, then this reduces to the requirement that the cross partial derivatives are all non-negative. However, we state our results under weaker order-theoretic assumptions to expand their scope.

We now state some standard definitions; for more detail see \cite{milgrom1994monotone}. Suppose $X$ and $Y$ are partially ordered sets.  Function $f:X \times Y \times S \rightarrow \mathbb{R}$ has \textbf{increasing differences} between $x$ and $y$ if for all $\tilde{x} \leq \tilde{x}'$, all $\tilde{y} \leq \tilde{y}'$, and all $s$, we have
\begin{equation}
    f(\tilde{x},\tilde{y}',s) - f(\tilde{x},\tilde{y},s) \leq f(\tilde{x}',\tilde{y}',s) - f(\tilde{x}',\tilde{y},s).
\end{equation}
Suppose $X$ is a lattice and $Y$ is an arbitrary set. Function $f:X \times Y \rightarrow \mathbb{R}$ is \textbf{supermodular} in $x$ if for all $\tilde{x}$, all $\tilde{x}'$, and all $y$, we have
\begin{equation}
    f(\tilde{x},y) + f(\tilde{x}',y) \leq f(\tilde{x} \wedge \tilde{x}',y) + f(\tilde{x} \vee \tilde{x}',y).
\end{equation}
Given two lattices $X$ and $Y$, we order $X \times Y$ according to the component-wise order. Given any lattice, we order subsets $X$ and $Y$ with the \textbf{strong set order}, writing $X \ll Y$ if for any $x \in X$ and $y \in Y$, we have $x \wedge y \in X$ and $x \vee y \in Y$.  Given a partially ordered set $X$ and a lattice $Y$, we say that a correspondence $J: X \rightrightarrows Y$ is \textbf{monotone non-decreasing} if $x \leq x'$ implies that $J(x) \ll J(x')$.\footnote{The relation $\leq$ is reflexive, so this implies that for all $x$, $J(x)$ is a sublattice of $Y$.}

In decision problems with complements, Topkis's theorem implies that raising the first action, \textit{ceteris paribus}, raises the classical agent's second action. For rationalizers, we find that raising the first action raises both the second action and the rationale.

\begin{proposition}\label{prop:topkis_too}
Let $\mathcal{A}_1$ be a partially ordered set, let $\mathcal{A}_2$ be a lattice and let $\Theta$ be totally ordered. Suppose that $w$ has increasing differences between $a_1$ and $(a_2,\theta)$ and is supermodular in $(a_2,\theta)$.  Suppose that $A_2(a_1)$ is monotone non-decreasing. For any decision problem $D$ and any state $s$, the correspondence
\begin{equation}\label{eq:topkis_too}
     \argmax_{\substack{a_2 \in A_2({a}_1) \\ \theta \in \Theta}}  U_D(a_2,\theta \mid {a}_1, s)
\end{equation}
is monotone non-decreasing in ${a}_1$.
\end{proposition}
The proof is in \Cref{app:topkis_too}. The conclusion of \Cref{prop:topkis_too} implies that the agent's chosen actions,
\begin{equation}
    \argmax_{a_2 \in A_2({a}_1)} \left\{ \max_{\theta \in \Theta}  U_D(a_2,\theta \mid {a}_1, s) \right\},
\end{equation}
are monotone non-decreasing in ${a}_1$.

The next theorem shows that the theory yields systematic deviations from the classical benchmark. Sunk-cost effects are part of a larger class of phenomena predicted by \textit{ex post} rationalization, that involve distorting later actions upwards when past actions were \textit{ex post} too high, and distorting them downwards when past actions were \textit{ex post} too low.
\begin{theorem}\label{thm:comp_stat}
Let $\mathcal{A}_1$ be a partially ordered set, let $\mathcal{A}_2$ be a lattice and let $\Theta$ be totally ordered. Suppose that $w$ has increasing differences between $a_1$ and $(a_2,\theta)$ and is supermodular in $(a_2,\theta)$. Suppose that $A_2(a_1)$ is monotone non-decreasing.  If the agent's time-$1$ choice was \textit{ex post} weakly higher than optimal, \textit{i.e.} $\bar{a}_1 \geq a^*_1$ for
\begin{equation}
    a^*_1 \in  \argmax_{a_1 \in A_1} \left\{ \max_{a_2 \in A_2(a_1)} w(a_1,a_2,\theta^*,s) \right\},
\end{equation}
then the agent's time-$2$ choice is weakly higher than materially optimal, \textit{i.e.}
\begin{equation}\label{eq:comp_stat_choice}
   \argmax_{a_2 \in A_2(\bar{a}_1)} \left\{ \max_{\theta \in \Theta}  U_D(a_2,\theta \mid \bar{a}_1, s) \right\} \gg  \argmax_{a_2 \in A_2(\bar{a}_1)} w(\bar{a}_1,a_2,\theta^*,s),
\end{equation}
and for any selection $\bar{a}_2$ from the left-hand side of \eqref{eq:comp_stat_choice}, there exists $\bar{\theta} \geq \theta^*$ such that
\begin{equation}\label{eq:implied_rationale}
    \bar{a}_2 \in \argmax_{a_2 \in A_2(\bar{a}_1)} U_D(a_2,\bar{\theta} \mid \bar{a}_1, s).
\end{equation}
Symmetrically, if the agent's time-$1$ choice was \textit{ex post} weakly lower than optimal, then the agent's time-$2$ choice is weakly lower than materially optimal, and there exists $\bar{\theta} \leq \theta^*$ such that $\bar{a}_2 \in \argmax_{a_2 \in A_2(\bar{a}_1)} U_D(a_2,\bar{\theta} \mid \bar{a}_1, s)$.
\end{theorem}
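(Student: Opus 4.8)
The plan is to read \Cref{thm:comp_stat} as a monotone-comparative-statics statement whose engine is supermodularity of total utility in $(a_2,\theta)$ and whose substantive content is that an agent who took an \emph{ex post} too-high first action never rationalizes with a parameter below $\theta^*$. I treat the case $\bar a_1\geq a_1^*$; the case $\bar a_1\leq a_1^*$ is symmetric, replacing joins by meets and, in Step 2, $\Theta^+$ by $\Theta^-:=\{\theta\in\Theta:\theta\leq\theta^*\}$.

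\emph{Step 1 (the engine).} Write $M(\theta):=\max_{\hat a_1\in A_1,\ \hat a_2\in A_2(\hat a_1)}w(\hat a_1,\hat a_2,\theta,s)$, so that $U_D(a_2,\theta\mid\bar a_1,s)=(1-\gamma)w(\bar a_1,a_2,\theta^*,s)+\gamma w(\bar a_1,a_2,\theta,s)-\gamma M(\theta)$. The first summand depends on $a_2$ alone, the third on $\theta$ alone, and the middle one is $\gamma$ times a function supermodular in $(a_2,\theta)$; hence $U_D(\cdot,\cdot\mid\bar a_1,s)$ is supermodular on the lattice $A_2(\bar a_1)\times\Theta$, so $\Sigma:=\argmax_{(a_2,\theta)\in A_2(\bar a_1)\times\Theta}U_D(a_2,\theta\mid\bar a_1,s)$ is a sublattice and $\argmax_{a_2}\max_\theta U_D$ is its $a_2$-projection. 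Setting $\theta=\theta^*$ collapses the bracket in \eqref{eq:full_rationalizer}, so $\argmax_{a_2\in A_2(\bar a_1)}U_D(a_2,\theta^*\mid\bar a_1,s)=\argmax_{a_2\in A_2(\bar a_1)}w(\bar a_1,a_2,\theta^*,s)$, the right-hand side of \eqref{eq:comp_stat_choice}. Finally, ``increasing differences between $a_1$ and $(a_2,\theta)$'' together with ``supermodular in $(a_2,\theta)$'' gives $w$ increasing differences between $(a_1,a_2)$ jointly and $\theta$, and (setting the $\theta$-arguments equal) increasing differences between $a_1$ and $a_2$; combined with $A_2(\cdot)$ monotone non-decreasing, Topkis's theorem makes $\argmax_{a_2\in A_2(a_1)}w(a_1,a_2,\theta,s)$ monotone non-decreasing in $a_1$.

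\emph{Step 2 (the crux, and the main obstacle).} Let $\Theta^+:=\{\theta\in\Theta:\theta\geq\theta^*\}$. I claim that (i) the $a_2$-projection of $\Sigma$ dominates the material set $\argmax_{a_2\in A_2(\bar a_1)}w(\bar a_1,a_2,\theta^*,s)$ in the strong set order, and (ii) every $\bar a_2$ in that projection is paired in $\Sigma$ with some $\bar\theta\geq\theta^*$. The mechanism is that $\bar a_1\geq a_1^*$ was \emph{ex post} too high, so raising the rationale weakly reduces the regret it generates: for $\theta<\theta^*$, using $M(\cdot)\geq V(a_1^*,\cdot)$ where $V(a_1,\theta):=\max_{a_2\in A_2(a_1)}w(a_1,a_2,\theta,s)$, and $M(\theta^*)=V(a_1^*,\theta^*)$ because $a_1^*$ maximizes $V(\cdot,\theta^*)$, one gets $M(\theta^*)-M(\theta)\leq w(a_1^*,a_2^\sharp,\theta^*,s)-w(a_1^*,a_2^\sharp,\theta,s)$ with $a_2^\sharp$ the least maximizer of $w(a_1^*,\cdot,\theta^*,s)$ over $A_2(a_1^*)$; by Step 1, $a_2^\sharp$ lies weakly below every materially optimal second action at $\bar a_1$. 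Feeding this bound into the increasing differences of $w$ between $(a_1,a_2)$ and $\theta$, evaluated at the comparable pair $(\bar a_1,a_2\vee m)\geq(a_1^*,a_2^\sharp)$ for $m$ materially optimal at $\bar a_1$, and using supermodularity of $w(\bar a_1,\cdot,\theta^*,s)$ in $a_2$ with optimality of $m$, shows that $(a_2\vee m,\theta^*)$ weakly dominates $(a_2,\theta)$ in total utility whenever $\theta<\theta^*$. The obstacle is that claims (i) and (ii) interlock --- ``the rationale is not lowered'' and ``the action is not lowered'' are each cleanest to derive given the other, since complementarity makes them reinforce one another --- so one must establish them jointly; I would do so by working with the sublattice $\Sigma$, pinning its $\theta$-slices from below via the materially optimal actions $m$ and $a_2^\sharp$, combining the domination statement with the strong set order, and invoking the fact that partial maximization of a supermodular function over an ascending family of constraint sets yields an ascending family of maximizers (applied to the slices $\{\theta^*\}\ll\Theta^+$).

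\emph{Step 3 (conclusion).} Granting Step 2(i): since $U_D$ is supermodular in $(a_2,\theta)$ and $\{\theta^*\}\ll\Theta^+$, the partial-maximization fact gives $\argmax_{a_2}U_D(a_2,\theta^*\mid\bar a_1,s)\ll\argmax_{a_2}\max_{\theta\in\Theta^+}U_D(a_2,\theta\mid\bar a_1,s)$; by Step 2 the right-hand set coincides with $\argmax_{a_2}\max_{\theta\in\Theta}U_D$, which is the left-hand side of \eqref{eq:comp_stat_choice}, so \eqref{eq:comp_stat_choice} holds. For any selection $\bar a_2$ from that set, $\bar a_2$ attains the joint maximum of $U_D$ over $A_2(\bar a_1)\times\Theta$, so $\argmax_{\theta\in\Theta}U_D(\bar a_2,\theta\mid\bar a_1,s)=\argmax_{\theta\in\Theta}\big(w(\bar a_1,\bar a_2,\theta,s)-M(\theta)\big)$ consists of parameters each attaining that joint maximum together with $\bar a_2$; by Step 2(ii) this set meets $\Theta^+$, and any $\bar\theta$ in the intersection satisfies \eqref{eq:implied_rationale}. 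The case $\bar a_1\leq a_1^*$ follows by the order-reversed argument.
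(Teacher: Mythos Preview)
Your overall architecture matches the paper's: establish supermodularity of $U_D$ in $(a_2,\theta)$, argue that when $\bar a_1\geq a_1^*$ the optimal rationale can be taken $\geq\theta^*$, and then compare $\{\theta^*\}\ll\Theta^+$ via Topkis. Steps~1 and~3 are fine, and you are candid that Step~2 is the crux and not fully carried out.

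The concrete gap is in the inequality chain you sketch in Step~2. Your route is: bound $M(\theta^*)-M(\theta)\le w(a_1^*,a_2^\sharp,\theta^*)-w(a_1^*,a_2^\sharp,\theta)$, then apply increasing differences at $(\bar a_1,a_2\vee m)\geq(a_1^*,a_2^\sharp)$, plus supermodularity in $a_2$ with optimality of $m$. If one writes out $U_D(a_2\vee m,\theta^*)-U_D(a_2,\theta)$, your ingredients control the term $[w(\bar a_1,a_2\vee m,\theta^*)-w(\bar a_1,a_2\vee m,\theta)]-[M(\theta^*)-M(\theta)]\ge 0$ and the term $w(\bar a_1,a_2\vee m,\theta^*)-w(\bar a_1,a_2,\theta^*)\ge 0$, but they leave the residual $w(\bar a_1,a_2\vee m,\theta)-w(\bar a_1,a_2,\theta)$ unsigned: $m$ is optimal at $\theta^*$, not at $\theta$, so supermodularity-plus-optimality does not apply. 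A second, minor issue is that you invoke a ``least maximizer'' $a_2^\sharp$; nothing in the hypotheses guarantees such an element exists in a general lattice.

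The paper closes this gap by a different chain that interleaves \emph{two} maximizers, one on each side. With $a_2^{**}\in\argmax_{a_2\in A_2(a_1^*)}w(a_1^*,a_2,\theta^*)$ and $\tilde a_2\in\argmax_{a_2\in A_2(\bar a_1)}w(\bar a_1,a_2,\bar\theta)$, it first \emph{lower}-bounds $V(a_1^*,\bar\theta)$ by $w(a_1^*,a_2^{**}\wedge\tilde a_2,\bar\theta)$, then uses increasing differences between $a_1$ and $(a_2,\theta)$ at the comparable pair $(a_2^{**},\theta^*)\geq(a_2^{**}\wedge\tilde a_2,\bar\theta)$, then uses supermodularity in $(a_2,\theta)$ to trade the meet for the join, and finally uses optimality of $\tilde a_2$ at $(\bar a_1,\bar\theta)$. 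This yields directly $V(\bar a_1,\bar\theta)-M(\bar\theta)\le V(\bar a_1,\theta^*)-M(\theta^*)$, from which both of your claims~(i) and~(ii) fall out: any $(\bar a_2,\bar\theta)\in\Sigma$ with $\bar\theta<\theta^*$ must have $\bar a_2$ materially optimal (since $\gamma<1$), and a Topkis argument \emph{across} $a_1^*\le\bar a_1$ at the level of $U_D$ then places $\theta^*$ itself in $\bar\Theta$. The device you were missing is inserting $a_2^{**}\wedge\tilde a_2$ and $a_2^{**}\vee\tilde a_2$ so that increasing differences and supermodularity act on the \emph{same} pair of second actions; your one-maximizer bound via $a_2^\sharp$ cannot do this.
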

The proof is in \Cref{app:comp_stat}.

\Cref{thm:comp_stat} is a tool to study the choices of \textit{ex post} rationalizers, relying only on standard conditions for monotone comparative statics. Its predictions extend beyond sunk-cost effects; distortions can occur even when each rationale regards the time-$1$ decision and the time-$2$ decision as additively separable, as we illustrate in \Cref{sec:sametwice}.

In addition to predictions about choices, \Cref{thm:comp_stat} makes predictions about rationales. When $a_1$ was \textit{ex post} too high, the agent behaves according to a rationale that is distorted upwards, and when $a_1$ was \textit{ex post} too low, the rationale is distorted downwards. In some experiments, this prediction can be directly tested by asking about the subject's attitudes or beliefs, as is done in the literature on effort justification \citep{aronson1959effect, gerard1966effects}.

Now we consider changing the unchosen alternatives at $t=1$. Observe that for a classical agent, unchosen alternatives from the first menu have no effect on second-period choice. Under complements, the next theorem predicts that increasing the first menu (in the strong set order), while leaving the first choice unchanged, \textit{decreases} the rationalizer's second-period choice as well as her rationale.

\begin{theorem}\label{thm:comp_stat_A1}
Let $\mathcal{A}_1$ and $\mathcal{A}_2$ be lattices and let $\Theta$ be totally ordered. Suppose that $w$ is supermodular in $(a_1,a_2,\theta)$.   Suppose that the correspondence $A_2$ satisfies
    \begin{equation}\label{eq:strong_nondec}
        a_2 \in A_2(a_1) \text{ and } a'_2 \in A_2(a'_1) \Longrightarrow a_2 \wedge a'_2 \in A_2(a_1 \wedge a'_1)  \text{ and } a_2 \vee a'_2 \in A_2(a_1 \vee a'_1).
    \end{equation}
    Take any $A_1, A'_1 \subseteq \mathcal{A}_1$ such that $A_1 \ll A'_1$.  Let $D$ and {\color{red} $D'$} denote the decision problems with $A_1$ and $A'_1$ respectively.  For any $\bar{a}_1 \in A_1 \cap A'_1$ and any $s$ we have
    \begin{equation}\label{eq:compstat_A1}
    \argmax_{\substack{a_2 \in A_2(\bar{a}_1) \\ \theta \in \Theta}} U_{D}(a_2,\theta \mid \bar{a}_1,s) \gg \argmax_{\substack{a_2 \in A_2(\bar{a}_1) \\ \theta \in \Theta}} U_{\color{red} D'}(a_2,\theta \mid \bar{a}_1,s).
    \end{equation}
\end{theorem}
The proof is in \Cref{app:comp_stat_A1}. 

An intuition for \Cref{thm:comp_stat_A1} is that when we add high actions to the first menu, the agent has to rationalize forgoing those actions, so he lowers $\theta$ and hence lowers $a_2$.  Similarly, when we remove low actions from the first menu, the agent no longer has to rationalize forgoing those actions, so he lowers $\theta$ and hence lowers $a_2$. 

Condition \eqref{eq:strong_nondec} is stronger that $A_2$ monotone non-decreasing. It is implied by $\mathcal{A}_1$ totally ordered and $A_2$ monotone non-decreasing. Alternatively, it is implied by constant $A_2$.

Note that \eqref{eq:compstat_A1} implies that the rationalizer's chosen actions decrease, \textit{i.e.}
    \begin{equation}
         \argmax_{a_2 \in A_2(\bar{a}_1)} \left\{ \max_{\theta \in \Theta}  U_D(a_2,\theta \mid \bar{a}_1, s) \right\} \gg \argmax_{a_2 \in A_2(\bar{a}_1)} \left\{ \max_{\theta \in \Theta}  U_{\color{red} D'}(a_2,\theta \mid \bar{a}_1, s) \right\}. 
    \end{equation}
This provides a testable prediction of the theory that does not require us to separately identify the adopted rationales or the materially optimal benchmark. 

\subsection{Applications of results}\label{sec:applications}

We examine some natural decision problems that satisfy the assumptions of \Cref{prop:topkis_too}, \Cref{thm:comp_stat}, and \Cref{thm:comp_stat_A1}.

\subsubsection{The sunk-cost effect for risky investments}
We return to the investment decision of \Cref{sec:sunk_costs}. The agent chooses investment levels $a_1, a_2 \in [0, 1]$.  The project succeeds with probability $\phi(a_1,a_2)$ for continuous supermodular $\phi$. Conditional on success, the project yields an expected profit of $\theta$; the agent's rationales consist in manipulating his beliefs about profit. He pays cost $s a_1 + a_2$, where $s$ is a cost shock for first-period investment.  Hence the agent's payoff is
\begin{equation}
    w(a_1,a_2,\theta) = \theta \phi(a_1,a_2) - s a_1 - a_2.
\end{equation}
The agent chooses $a_1$ before learning $s$, so the materially optimal choice of $a_2$ does not depend on the realized $s$.  \Cref{thm:comp_stat} implies that when $s$ has a high enough realization, so that $a_1$ was \textit{ex post} too high, the rationalizer's choice of $a_2$ is distorted upwards compared to the classical benchmark.  Hence, the time-$2$ investment is higher than the material optimum for projects that turn out to be unexpectedly costly.

If the agent is a publicly listed company, the forecasted profits $\theta$ may also be directly measurable, since firms are required to justify their decisions to shareholders. In that case, \Cref{thm:comp_stat} additionally predicts that high time-$1$ cost shocks cause higher profit forecasts.

\subsubsection{Encountering the same problem twice}\label{sec:sametwice}

The agent faces a decision problem, chooses an action, then learns the state, and then faces the same problem again.  That is, $A_1 = A_2 = A$ and 
\begin{equation}
    w(a_1,a_2,\theta,s) = \phi(a_1,\theta,s) + \phi(a_2,\theta,s),
\end{equation}
for some function $\phi: A \times \Theta \times S \rightarrow \mathbb{R}$ that is supermodular in $(a,\theta)$.  Let $a^*(s)$ be an \textit{ex post} optimal choice in state $s$, \textit{i.e.} $a^*(s) \in \argmax_a \phi(a,\theta^*,s)$.

Material utility is time-separable, so upon learning the state, a classical agent's time-2 choice is $\argmax_a \phi(a,\theta^*,s)$; his second-period choice does not depend on his first-period choice.  By contrast, \Cref{thm:comp_stat} implies that a rationalizer chooses $a_2$ to maximize $(1-\gamma)\phi(a_2,\theta^*,s) + \gamma \phi(a_2,\bar{\theta},s)$, with $\bar{\theta} \geq \theta^*$ when $a_1 \geq a^*(s)$ and $\bar{\theta} \leq  \theta^*$ when $a_1 \leq a^*(s)$. Attempting to rationalize the earlier decision creates a link between otherwise-separate decisions, pulling the rationalizer's second-period choice away from $a^*(s)$ in the direction of his initial choice $a_1$.  Thus, the rationalizer's choice is `stickier' than a classical agent's choice, responding less to learning about the state. This prediction is straightforward to test in laboratory experiments, because a rationalizer who learns $s$ and then faces the problem just once behaves exactly as a classical agent.

\subsubsection{Belief elicitation}\label{sec:beliefs}

In many laboratory experiments, subjects provide point estimates of some quantity, then learn some information, and finally report updated estimates. They are paid for one decision drawn at random, so they encounter the same problem twice, in the sense of \Cref{sec:sametwice}.\footnote{\cite{azrieli2018incentives} study the merits of paying one decision drawn at random.}

We apply the special case of the model with rationales as beliefs, as in \Cref{sec:rationales_as_beliefs}. $Y$ is a real-valued random variable with countable support.\footnote{A parallel construction works if $Y$ has support in some interval and each available rationale is an atomless distribution with strictly positive density.}  The subject makes an incentivized report of $\mathbb{E}[Y]$, then observes a signal $X$ with known conditional distribution $g(x \mid y)$, then makes an incentivized report of $\mathbb{E}[ Y \mid X = x]$.  
The available rationales are priors on $Y$; these are a set of probability mass functions indexed by $\theta$, denoted $(\pi_{\theta})_{\theta \in \Theta}$.  We assume that this set is totally ordered by the monotone likelihood ratio property (MLRP) \citep{milgrom1981good}, that is, for any $\theta > \theta'$ and any $y > y'$
\begin{equation}
    \pi_\theta(y) \pi_{\theta'}(y') >  \pi_{\theta'}(y) \pi_{\theta}(y').
\end{equation}
This restriction is without loss of generality if $Y$ is a Bernoulli random variable, \textit{i.e.} when the agent is being asked to report the probability of some event.  We assume that each $\pi_{\theta}$ has full support, so that no rationale is ruled out by some signal realization.

Given prior $\pi_\theta$ and signal realization $x$, we denote the posterior probability mass function $\pi_{\theta}(y \mid X = x)$.  The agent reports $a_1$, then observes the signal realization, then reports $a_2$. For each report $a_t$, the agent faces quadratic loss (conditional on the signal realization), resulting in the payoff
\begin{equation}
    \phi(a_t,\theta,x) = - \sum_y (a_t - y)^2 \pi_{\theta}(y \mid X = x).
\end{equation}
This captures the interim expected utility of a risk-neutral agent facing a quadratic scoring rule.  It also captures the interim expected utility of an agent with general risk preferences facing an appropriate binarized scoring rule \citep{hossain2013binarized}.

Given the same signal realization, MLRP-ordered priors induce posteriors that are ordered by first-order stochastic dominance \citep{milgrom1981good,klemens2007ordered}.  Thus, if $\theta > \theta'$ then $\pi_{\theta}(y \mid X =x)$ first-order stochastically dominates $\pi_{\theta'}(y \mid X =x)$.  It follows that $\phi$ is supermodular in $(a_t,\theta)$.

Our analysis in \Cref{sec:sametwice} implies that when $a_1 \ge \mathbb{E}[Y \mid X = x]$, then the agent's reported posterior beliefs are distorted upwards, $a_2 \geq \mathbb{E}[Y \mid X = x]$.\footnote{Also in that case, when $a_1 \in \argmax_{a \in A_1} \left\{ \phi (a,\theta ', x) \right\}$ for some $\theta '$, then $a_2 \le a_1$. (If $a_2 > a_1$, then $\phi (a_1,\theta^*, x)>\phi (a_2,\theta^*, x)$, in which case by reducing $a_2$ to $a_1$ and adopting the rationale $\theta'$, the agent can achieve higher material utility and zero rationalization utility, which contradicts the optimality of $a_2$.) This assumption is satisfied whenever the rationales include all full-support priors on $Y$.} Such preference for consistency in belief elicitation is folk wisdom amongst experimenters. \cite{falk2018} find that laboratory subjects report beliefs that are distorted towards their prior reports.

\subsubsection{Consumption under two-part tariffs}

Consider a consumer facing two-part tariffs, each consisting of a lump-sum payment $L$ and a per-unit price $p$ (both non-negative), as in \cite{thaler1980toward}.  The consumer faces a finite list of such tariffs, denoted $(L_k, p_k)_{k \in K}$.  Without loss of generality, we assume that the list contains no dominated tariffs and no duplicates.  We order the tariffs so that $L_1 < L_2 < \cdots$ and $p_1 > p_2 > \cdots$.

The timing is as follows:
\begin{enumerate}
    \item The consumer chooses a tariff $(L_k, p_k)$ from the list or declines.
    \item The consumer learns the taste shock $s \in [0,1]$.
    \item If the consumer chose a tariff, the consumer chooses quantity $q \in [0,1]$.
\end{enumerate}
The set of rationales is indexed by $\Theta = [0,1]$. The consumer's utility from tariff $(L,p)$ under rationale $\theta$ is
$$u(q,s,\theta,L,p) = s\psi(q,\theta) - pq -L$$
where $\psi$ is continuous and supermodular in both arguments, and $\psi(0,\theta) = 0$ for all $\theta$.  The utility from declining is $0$.

Consider a list comprised of tariffs $(L_1,p_1)$ and $(L_2,p_2)$, with $L_1 < L_2$ and $p_1 > p_2$.  By \Cref{prop:topkis_too}, changing the chosen tariff from $(L_1,p_1)$ to $(L_2,p_2)$ weakly raises the quantity consumed, for every realization of the taste shock. 

Observe that for a classical consumer, once a tariff has been chosen, the lump-sum $L$ is sunk and has no effect on the quantity demanded. By contrast, \cite{thaler1980toward} proposes that when a consumer responds to sunk costs, raising the lump-sum payment can increase the quantity demanded.  We formalize this observation in the context of our model.

Take any per-unit price $p$ and taste shock $s$. Take any $L<L'$ such that
\begin{equation}
    \max_{q \in [0,1]} \left\{ s\psi(q,\theta^*) - pq - L \right\} \geq 0 > \max_{q \in [0,1]} \left\{ s\psi(q,\theta^*) - pq - L'\right\}.
\end{equation}
If the consumer was offered only tariff $(L,p)$ and accepted, then in state $s$ this was not an \textit{ex post} mistake, so he demands the materially optimal quantity. By contrast, if he was offered only tariff $(L',p)$ and accepted, then his first choice was \textit{ex post} too high. \Cref{thm:comp_stat} implies that his demand is weakly higher after $(L',p)$ than after $(L,p)$.  Moreover, it is strictly higher for various simple functional forms, such as $\psi(q,\theta) = \theta\sqrt{q}$ with interior $\theta^*$. Thus, high enough lump-sum payments can raise demand compared to the material optimum, provided that they do not cause the consumer to decline the tariff.

Finally, let us take compare two lists, one of which is produced by truncating the other from above.  That is, we have $(L_k,p_k)_{k=1}^K$ and $(L_k,p_k)_{k=1}^{K'}$, for $K < K'$. Suppose we fix the agent's chosen tariff at $(L_j,p_j)$ for $j \leq K$ and switch from the full list $(L_k,p_k)_{k=1}^{K'}$ to the truncated list $(L_k,p_k)_{k=1}^K$. By \Cref{thm:comp_stat_A1}, this change weakly increases demand in every state.  This suggests that a firm selling to rationalizing consumers may find it beneficial to withdraw options with high lump-sums and low marginal prices, especially if those options are seldom chosen.
\section{Representation and identification}\label{sec:id}

Psychologists tend to measure rationalization by asking people directly about their attitudes and beliefs, as in the work surveyed in \Cref{sec:psych_evidence}. The direct approach is useful, but not always feasible. In some situations, a person's rationales may be too complicated to fully articulate. Moreover, the rationale that someone offers to a researcher may be different from the rationale they offer to themselves.

In this section, we instead study a revealed-preference approach to rationalization. If we rely only on behavior, what can we infer about the model parameters? Could it be that two different sets of rationales, $\mathcal{V}$ and $\mathcal{V}'$, nonetheless yield identical choices?

We find that the rationalizer's choice behavior is enough to identify the model. The primitives $u$ and $\mathcal{V}$ are essentially unique, up to a positive affine transformation.

We proceed by making only light assumptions about the permissible rationales. Thus, the identification result applies generally, and does not rely on quasi-linearity in money or on the complementarities assumed in \Cref{sec:comp_stat}. The cost of generality is that our identification procedure is quite abstract. The theorem implies that questions about model primitives can always be settled by choice data, but the best practical approach to elicit such data may vary from case to case. \Cref{app:quasiliner_ID} contains a simple example of identification under stronger assumptions.

\subsection{Decision problems with menus of lotteries}

Up to this point, we have described decision problems in terms of choosing a first action $a_1$ and then a second action $a_2$. However, we can equivalently conceive of the agent first choosing between menus of `outcomes', and then as selecting an outcome from the chosen menu.\footnote{This is structurally similar to the decision problems of \cite{gul2001temptation}, though the theories are different. In \cite{gul2001temptation}, the agent is tempted at time $2$, which affects the time-$1$ choice. Under the present theory, the agent desires to rationalize the time-$1$ choice, which affects the time-$2$ choice.} To illustrate, in \Cref{ex:baseline}, the action \underline{buy a ticket} is equivalent to a menu with two outcomes, \underline{attend the game \& pay \$100} and \underline{stay home \& pay \$100}. The action \underline{don't buy a ticket} is equivalent to the singleton menu consisting of the outcome $\text{\underline{stay home \& pay nothing}}$.\footnote{If the correspondence $A_2$ depends on the state $s$, we can represent this with state-contingent menus of outcomes.}

For the purposes of identification, it saves notation to work with outcomes instead of actions. Of course, if we map action sequences to outcomes and identify the model defined over outcomes, then we also identify the model defined over action sequences.

We start with a finite set of outcomes $Z$.  The lotteries over outcomes are denoted $\Delta(Z)$. For any set $B$, let $\mathcal{K}(B)$ denote the collection of nonempty subsets of $B$. Let $\mathcal{K}_f(B)$ denote the collection of finite nonempty subsets of $B$.

The agent faces decision problems of this form:
\begin{enumerate}
    \item At $t=1$, the agent selects a menu $M$ from a finite collection of menus $\mathcal{M} \subset \mathcal{K}_f(\Delta(Z))$.
    \item The agent learns the state $s$.
    \item At $t = 2$, the agent chooses a lottery $\la$ from the selected menu $M$.
\end{enumerate}

We use $\mathcal{U}$ to denote the set of all functions from $Z$ to $\mathbb{R}$, and extend these to $\Delta(Z)$ by taking expectations.

We now focus on the parameters that yield departures from the classical model. To exhibit rationalizing behavior, it must be that $\gamma \neq 0$, that the rationales include at least two distinct preferences, and that there is no `stoic' rationale that is a constant function. In the definition that follows, we treat each element of $\mathcal{U}$ as a point in $\mathbb{R}^{Z}$. 

\begin{definition}[Rationalization Model]
$(\gamma, u^s, \mathcal{V}^s) \in [0, 1] \times \mathcal{U} \times \mathcal{K}(\mathcal{U})$ is a rationalization model if
\begin{enumerate}
    \item $\gamma \in (0, 1)$;
    \item $\mathcal{V}^s$ is compact and convex, contains representations of at least two distinct preferences, and does not contain a constant utility;
    \item and $u^s \in \mathcal{V}^s$.
\end{enumerate}
\end{definition}

We take as data the agent's time-$2$ choice correspondence in each state. Having selected menu $M$ from collection $\mathcal{M}$ and learned that the state is is $s$, the agent's choices from $M$ are denoted $c_2^s(M \mid \mathcal{M})$, which satisfies $c_2^s(M \mid \mathcal{M}) \subseteq M$.  Observe that the agent can, by choosing differently at time-$1$, achieve any lottery in $\bigcup \mathcal{M}$. 

\begin{definition}[Rationalization Representation]
A rationalization model $(\gamma, u^s, \mathcal{V}^s)$ is a rationalization representation for $c_2^s$ if, for all $(M, \mathcal{M})$, 
\begin{equation}\label{eq:representation}
    c_2^s(M \mid \mathcal{M}) = \argmax_{\la \in M}\left\{(1-\gamma)u^s(\la) + \gamma\max_{v^s \in \mathcal{V}^s}\left\{v^s(\la) - \max_{\hat{\la} \in \bigcup \mathcal{M}} v^s(\hat{\la}) \right\}\right\}.
\end{equation}
\end{definition}

Our present exercise relies only on time-$2$ choice behavior, so the resulting theorems apply equally to na\"ifs, sophisticates, and empathetic sophisticates. We abuse notation and use $c_2^s(M \mid M')$ to denote the agent's choice from $M$ when $\bigcup \mathcal{M} = M'$.

An important caveat is that we observe $c_2^s(M \mid \mathcal{M})$ even if choosing $M$ from $\mathcal{M}$ is sub-optimal from the \textit{ex ante} perspective. Our interpretation of this is that the agent trembles at time $1$ and chooses each menu with some small probability, and in such cases $c_2^s(M \mid \mathcal{M})$ captures how the agent chooses after a slip of the hand, as in \cite{selten1975reexamination}. We discuss how to relax this requirement in \Cref{app:onpath_id}.

\subsection{Measuring total utility with material equivalents}

We now study the agent's time-$2$ choices holding the state $s$ fixed. To reduce clutter, we suppress the superscript $s$ in various notations, including $c_2^s$, $u^s$, and $\mathcal{V}^s$.

To simplify the definitions that follow, we take material utility $u$ as given. That is, we assume the existence of $u \in \mathcal{U}$ such that $c_2(M \mid M) = \argmax_M u$ for all $M$. Material utility is identified by the usual arguments, because the agent maximizes $u$ at time $2$ whenever there are no forgone time-$1$ alternatives.\footnote{The axioms could be stated without reference to $u$, at some cost in clarity.}

A key insight is that we can sometimes measure the agent's total utility by weighing their chosen lottery against a `material equivalent', that is, another lottery that is known to yield the same total utility but has a rationalization utility of $0$.

\Cref{fig:ID_1} depicts preferences over lotteries with three outcomes. The shaded area consists of lotteries that \textit{every} rationale regards as strictly worse than $x$. For a rationalizer, this set can be inferred from choices. It turns out that the shaded area is equivalent\footnote{For a proof that these sets are equivalent, see \Cref{lem:V2worse}.} to the set of lotteries that are \textbf{worse than} $x$, in the following sense:
\begin{definition}
Lottery $r$ is \textbf{worse than} lottery $x$ if for all $\tilde{r}$ close enough to $r$, all $y \in \text{co}(x, \tilde{r})$ distinct from $x$, and all $M, M'$ with $M \subseteq M'$,
\[x \in c_2(M \mid M') \quad \Longleftrightarrow \quad x \in c_2(M \setminus \{y\} \mid M' \setminus \{y\}).\]
\end{definition}

\begin{figure}
\centering
\parbox{7cm}{
\includegraphics[width=7cm]{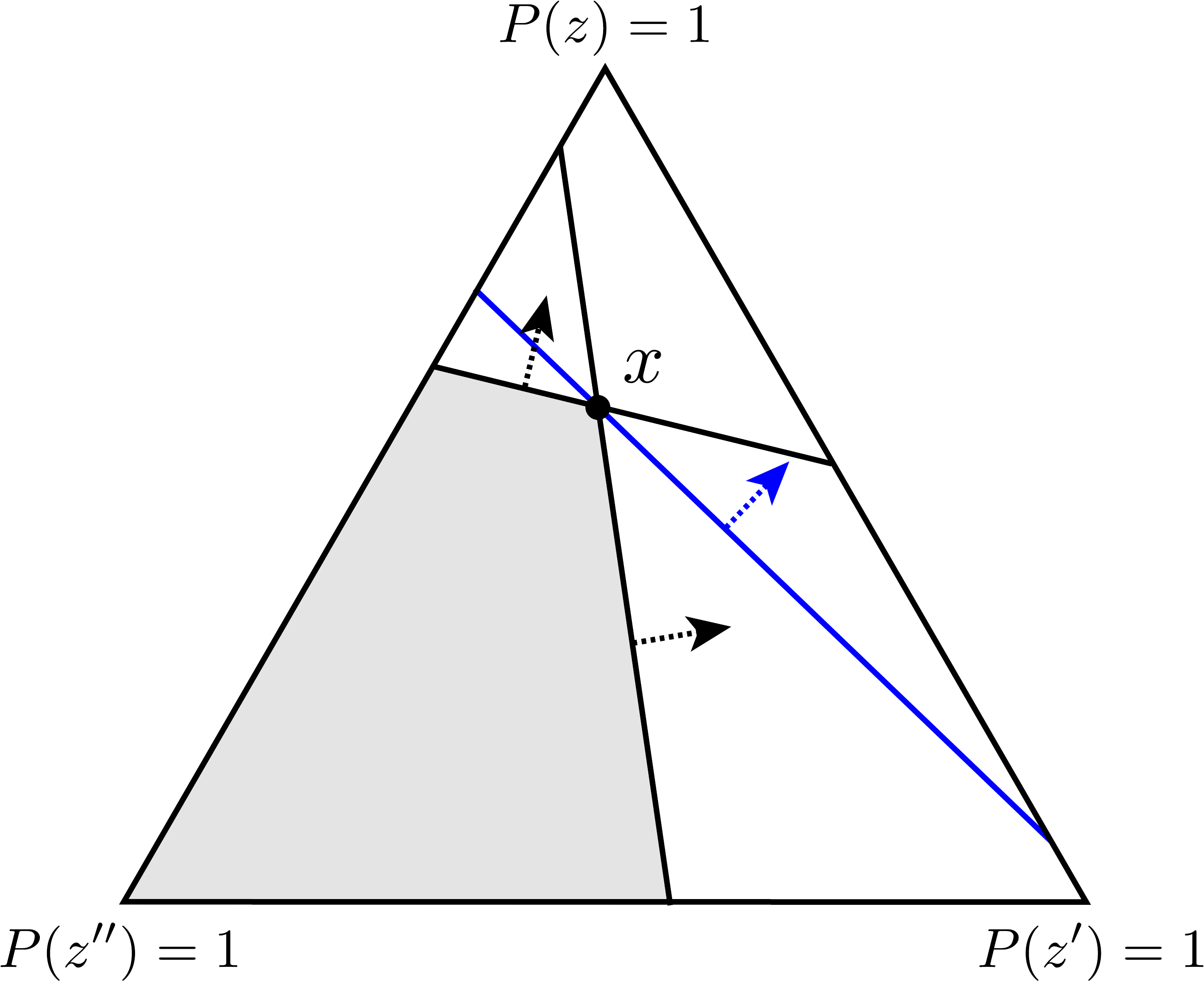}
\caption{Indifference curves for various rationales. Arrows indicate direction of increasing utility. Material indifference depicted in {\color{blue} blue}.}
\label{fig:ID_1}}
\qquad
\begin{minipage}{7cm}
\includegraphics[width=7cm]{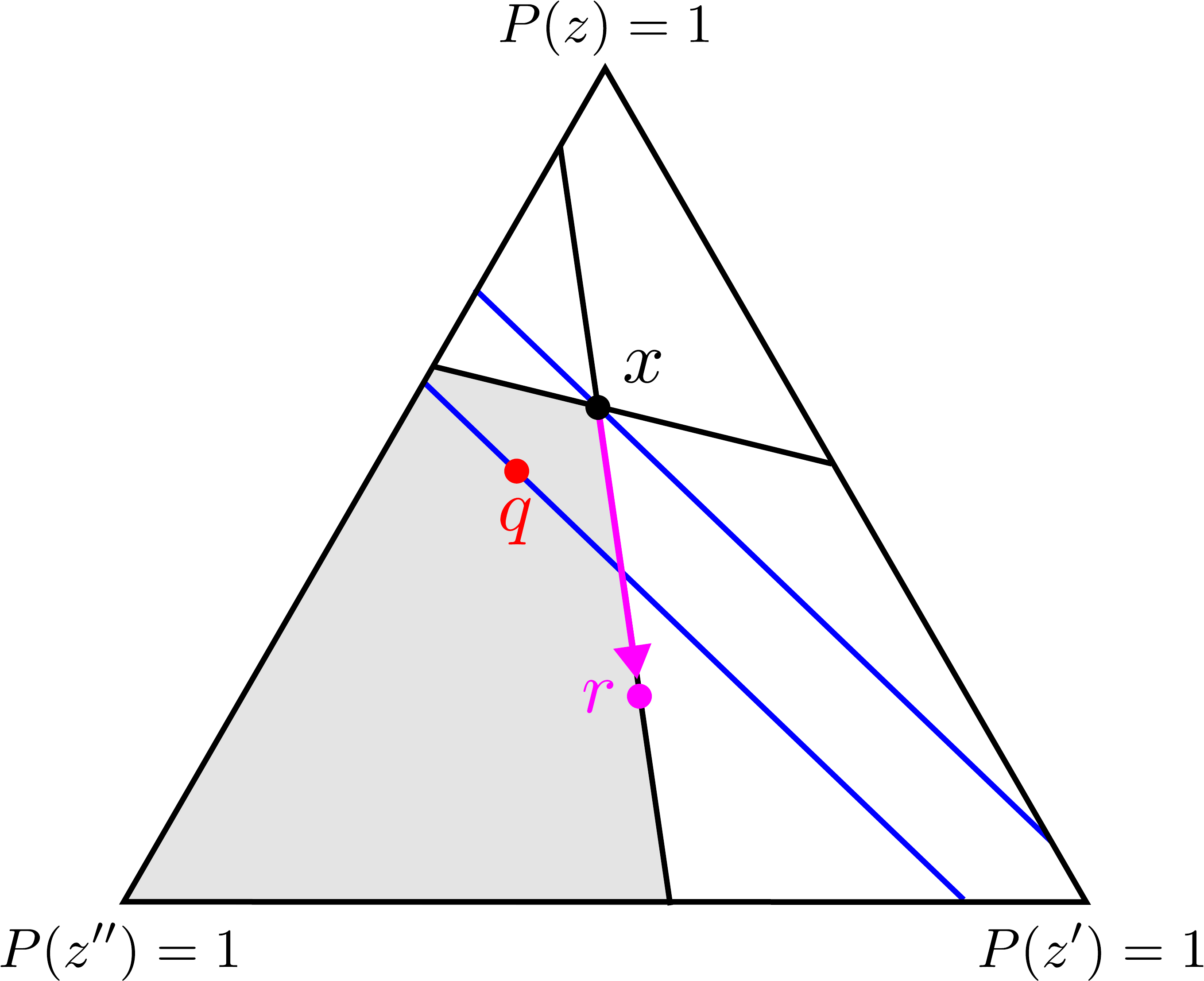}
\caption{Finding a material equivalent of $q$ given $x$. We start with $r = x$ and gradually move $r$ in the marked direction.}
\label{fig:ID_2}
\end{minipage}
\end{figure}

Suppose that the agent faced collection $\left\{ \{q\}, \{x\} \right\}$ at time $1$ and chose menu $\{q\}$. We wish to measure the total utility obtained, in state $s$, from choosing $q$ and forgoing $x$. If at least one rationale weakly prefers $q$ to $x$, then the total utility of $q$ is equal to its material utility. Suppose otherwise, so $q$ is in the shaded region of \Cref{fig:ID_2}.  We add a third lottery $r$. If $r = x$, then $c_2(q,r \mid q, r, x) = \{r\}$. We then move $r$ in the direction marked in \Cref{fig:ID_2}, reducing its material utility. Suppose we find a point such that the agent is indifferent between $q$ and $r$, so $c_2(q,r \mid q, r, x) = \{q, r\}$. Every rationale weakly prefers $x$ to $r$, so $r$ does not alter the rationalization utility of choosing $q$. At least one rationale weakly prefers $r$ to $x$, so if the agent chooses $r$ then rationalization utility is $0$. Thus, we can infer that the \textit{total} utility of $q$ after forgoing $x$ is equal to the \textit{material} utility of $r$. The next definition formalizes this idea.

\begin{definition}\label{def:ME}
The set of \textbf{material equivalents} of $q$ given $x$, denoted $\ME(q \mid x)$, is the set of lotteries that maximize $u$ over
\[\{r \in \text{int}(\Delta(Z)): r \text{ is not worse than any member of } \text{co}(q, x) \text{ and } q \in c_2(q, r \mid q, r, x)\}.\]
When $\ME(q \mid x)$ is nonempty, we define \textbf{revealed utility} $\UE(q \mid x)$ to be the material utility achieved by the members of $\ME(q \mid x)$. 
\end{definition}
For a rationalizer, if $q$ is not worse than $x$, then there exists $v \in \mathcal{V}$ such that $v(q) \geq v(x)$, so in that case revealed utility coincides with material utility, \textit{i.e.} $\UE(q \mid x) = u(q)$. If $q$ is worse than $x$, then we have $\UE(q \mid x) < u(q)$, and rationalization utility is equal to $\UE(q \mid x) - u(q)$.

\subsection{Representation result}

We study the empirical content of rationalization models, taking an axiomatic approach. When the meaning is clear, we omit universal quantifiers. 

\begin{axiom}[Linearity]\label{ax:linearity}
For all $\alpha \in (0, 1)$,
\[c_2(\alpha \{q\} + (1-\alpha)M \mid \alpha \{q\} + (1-\alpha)\mathcal{M}) = \alpha \{q\} + (1-\alpha)c_2(M \mid \mathcal{M}).\]
\end{axiom}

\begin{axiom}[Existence]\label{ax:exis}
There exists an open convex set $S \subset \Delta(Z)$ such that $\ME(q \mid x)$ is nonempty for all $q$ and $x$ in $S$.
\end{axiom}

Consider the ordered pairs of lotteries such that material equivalents exist.  We define these as
\[\compset \equiv \left\{(q, x) \in (\Delta(Z))^2: \ME(q \mid x) \neq \emptyset\right\}.\]

\begin{axiom}[Rationalization]\label{ax:rationalization}
If $M \times \text{co}(\bigcup \mathcal{M}) \subseteq \compset$, then
\[c_2(M \mid \mathcal{M}) = \argmax_{q \in M}\min_{y \in \text{co}(\bigcup \mathcal{M})}\UE(q \mid y).\]
\end{axiom}

\begin{axiom}[Monotonicity]\label{ax:monotonicity}
If $\{(q, x), (q, y)\} \subseteq \compset$, and if $q$ is worse than $x$ and $x$ is worse than $y$, then
\[\UE(q \mid y) < \UE(q \mid x).\]
\end{axiom}

\begin{axiom}[Quasiconvexity]\label{ax:quasiconvexity}
If $\{(q, \alpha x + (1-\alpha)y): \alpha \in [0, 1]\} \subseteq \compset$, then for all $\alpha \in (0, 1)$,
\[\UE(q \mid \alpha x + (1-\alpha)y) \leq \max\left\{\UE(q \mid x), \UE(q \mid y)\right\}.\]
\end{axiom}

\begin{axiom}[Continuity]\label{ax:continuity}
$\UE$ is Lipschitz continuous in both arguments.
\end{axiom}

The next result states that the axioms above fully describe the empirical content of the theory---they are necessary and sufficient for the existence of a rationalization representation. The theorem limits attention to the non-trivial cases, requiring that the agent has some material utility function $u$, but does not always maximize $u$.

\begin{theorem}\label{thm:representation}
Suppose that there exists $u \in \mathcal{U}$ such that
\[c_2(M \mid M) = \argmax_M u\]
for all $M$, but
\[c_2(M \mid \mathcal{M}) \neq \argmax_M u\]
for some $(M, \mathcal{M})$. The following are equivalent:
\begin{enumerate}
    \item $c_2$ satisfies Linearity, Existence, Rationalization, Monotonicity, Quasiconvexity, and Continuity conditional on $u$.
    \item $c_2$ has a rationalization representation $(\gamma, u, \mathcal{V})$. 
\end{enumerate}
\end{theorem}

The proof is in \Cref{app:representation}.

\Cref{thm:representation} sheds light on the bare empirical content of the theory itself. In practice, one expects to apply the model by making psychologically plausible restrictions on the set of rationales $\mathcal{V}$, as we discuss in \Cref{sec:modeling_choices} and illustrate in \Cref{sec:applications}. Nonetheless, \Cref{thm:representation} helps to distinguish between predictions that depend on such restrictions and predictions implied by the formal structure of the theory.

A key feature of the theory is that the agent's \textit{ex post} regret is assessed with respect to his chosen rationale. Given arbitrary rationales, what implications does this have for the objective function? \Cref{thm:representation} answers this explicitly: Revealed utility $\UE(q \mid x)$ captures the agent's total utility of choosing $q$ after forgoing $x$. By construction, we have $\UE(q \mid x) = u(q)$ if $q$ is not worse than $x$, and $\UE(q \mid x) < u(q)$ otherwise. Moreover, the function $\UE$ satisfies Monotonicity, Quasiconvexity, and Continuity. While $\UE$ is only defined on the set $\compset$, Linearity implies that it has a unique extension to all pairs in the simplex. 

Revealed utility $\UE$ is defined for pairwise comparisons, but these comparisons pin down choices in all decision problems. The Rationalization axiom states that when the agent faced some collection $\mathcal{M}$ at time $1$, he evaluates each time-$2$ alternative according to the worst pairwise comparison in $\text{co}(\bigcup \mathcal{M})$. An intuition for this axiom is that we can swap the order of operations when calculating rationalization utility:
\begin{align}
        \max_{v \in \mathcal{V}} \min_{y \in \bigcup \mathcal{M}} \left\{ v(q) - v(y) \right\} &= \max_{v \in \mathcal{V}} \min_{y \in \text{co}(\bigcup \mathcal{M})} \left\{ v(q) - v(y) \right\} & \text{(by linearity of $v$)}\\
        &= \min_{y \in \text{co}(\bigcup \mathcal{M})} \max_{v \in \mathcal{V}} \left\{ v(q) - v(y) \right\}. & \text{(by the minimax theorem)}
\end{align}
Thus, we can study for each pair $(q,y)$ the value $\max_{v \in \mathcal{V}} \left\{ v(q) - v(y) \right\}$, and then take the worst case over all $y \in  \text{co}(\bigcup \mathcal{M})$.

\subsection{Identification result}

In this section, we find that choice behavior identifies the model primitives.

Suppose that we have $v, v' \in \mathcal{V}$, with $v = 2v'$. The availability of $v$ makes no difference to the agent's choices---whenever she adopts rationale $v$, she could weakly increase rationalization utility by switching to $v'$. Thus, we can at best hope to identify $\mathcal{V}$ up to the non-redundant rationales, in the following sense:
\begin{definition}
$v \in \mathcal{V}$ is \textbf{redundant} if $\alpha v + \beta \in \mathcal{V}$ for $\alpha \in (0, 1)$ and $\beta \in \mathbb{R}$. Let $\underline{\mathcal{V}} \equiv \{v \in \mathcal{V}: v \text{ is non-redundant}\}$. 
\end{definition}
The choice correspondence \eqref{eq:representation} is unchanged if we restrict the rationales to $\underline{\mathcal{V}}$.

Suppose that material utility is redundant in model $(\gamma,u,\mathcal{V})$, so $\alpha u + \beta \in \mathcal{V}$ for $\alpha \in (0,1)$ and $\beta \in \mathbb{R}$. Then we can define another model $(\gamma',u',\mathcal{V})$ that yields the same choice behavior, with $\frac{\alpha(1 - \gamma')}{\gamma'} = \frac{1 - \gamma}{ \gamma}$ and $u' = \alpha u + \beta$. Consequently, we normalize material utility to be non-redundant.
\begin{definition}
Rationalization model $(\gamma,u,\mathcal{V})$ is \textbf{canonical} if $u$ is non-redundant.
\end{definition}

The next theorem states that the agent's time-$2$ choice behavior identifies the model. In particular, both material utility and the non-redundant rationales are unique up to a positive affine transformation.

\begin{theorem}\label{thm:id}
If $c_2$ has canonical rationalization representations $(\gamma, u, \mathcal{V})$ and $(\gamma', u', \mathcal{V}')$, then $\gamma = \gamma'$ and there exists $\alpha > 0$ such that
\begin{align}
&\alpha u + \beta = u' \text{ for some } \beta \in \mathbb{R},\\
&v \in \underline{\mathcal{V}} \quad \Longrightarrow \quad \alpha v + \beta \in \underline{\mathcal{V}'} \text{ for some } \beta \in \mathbb{R}.
\end{align}
\end{theorem}

The proof is in \Cref{app:id}.

\Cref{thm:id} establishes that every non-trivial claim about the model parameters can be reduced to a claim about choice behavior. That is, if two rationalization models are not identical (up to the above transformation), then there exist time-$2$ choices that distinguish them. Thus, while many researchers study rationalization by asking people about their reasons for action, in principle the rationalizer's choices can speak for themselves.

Because of its generality, the present exercise is quite abstract. Practical elicitation procedures could exploit structure that is specific to the problem at hand. For instance, one might assume that each rationale is quasi-linear in money. Alternatively, when $Z$ is a set of consumption paths, one might assume that each rationale is time-separable. Usefully, \Cref{thm:id} implies that those assumptions are themselves testable.

\section{Extensions}\label{sec:extensions}

In our model, all uncertainty is resolved before the second action.  If instead the agent observes a random variable $X$ correlated with the state, then it is natural to stipulate that she assesses rationalization utility conditional on the signal realization.  This can be accommodated by transforming the state space and utility functions, so our previous results also apply to decisions with noisy signals.  Namely, given each signal realization $x$ and each rationale $v$, we define
\begin{equation}
    \bar{v}(a_1,a_2,x) \equiv \mathbb{E}_s[v(a_1,a_2,s) \mid X = x],
\end{equation}
and transform the objective function \eqref{eq:full_rationalizer} by substituting $\bar{v}$ for $v$ and $x$ for $s$.

More subtly, the agent's first action could determine not only her payoffs, but also what signal she sees.  In this case, sophisticated rationalizers have a novel motive for information avoidance:  If the sophisticate avoids new information, then \textit{ex ante} optimal actions are also \textit{ex post} optimal, so there is no loss of material utility from rationalizing behavior.

The theory can be extended beyond two periods.  To do so, we stipulate that at each time $t$, the agent chooses some rationale $v$, and compares the expected utility of her strategy under $v$ to the expected utility of the interim-optimal strategy for $v$, with both expectations conditional on the information available at $t$.  At each $t$, she chooses a continuation strategy that maximizes a weighted sum of expected material utility and this generalized rationalization utility.  For sophisticates, we restrict the continuation strategies to be consistent with future rationalizing behavior.

Should we additionally require that the chosen rationales are stable over time?  That is, if the agent acts on Monday, Tuesday, and Wednesday, can she adopt one rationale on Tuesday and then a different rationale on Wednesday?  The theory does not forbid such inconsistent rationalizations, but it does weigh against them, because Tuesday's rationale affects Tuesday's action, which then affects Wednesday's rationale.  If we directly required Tuesday's rationale to be equal to Wednesday's rationale, then the theory's predictions would vary with the addition of `dummy periods' with singleton action menus.  In the interest of invariance, we do not impose this requirement.
\section{Conclusion}

Rationalization is not a new idea; it appears even in classical antiquity. In the Iliad, Odysseus urges the Achaean army to persist at the siege of Troy, arguing:
\begin{quotation}
\noindent
This is the ninth year come round, the ninth \\
we've hung on here. Who could blame the Achaeans \\
for chafing, bridling beside the beaked ships? \\
Ah but still---what a humiliation it would be \\
to hold out so long, then sail home empty-handed. \\
Courage, my friends, hold out a little longer.\\

\hfill Homer, \textit{The Iliad}, trans. \citet[2.345-50]{fagles1990homer}
\end{quotation}

Rationalization is a basic aspect of human behavior, exhaustively documented by psychologists but missing from standard economic models. The present theory bridges that gap. Having spent years writing it, we probably overestimate its contribution.

\bibliographystyle{ecta}
\bibliography{references}

\newpage
\appendix

\section{Proofs of comparative statics results}

\subsection{Proof of Proposition \ref{prop:topkis_too}}\label{app:topkis_too}

In this proof, we fix the state $s$ and the decision problem $D$ with $A_2(a_1)$ monotone non-decreasing, and we suppress the dependence of $w$ on $s$ to reduce notation.  We define $\tilde{U}(a_1,a_2,\theta) \equiv U_D(a_2,\theta \mid a_1,s)$.

\begin{lemma}\label{lem:total_util_incdiff}
If $w$ has increasing differences between $a_1$ and $(a_2,\theta)$, then $\tilde{U}$ has increasing differences between $a_1$ and $(a_2,\theta)$.
\end{lemma}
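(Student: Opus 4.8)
The plan is to exploit the fact that the subtracted term in $\tilde U$ — the \textit{ex post} optimum $M(\theta)\equiv\max_{\hat a_1\in A_1,\ \hat a_2\in A_2(\hat a_1)}w(\hat a_1,\hat a_2,\theta)$ — depends on $\theta$ alone, not on $a_1$ or $a_2$. Writing
\[
\tilde U(a_1,a_2,\theta)=(1-\gamma)\,w(a_1,a_2,\theta^*)+\gamma\,w(a_1,a_2,\theta)-\gamma\,M(\theta),
\]
fix $\tilde a_1'\ge\tilde a_1$ and $(\tilde a_2',\tilde\theta')\ge(\tilde a_2,\tilde\theta)$ in the product order. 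In the cross-difference
\[
\bigl[\tilde U(\tilde a_1',\tilde a_2',\tilde\theta')-\tilde U(\tilde a_1',\tilde a_2,\tilde\theta)\bigr]-\bigl[\tilde U(\tilde a_1,\tilde a_2',\tilde\theta')-\tilde U(\tilde a_1,\tilde a_2,\tilde\theta)\bigr],
\]
the $\gamma M(\tilde\theta')$ and $\gamma M(\tilde\theta)$ contributions appear with the same sign in both bracketed differences and hence cancel entirely. So increasing differences of $\tilde U$ is equivalent to the same inequality for the function $(a_1,a_2,\theta)\mapsto(1-\gamma)\,w(a_1,a_2,\theta^*)+\gamma\,w(a_1,a_2,\theta)$.

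\textbf{Key steps.}
First I would note that this latter function is a nonnegative-weighted sum (weights $1-\gamma\ge 0$ and $\gamma\ge 0$, since $\gamma\in[0,1)$) of two maps, so it suffices to check that each has increasing differences between $a_1$ and $(a_2,\theta)$. The second summand, $w(a_1,a_2,\theta)$, has increasing differences between $a_1$ and $(a_2,\theta)$ by hypothesis — this is literally the assumed inequality applied to the chosen points. For the first summand, $(1-\gamma)\,w(a_1,a_2,\theta^*)$: here $\theta$ is held at the constant value $\theta^*$ in every slot, so the required inequality is the hypothesis applied with $\tilde\theta=\tilde\theta'=\theta^*$ and the same $\tilde a_2\le\tilde a_2'$; this is valid because $(\tilde a_2',\tilde\theta')\ge(\tilde a_2,\tilde\theta)$ in the product order forces $\tilde a_2'\ge\tilde a_2$, hence $(\tilde a_2',\theta^*)\ge(\tilde a_2,\theta^*)$. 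Adding the two inequalities with the weights $1-\gamma$ and $\gamma$ then gives the claim, and tracing back through the cancellation of the $M(\theta)$ terms completes the proof.

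\textbf{Main obstacle.}
There is no substantive obstacle; the content is the observation that $M(\theta)$ is independent of $(a_1,a_2)$ and therefore drops out of the second difference. The only care required is bookkeeping: confirming that the product-order hypothesis $(\tilde a_2',\tilde\theta')\ge(\tilde a_2,\tilde\theta)$ yields $\tilde a_2'\ge\tilde a_2$ so that the $\theta^*$-slice is a genuine instance of the increasing-differences hypothesis, and that $\gamma\in[0,1)$ makes both combining weights nonnegative so the sum of the two valid inequalities is itself valid.
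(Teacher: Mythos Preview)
Your proposal is correct and follows essentially the same approach as the paper: both compute the cross-difference, observe that the $\gamma M(\theta)$ terms cancel, and then split what remains into the $(1-\gamma)$ and $\gamma$ pieces, each of which is nonnegative by the increasing-differences hypothesis on $w$ (the former being the special case $\tilde\theta=\tilde\theta'=\theta^*$). You are simply more explicit than the paper about the bookkeeping that justifies the cancellation and the $\theta^*$-slice.
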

\begin{proof}
Take any $a'_1 \geq a_1$ and any $(a'_2, \theta') \geq (a_2, \theta)$.  Canceling terms, we have
\begin{equation}\label{eq:CS_incdiff}
\begin{split}
    &\tilde{U}(a'_1,a'_2,\theta') - \tilde{U}(a'_1,a_2,\theta) - \left[\tilde{U}(a_1,a'_2,\theta') - \tilde{U}(a_1,a_2,\theta)\right] \\
    =& (1-\gamma) \left[ w(a'_1,a'_2,\theta^*) - w(a'_1,a_2,\theta^*) - \left[w(a_1,a'_2,\theta^*) - w(a_1,a_2,\theta^*)\right] \right] \\
    & +\gamma \left[w(a'_1,a'_2,\theta') - w(a'_1,a_2,\theta) - \left[ w(a_1,a'_2,\theta') - w(a_1,a_2,\theta)\right] \right].
\end{split}
\end{equation}
By increasing differences for $w$, the term multiplied by $(1-\gamma)$ and the term multiplied by $\gamma$ are both non-negative, so the right-hand side of \eqref{eq:CS_incdiff} is non-negative.
\end{proof}

\begin{lemma}\label{lem:total_util_supermodular}
If $w$ is supermodular in $(a_2,\theta)$, then $\tilde{U}$ is supermodular in $(a_2,\theta)$.
\end{lemma}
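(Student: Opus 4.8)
The plan is to decompose $\tilde U$ (with the first action $a_1$ held fixed) into a sum of three pieces, verify that each piece is supermodular in $(a_2,\theta)$, and then invoke the fact that supermodularity is preserved under sums. Writing out the definition of $U_D$ with $s$ suppressed,
\[
\tilde U(a_1,a_2,\theta) = (1-\gamma)\,w(a_1,a_2,\theta^*) \;+\; \gamma\, w(a_1,a_2,\theta) \;-\; \gamma \max_{\hat a_1 \in A_1,\ \hat a_2 \in A_2(\hat a_1)} w(\hat a_1,\hat a_2,\theta),
\]
and I would label these three summands $g_1$, $g_2$, $g_3$, each viewed as a function of $(a_2,\theta) \in \mathcal{A}_2 \times \Theta$, where joins and meets are taken component-wise.

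For $g_2$: it equals $\gamma \ge 0$ times $w(a_1,\cdot,\cdot)$, which is supermodular in $(a_2,\theta)$ by hypothesis, and a nonnegative scalar multiple of a supermodular function is supermodular. For $g_1$: it depends on $(a_2,\theta)$ only through $a_2$, so it suffices that $w(a_1,\cdot,\theta^*)$ be supermodular in $a_2$; this follows by specializing the supermodularity inequality for $w$ in $(a_2,\theta)$ to the pair $(\tilde a_2,\theta^*)$ and $(\tilde a_2',\theta^*)$, whose join and meet are $(\tilde a_2 \vee \tilde a_2',\theta^*)$ and $(\tilde a_2 \wedge \tilde a_2',\theta^*)$ — the $\theta^*$-coordinates cancel and leave exactly the supermodularity inequality for $w(a_1,\cdot,\theta^*)$ in $a_2$. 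For $g_3$: it depends on $(a_2,\theta)$ only through $\theta$, and $\Theta$ is totally ordered, so for any two points $(a_2,\theta)$ and $(a_2',\theta')$ the pair $\{\theta \vee \theta',\ \theta \wedge \theta'\}$ coincides with $\{\theta,\theta'\}$; hence $g_3$ satisfies the supermodularity inequality with equality. Adding, $\tilde U = g_1 + g_2 + g_3$ is supermodular in $(a_2,\theta)$.

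The argument is almost entirely bookkeeping; the only step that warrants comment is $g_3$. A maximum of supermodular functions need not be supermodular in general, so one might worry that the regret normalization $\max_{\hat a_1,\hat a_2} w(\hat a_1,\hat a_2,\theta)$ could destroy supermodularity. It does not, because the maximization is over $(\hat a_1,\hat a_2)$, which are distinct from the choice variables $(a_2,\theta)$, and the only surviving dependence is on $\theta$, which ranges over a chain — so this term is in fact modular in $(a_2,\theta)$. This is precisely where the total-order assumption on $\Theta$ is used, and it is the natural place for a careful reader to pause.
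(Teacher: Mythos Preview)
Your proof is correct and follows essentially the same approach as the paper: both decompose $\tilde U$ into its three summands and check that each contributes nonnegatively to the supermodularity difference. The paper compresses this into a single displayed equality (``canceling terms''), while you spell out separately why the material-utility term $g_1$ is supermodular in $a_2$ (by restricting the hypothesis to $\theta=\theta'=\theta^*$) and why the regret-normalization term $g_3$ is modular (using that $\Theta$ is a chain); your version is arguably more careful on these two points, but the underlying argument is the same.
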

\begin{proof}
Take any $(a_2, \theta)$ and $(a'_2, \theta')$.  By substitution, we have
\begin{equation}\label{eq:w_supermod_implies}
\begin{split}
    &\tilde{U}(a_1,(a_2,\theta) \wedge (a'_2,\theta')) + \tilde{U}(a_1, (a_2,\theta) \vee (a'_2,\theta')) \\
    &- \tilde{U}(a_1,a_2,\theta) - \tilde{U}(a_1,a'_2,\theta')  \\
    =& (1-\gamma) [w(a_1,a_2 \wedge a'_2,\theta^*) + w(a_1,a_2 \vee a'_2,\theta^*) \\
    &-  w(a_1,a_2,\theta^*) - w(a_1,a'_2,\theta^*) ]\\
    +& \gamma [w(a_1,(a_2,\theta) \wedge (a'_2,\theta')) + w(a_1,(a_2,\theta) \vee (a'_2,\theta')) \\
    &-  w(a_1,a_2,\theta) - w(a_1,a'_2,\theta') ] - \gamma \Upsilon,
\end{split}
\end{equation}
for
\begin{equation}
\begin{split}
    \Upsilon \equiv &\max_{\substack{\hat{a}_1 \in A_1  \\ \hat{a}_2\in A_2(\hat{a}_1)}} w(\hat{a}_1,\hat{a}_2, \theta \wedge \theta') + \max_{\substack{\hat{a}_1 \in A_1  \\ \hat{a}_2\in A_2(\hat{a}_1)}} w(\hat{a}_1,\hat{a}_2, \theta \vee \theta')\\
    -  &\max_{\substack{\hat{a}_1 \in A_1  \\ \hat{a}_2\in A_2(\hat{a}_1)}} w(\hat{a}_1,\hat{a}_2, \theta) - \max_{\substack{\hat{a}_1 \in A_1  \\ \hat{a}_2\in A_2(\hat{a}_1)}} w(\hat{a}_1,\hat{a}_2, \theta').
\end{split}
\end{equation}
By $\Theta$ totally ordered, $\Upsilon = 0$. Thus, supermodularity of $w$ in $(a_2,\theta)$ implies that the right-hand side of \eqref{eq:w_supermod_implies} is non-negative.
\end{proof}

\Cref{lem:total_util_incdiff}, \Cref{lem:total_util_supermodular}, and Topkis's theorem yield \Cref{prop:topkis_too}. \qed

\subsection{Proof of Theorem \ref{thm:comp_stat}}\label{app:comp_stat}

Suppose $\bar{a}_1$ was \textit{ex post} too high. We prove that if some action $\bar{a}_2$ and some rationale $\bar{\theta}$ maximizes total utility, then there exists $\theta \geq \theta^*$ such that $(\bar{a}_2,\theta)$ maximizes total utility.  Thus, when we are only concerned with actions $a_2$ that maximize total utility, it is without loss of generality to restrict the rationales to be at least $\theta^*$.  An argument using Topkis's theorem then yields \Cref{thm:comp_stat}.

In this proof, we fix the state $s$ and the decision problem $D$ with $A_2(a_1)$ monotone non-decreasing, and we suppress the dependence of $w$ on $s$ to reduce notation.  As before, we define $\tilde{U}(a_1,a_2,\theta) \equiv U_D(a_2,\theta \mid a_1,s)$.

Let $a^*_1$ be as defined in \Cref{thm:comp_stat} and let $\bar{a}_1 \geq a^*_1$. We now define
\begin{equation}
\begin{split}
    & \bar{\Theta} \equiv \argmax_{\theta \in \Theta} \max_{a_2 \in A_2(\bar{a}_1)}  \tilde{U}(\bar{a}_1,a_2,\theta),\\
    & \bar{\Theta}_{\geq} \equiv \left\{\theta \in \bar{\Theta} : \theta \geq \theta^* \right\}.
\end{split}    
\end{equation}
Observe that
\begin{equation}\label{eq:bar_theta_union}
    \argmax_{a_2 \in A_2(\bar{a}_1)} \max_{\theta \in \Theta} \tilde{U}(\bar{a}_1,a_2,\theta) = \bigcup_{\bar{\theta} \in \bar{\Theta}} \argmax_{a_2 \in A_2(\bar{a}_1)} \tilde{U}(\bar{a}_1,a_2,\bar{\theta}).
\end{equation}

\begin{lemma}\label{lem:consider_only_Theta_nonneg}
Under the assumptions of \Cref{thm:comp_stat}, we have
\begin{equation}\label{eq:consider_only_Theta_nonneg}
    \bigcup_{\bar{\theta} \in \bar{\Theta}} \argmax_{a_2 \in A_2(\bar{a}_1)} \tilde{U}(\bar{a}_1,a_2,\bar{\theta}) = \bigcup_{\bar{\theta} \in \bar{\Theta}_\geq} \argmax_{a_2 \in A_2(\bar{a}_1)} \tilde{U}(\bar{a}_1,a_2,\bar{\theta})
\end{equation}
\end{lemma}
\begin{proof}
  Take any
    \begin{equation}\label{eq:upper_optima}
     (\bar{a}_2,\bar{\theta}) \in \argmax_{\substack{a_2 \in A_2(\bar{a}_1) \\ \theta \in \Theta}} \tilde{U}(\bar{a}_1,a_2,{\theta}).
    \end{equation}
    We will show that
    \begin{equation}
     (\bar{a}_2,\bar{\theta}\vee\theta^*) \in \argmax_{\substack{a_2 \in A_2(\bar{a}_1) \\ \theta \in \Theta}} \tilde{U}(\bar{a}_1,a_2,{\theta}).
    \end{equation}
    Let us take any $a_2^* \in \argmax_{a_2 \in A_2(a_1^*)} w(a_1^*,a_2,\theta^*)$. By $a_1^*$ \textit{ex post} optimal, we have
    \begin{equation}\label{eq:lower_optima}
     \tilde{U}(a_1^*,a_2^*,\theta^*) =(1-\gamma) \max_{\substack{a_1 \in A_1 \\ a_2 \in A_2(a_1)}} w(a_1,a_2,\theta^*) = \max_{\substack{a_2 \in A_2(a_1^*) \\ \theta \in \Theta}} \tilde{U}(a_1^*,a_2,{\theta}).
    \end{equation}
    By $a_1^* \leq \bar{a}_1$, \eqref{eq:upper_optima}, \eqref{eq:lower_optima}, and \Cref{prop:topkis_too}, we have
    \begin{equation}\label{eq:meet_optima}
     (a_2^* \wedge \bar{a}_2 ,\theta^* \wedge \bar{\theta}) \in \argmax_{\substack{a_2 \in A_2(a_1^*) \\ \theta \in \Theta}} \tilde{U}(a_1^*,a_2,{\theta}).
    \end{equation}
    By \eqref{eq:meet_optima} and then \eqref{eq:lower_optima}, we have
    \begin{equation}
    \begin{split}
        (1-\gamma) w(a_1^*,a_2^* \wedge \bar{a}_2,\theta^*) & \geq \tilde{U}(a_1^*,a_2^* \wedge \bar{a}_2,\theta^* \wedge \bar{\theta}) \\ \geq \tilde{U}(a_1^*,a_2^*,\theta^*) & = (1-\gamma) \max_{\substack{a_1 \in A_1 \\ a_2 \in A_2(a_1)}}  w(a_1,a_2,\theta^*).
    \end{split}
    \end{equation}
    This implies that the action sequence $(a_1^*, a_2^* \wedge \bar{a}_2)$ yields no regret under rationale $\theta^*$, and thus
    \begin{equation}\label{eq:other_lower_optima}
     (a_2^* \wedge \bar{a}_2,\theta^*) \in \argmax_{\substack{a_2 \in A_2(a_1^*) \\ \theta \in \Theta}} \tilde{U}(a_1^*,a_2,{\theta}).
    \end{equation}
    By $a_1^* \leq \bar{a}_1$, \eqref{eq:upper_optima}, \eqref{eq:other_lower_optima}, and \Cref{prop:topkis_too}, we have
    \begin{equation}
     (\bar{a}_2,\bar{\theta} \vee \theta^*) = (\bar{a}_2,\bar{\theta}) \vee (a_2^* \wedge \bar{a}_2,\theta^*) \in \argmax_{\substack{a_2 \in A_2(\bar{a}_1) \\ \theta \in \Theta}} \tilde{U}(\bar{a}_1,a_2,{\theta}).
    \end{equation}
    Our argument holds for any $(\bar{a}_2,\bar{\theta})$ satisfying \eqref{eq:upper_optima}, which yields \eqref{eq:consider_only_Theta_nonneg}.
\end{proof}

\begin{lemma}\label{lem:actions_monotone_plus} Under the assumptions of \Cref{thm:comp_stat}, we have
\begin{equation}\label{eq:actions_monotone_plus}
        \bigcup_{\bar{\theta} \in \bar{\Theta}_\geq} \argmax_{a_2 \in A_2(\bar{a}_1)} \tilde{U}(\bar{a}_1,a_2,\bar{\theta}) \gg \argmax_{a_2\in A_2(\bar{a}_1)} w(\bar{a}_1,a_2,\theta^*).
\end{equation}
\end{lemma}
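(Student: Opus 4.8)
The plan is to read Lemma \ref{lem:actions_monotone_plus} as a monotone comparative statics statement and obtain it from Topkis's theorem together with a short union bookkeeping argument. The first step is to notice that at $\theta=\theta^*$ the total-utility objective, viewed as a function of $a_2$ alone, equals $w(\bar a_1,a_2,\theta^*)$ plus the constant $\gamma\max_{\hat a_1\in A_1,\hat a_2\in A_2(\hat a_1)}w(\hat a_1,\hat a_2,\theta^*)$, which does not depend on $a_2$. Hence
\[
\argmax_{a_2\in A_2(\bar a_1)} w(\bar a_1,a_2,\theta^*)=\argmax_{a_2\in A_2(\bar a_1)} \tilde U(\bar a_1,a_2,\theta^*),
\]
so it suffices to prove \eqref{eq:actions_monotone_plus} with the right-hand side rewritten as the total-utility argmax at $\theta^*$.

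Next I would set up the Topkis machinery. Since $A_2(a_1)$ is monotone non-decreasing, the definition applied with $a_1=a_1$ forces $A_2(\bar a_1)$ to be a sublattice of $\mathcal A_2$, hence a lattice, so $A_2(\bar a_1)\times\Theta$ is a sublattice of $\mathcal A_2\times\Theta$. By \Cref{lem:total_util_supermodular}, $\tilde U(\bar a_1,\cdot,\cdot)$ is supermodular in $(a_2,\theta)$, and supermodularity is inherited by the restriction to the sublattice $A_2(\bar a_1)\times\Theta$; since $\Theta$ is a chain this supplies both supermodularity of $\tilde U(\bar a_1,\cdot,\theta)$ in $a_2$ for each fixed $\theta$ and increasing differences between $a_2$ and $\theta$. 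Topkis's monotonicity theorem then gives that $\argmax_{a_2\in A_2(\bar a_1)}\tilde U(\bar a_1,a_2,\theta)$ is monotone non-decreasing in $\theta$; in particular, for every $\bar\theta\ge\theta^*$,
\[
\argmax_{a_2\in A_2(\bar a_1)}\tilde U(\bar a_1,a_2,\theta^*)\;\ll\;\argmax_{a_2\in A_2(\bar a_1)}\tilde U(\bar a_1,a_2,\bar\theta).
\]

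Finally I would pass to the union over $\bar\theta\in\bar\Theta_0^+$. Write $Z$ for the common set $\argmax_{a_2\in A_2(\bar a_1)} w(\bar a_1,a_2,\theta^*)=\argmax_{a_2\in A_2(\bar a_1)}\tilde U(\bar a_1,a_2,\theta^*)$ and $Y$ for the union on the left of \eqref{eq:actions_monotone_plus}. Both are nonempty: $Z$ because the relevant maxima are assumed to exist, and $Y$ because $\bar\Theta_0^+\neq\emptyset$ (if $\bar\Theta^-=\emptyset$ then $\bar\Theta_0^+=\bar\Theta\neq\emptyset$; if $\bar\Theta^-\neq\emptyset$ then the nonempty set $Z$ sits inside the right-hand union of \Cref{lem:Theta_minus_non_empty_secure}, forcing $\bar\Theta_0^+\neq\emptyset$). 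Now fix $z\in Z$ and $y\in Y$. Then $y\in\argmax_{a_2\in A_2(\bar a_1)}\tilde U(\bar a_1,a_2,\bar\theta)$ for some $\bar\theta\in\bar\Theta_0^+$, so $\bar\theta\ge\theta^*$, and the displayed strong-set-order inequality yields $z\wedge y\in Z$ and $z\vee y\in\argmax_{a_2\in A_2(\bar a_1)}\tilde U(\bar a_1,a_2,\bar\theta)\subseteq Y$. This is exactly $Z\ll Y$, i.e.\ $Y\gg Z$, which is \eqref{eq:actions_monotone_plus}. I do not anticipate a genuine obstacle; the only points requiring care are extracting the precise hypotheses of Topkis's theorem from \Cref{lem:total_util_supermodular} (using that $\Theta$ is a chain and that sublattices preserve supermodularity) and verifying the strong set order against the entire union rather than a single $\argmax_{a_2}\tilde U(\bar a_1,a_2,\bar\theta)$.
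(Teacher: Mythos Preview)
Your argument is correct. One cosmetic slip: at $\theta=\theta^*$ the total utility equals $w(\bar a_1,a_2,\theta^*)$ \emph{minus} the constant $\gamma\max_{\hat a_1,\hat a_2}w(\hat a_1,\hat a_2,\theta^*)$, not plus; this does not affect the argmax, so the identification of the right-hand side with $\argmax_{a_2}\tilde U(\bar a_1,a_2,\theta^*)$ stands.

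Your route differs from the paper's in a useful way. The paper fixes $\bar\theta\in\bar\Theta_0^+$, introduces the auxiliary function $g_{\bar\theta}(a_2,\gamma)=(1-\gamma)w(\bar a_1,a_2,\theta^*)+\gamma w(\bar a_1,a_2,\bar\theta)$, checks that $g_{\bar\theta}$ is supermodular in $(a_2,\gamma)$ (using $\bar\theta\ge\theta^*$ and supermodularity of $w$), and applies Topkis in the parameter $\gamma$ to compare $\gamma$ with $0$. You instead apply Topkis in the parameter $\theta$, comparing $\bar\theta$ with $\theta^*$, and you get the needed supermodularity and increasing differences for free from \Cref{lem:total_util_supermodular} (using that $\Theta$ is a chain). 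Your approach is slightly more economical because it recycles a lemma the paper already proved rather than building a new auxiliary object; the paper's approach makes the role of $\gamma$ explicit, which matches the surrounding narrative that the distortion appears only for $\gamma>0$. The union bookkeeping at the end is the same in both, though you spell it out more carefully than the paper does.
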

\begin{proof}
 Take any $\bar{\theta} \in \bar{\Theta}_\geq$. We define
 \begin{equation}
    g_{\bar{\theta}}(a_2,\gamma) \equiv (1 - \gamma) w(\bar{a}_1,a_2,\theta^*) + \gamma w(\bar{a}_1,a_2,\bar{\theta}).
\end{equation}
By $\bar{\theta} \geq \theta^*$ and $w$ supermodular, $g_{\bar{\theta}}(a_2,\gamma)$ is supermodular.  Thus, by Topkis's theorem, $\argmax_{a_2 \in A_2(\bar{a}_1)} g_{\bar{\theta}}(a_2,\gamma)$ is monotone non-decreasing in $\gamma$.  Thus we have,
\begin{equation}\label{eq:bar_theta_compare}
\begin{split}
    \argmax_{a_2 \in A_2(\bar{a}_1)} \tilde{U}(\bar{a}_1,a_2,\bar{\theta}) = \argmax_{a_2 \in A_2(\bar{a}_1)} g_{\bar{\theta}}(a_2,\gamma) &\gg \argmax_{a_2 \in A_2(\bar{a}_1)} g_{\bar{\theta}}(a_2,0) \\ &= \argmax_{a_2\in A_2(\bar{a}_1)} w(\bar{a}_1,a_2,\theta^*).
    \end{split}
\end{equation} \eqref{eq:bar_theta_compare} holds for all $\bar{\theta} \in \bar{\Theta}_\geq$, which implies \eqref{eq:actions_monotone_plus}.
\end{proof}

By \eqref{eq:bar_theta_union}, \Cref{lem:consider_only_Theta_nonneg}, and \Cref{lem:actions_monotone_plus}, we have
\begin{equation}\label{eq:comp_stat_compare}
    \argmax_{a_2 \in A_2(\bar{a}_1)} \max_{\theta \in \Theta} \tilde{U}(\bar{a}_1,a_2,\theta) \gg \argmax_{a_2\in A_2(\bar{a}_1)} w(\bar{a}_1,a_2,\theta^*).
\end{equation}
Moreover, by \Cref{lem:consider_only_Theta_nonneg}, for any $a_2$ in the left-hand side of \eqref{eq:comp_stat_compare}, there exists $\bar{\theta} \geq \theta^*$ such that
\begin{equation}
    \bar{a}_2 \in \argmax_{a_2 \in A_2(\bar{a}_1)} \tilde{U}(\bar{a}_1,a_2,\bar{\theta}).
\end{equation}
which completes the proof of \Cref{thm:comp_stat} for the case $\bar{a}_1 \geq a^*_1$.

Finally, note that if $w$ has increasing differences and is supermodular with respect to some order relations on $\mathcal{A}_1$, $\mathcal{A}_2$, and $\Theta$, then it has increasing differences and is supermodular with respect to the inverse orders on $\mathcal{A}_1$, $\mathcal{A}_2$, and $\Theta$.  Similarly, if $A_2(a_1)$ is monotone non-decreasing with respect to some order relations on $\mathcal{A}_1$ and $\mathcal{A}_2$, then it is also monotone non-decreasing with respect to the inverse orders. Thus, our proof covers the case $\bar{a}_1 \leq a^*_1$.  \qed

\subsection{Proof of Theorem \ref{thm:comp_stat_A1}}\label{app:comp_stat_A1}

In this proof, we fix the state $s$ and suppress the dependence of $U$ and $w$ on $s$ to reduce notation. 

We first establish that switching from $D'$ to $D$ increases the difference from raising $a_2$ and $\theta$.

\begin{lemma}\label{lem:incdiff_A1}
Under the assumptions of \Cref{thm:comp_stat_A1}, for any $(\bar{a}_2,\bar{\theta}) \leq (\bar{a}'_2,\bar{\theta}')$, we have
\begin{equation}\label{eq:incdiff_A1}
    U_{D'}(\bar{a}'_2,\bar{\theta}' \mid \bar{a}_1) - U_{D'}(\bar{a}_2,\bar{\theta} \mid \bar{a}_1) \leq U_D(\bar{a}'_2,\bar{\theta}' \mid \bar{a}_1) - U_D(\bar{a}_2,\bar{\theta} \mid \bar{a}_1).
\end{equation}
\end{lemma}
\begin{proof}
  For $\gamma = 0$, we have \eqref{eq:incdiff_A1} trivially. Otherwise, by substitution and then some algebra, \eqref{eq:incdiff_A1} reduces to
  \begin{align}
      & \max_{\substack{a_1 \in A'_1  \\ a_2\in A_2(a_1)}} w(a_1,a_2,\bar{\theta}) + \max_{\substack{a_1 \in A_1  \\ a_2\in A_2(a_1)}} w(a_1,a_2,\bar{\theta}') \\ \leq & \max_{\substack{a_1 \in A_1  \\ a_2\in A_2(a_1)}} w(a_1,a_2,\bar{\theta}) + \max_{\substack{a_1 \in A'_1  \\ a_2\in A_2(a_1)}} w(a_1,a_2,\bar{\theta}').
  \end{align}
  Let us define
  \begin{equation}
      (\hat{a}_1,\hat{a}_2) \in \argmax_{\substack{a_1 \in A'_1  \\ a_2\in A_2(a_1)}} w(a_1,a_2,\bar{\theta}),
  \end{equation}
\begin{equation}
      (\tilde{a}_1,\tilde{a}_2) \in \argmax_{\substack{a_1 \in A_1  \\ a_2\in A_2(a_1)}} w(a_1,a_2,\bar{\theta}'),
  \end{equation}
By $A_1 \ll A'_1$, we have $\hat{a}_1 \wedge \tilde{a}_1 \in A_1$ and $\hat{a}_1 \vee \tilde{a}_1 \in A'_1$.  By \eqref{eq:strong_nondec}, we have $\hat{a}_2 \wedge \tilde{a}_2 \in A_2(\hat{a}_1 \wedge \tilde{a}_1)$ and $\hat{a}_2 \vee \tilde{a}_2 \in A_2(\hat{a}_1 \vee \tilde{a}_1)$. Thus, we have
\begin{equation}
    w(\hat{a}_1 \wedge \tilde{a}_1,\hat{a}_2 \wedge \tilde{a}_2,\bar{\theta}) \leq \max_{\substack{a_1 \in A_1  \\ a_2\in A_2(a_1)}} w(a_1,a_2,\bar{\theta})
\end{equation}
\begin{equation}
    w(\hat{a}_1 \vee \tilde{a}_1,\hat{a}_2 \vee \tilde{a}_2,\bar{\theta}') \leq \max_{\substack{a_1 \in A'_1  \\ a_2\in A_2(a_1)}} w(a_1,a_2,\bar{\theta}'),
\end{equation}
Combining inequalities yields
  \begin{align}
       \max_{\substack{a_1 \in A'_1  \\ a_2\in A_2(a_1)}} w(a_1,a_2,\bar{\theta}) &+ \max_{\substack{a_1 \in A_1  \\ a_2\in A_2(a_1)}} w(a_1,a_2,\bar{\theta}') &
      \\ = w(\hat{a}_1,\hat{a}_2,\bar{\theta}) &+ w(\tilde{a}_1,\tilde{a}_2,\bar{\theta}') &
      \\ \leq w(\hat{a}_1\wedge \tilde{a}_1,\hat{a}_2 \wedge \tilde{a}_2,\bar{\theta})&+ w(\hat{a}_1\vee \tilde{a}_1,\hat{a}_2 \vee \tilde{a}_2,\bar{\theta}')& \text{ by supermodularity}
      \\ \leq \max_{\substack{a_1 \in A_1  \\ a_2\in A_2(a_1)}} w(a_1,a_2,\bar{\theta}) &+ \max_{\substack{a_1 \in A'_1  \\ a_2\in A_2(a_1)}} w(a_1,a_2,\bar{\theta}'),&
  \end{align}
which implies \eqref{eq:incdiff_A1}.
\end{proof}

By \Cref{lem:total_util_supermodular}, $U$ is supermodular in $(a_2,\theta)$.  Thus, by \Cref{lem:incdiff_A1} and Topkis's theorem, we have \Cref{thm:comp_stat_A1}. \qed
\section{Proof of Theorem \ref{thm:representation}}\label{app:representation}

Since we will only compare choices within a given state, we drop the superscript $s$ throughout. 

Let $\mathcal{U}_\text{pref}$ denote the set of expected-utility preferences on $\Delta(Z)$. 

\subsection{Necessity}

\begin{lemma}
\label{lem:closeworse_rep}
Suppose $c_2$ has a rationalization representation $(\gamma, u, \mathcal{V})$. For any interior $q$, there exist $p$ and $r$ arbitrarily close to $q$ such that $v(p) > v(q) > v(r)$ for all $v \in \mathcal{V}$. 
\end{lemma}

\begin{proof}
Suppose there do not exist any $x, y$ such that $v(x) > v(y)$ for all $v \in \mathcal{V}$. Fix any interior $q \in \Delta(Z)$ and any menu $\mathcal{M}$ such that $q \in \text{int}(\text{co}(\mathcal{M}))$. By the minimax theorem,
\begin{align}
U(q \mid \mathcal{M}) &= (1-\gamma)u(q) + \min_{x \in \text{co}(\mathcal{M})}\max_{v \in \mathcal{V}}\left(v(q) - v(x) \right)\\
&= (1-\gamma)u(q)\\
&= (1-\gamma)u(q) + \gamma\max_{v \in \mathcal{V}}\left(v(q) - \max_{x \in \mathcal{M}}v \right).
\end{align}
The final equality implies the existence of $v \in \mathcal{V}$ such that $v(q) = \max_{x \in \mathcal{M}}v$. Since $q \in \text{int}(\mathcal{M})$, only constant utilities can satisfy this condition. But $\mathcal{V}$ does not contain a constant utility. Conclude that there exist $x, y$ such that $v(x) > v(y)$ for all $v \in \mathcal{V}$. 

Since $q$ is interior, $q + \epsilon(x - y)$ and $q + \epsilon(y-x)$ exist for all $\epsilon > 0$ sufficiently small. For all such $\epsilon$ and all $v \in \mathcal{V}$, we have $v(q) < v(q + \epsilon(x - y))$ and $v(q) > v(q + \epsilon(y - x))$.
\end{proof}

\begin{lemma}
\label{lem:V2worse}
Suppose $c_2$ has a rationalization representation. For every rationalization representation $(\gamma, u, \mathcal{V})$, for every interior $q$,
\[\{r \in \Delta(Z): r \text{ is worse than } q\} = \{r \in \Delta(Z): v(q) > v(r) \text{ for all } v \in \mathcal{V}\}.\]
\end{lemma}

\begin{proof}
\textbf{First direction:} Suppose that $v(q) > v(r)$ for all $v \in \mathcal{V}$. Since $\mathcal{V}$ is compact, there exists $\epsilon > 0$ such that $v(q) > v(\tilde{r})$ for all $v \in \mathcal{V}$ and all $\tilde{r} \in B_\epsilon(r)$. Fix any $(M, \mathcal{M})$ such that $q \in M$ and any $y \in \text{co}(B_\epsilon(r) \cup \{q\})$. For each $v \in \mathcal{V}$, we have
\[\max_{p \in \mathcal{M}}v = \max_{p \in \mathcal{M} \setminus \{y\}}v.\]
For all $x \in M \setminus \{y\}$, we have
\begin{align}
\nonumber
U(x \mid \mathcal{M}) &= (1-\gamma)u(x) + \gamma\max_{v \in \mathcal{V}}\left(v(x) - \max_{p \in \mathcal{M}}v \right)\\
\nonumber
 &= (1-\gamma)u(x) + \gamma\max_{v \in \mathcal{V}}\left(v(x) - \max_{p \in \mathcal{M} \setminus \{y\}}v\right) \\
 \label{eq:minusy}
 &= U(x \mid \mathcal{M} \setminus \{y\}).
\end{align}
If $q \in c_2(M \mid \mathcal{M})$, we have that $U(q \mid \mathcal{M}) \geq U(x \mid \mathcal{M})$ for all $x \in M$. By (\ref{eq:minusy}), $U(q \mid \mathcal{M} \setminus \{y\}) \geq U(x \mid \mathcal{M} \setminus \{y\})$ for all $x \in M \setminus \{y\}$. This implies $q \in c_2(M \setminus \{y\} \mid \mathcal{M} \setminus \{y\})$. If $q \in c_2(M \setminus \{y\} \mid \mathcal{M} \setminus \{y\})$, we have that $U(q \mid \mathcal{M} \setminus \{y\}) \geq U(x \mid \mathcal{M} \setminus \{y\})$ for all $x \in M \setminus \{y\}$. By (\ref{eq:minusy}), $U(q \mid \mathcal{M}) \geq U(x \mid \mathcal{M})$ for all $x \in M \setminus \{y\}$. To conclude that $q \in c_2(M \mid \mathcal{M})$, it suffices to show that $U(q \mid \mathcal{M}) \geq U(y \mid \mathcal{M})$. Since $u \in \mathcal{V}$, we have that $u(y) < u(q)$. Fix any $v_y \in \argmax_{v \in \mathcal{V}}\left(v(y) - \max_{p \in \mathcal{M}}v\right)$. We have
\begin{align}
U(y \mid \mathcal{M}) = (1-\gamma)u(y) + \gamma\left(v_y(y) - \max_{p \in \mathcal{M}}v_y\right)
&< (1-\gamma)u(q) + \gamma\left(v_y(q) - \max_{p \in \mathcal{M}}v_y \right) \\
&\leq (1-\gamma)u(q) + \gamma \max_{v \in \mathcal{V}}\left(v(q) - \max_{p \in \mathcal{M}}v \right)\\
&= U(q \mid \mathcal{M})
\end{align}
as desired. Conclude that $r$ is worse than $q$.

\textbf{Second direction:} Suppose that $v^*(r) \geq v^*(q)$ for some $v^* \in \mathcal{V}$. We show that $r$ is not worse than $q$. Suppose that $u(r) \geq u(q)$. Then, for $\tilde{r}$ arbitrarily close to $r$, we have $q \in c_2(q \mid q)$ but $q \notin c_2(q, \tilde{r} \mid q, \tilde{r})$. Thus, $r$ is not worse than $q$. For the rest of the proof, we assume that $u(r) < u(q)$. 

Suppose first that $v^*(r) > v^*(q)$. By Lemma \ref{lem:closeworse_rep}, there exist $\bar{q}$ and $\underline{q}$ such that $v(\bar{q}) > v(q) > v(\underline{q})$ for all $v \in \mathcal{V}$. For each $\epsilon \in (0, 1]$, let $\bar{q}_\epsilon \equiv \epsilon \bar{q} + (1-\epsilon)q$. For some $\epsilon_1 > 0$, for all $\epsilon \in (0, \epsilon_1]$, we have $v^*(r) > v^*(\bar{q}_\epsilon)$. For all $\epsilon \in (0, \epsilon_1]$, let $\lambda_\epsilon \equiv (v^*(\bar{q}_\epsilon) - v^*(q))/(v^*(r) - v^*(\bar{q}_\epsilon)) + \epsilon$. There exists $\epsilon_2 \in (0, \epsilon_1]$ such that $y_\epsilon \equiv (1+\lambda_\epsilon)\bar{q}_\epsilon - \lambda_\epsilon r \in \text{int}(\Delta(Z))$ for all $\epsilon \in (0, \epsilon_2]$. Since $\lim_{\epsilon \to 0} y_\epsilon = q$ and since $v(q) > v(\underline{q})$ for all $v \in \mathcal{V}$, there exists $\epsilon_3 \in (0, \epsilon_2]$ such that $v(y_{\epsilon_3}) > v(\underline{q})$ for all $v \in \mathcal{V}$. Let $y \equiv y_{\epsilon_3}$. By construction, $v^*(r) > v^*(q) > v^*(y)$. 

Since $u \in \mathcal{V}$, we have $u(q) > u(\underline{q})$. Since $v^*(q) > \max\{v^*(y), v^*(\underline{q})\}$, we have that
\[U(q \mid q, y, \underline{q}) = (1-\gamma)u(q) > (1-\gamma)u(\underline{q}) > U(\underline{q} \mid q, y, \underline{q}).\]
Since $u(\bar{q}_\epsilon) > u(q) > u(r)$, and since $\bar{q}_\epsilon \in \text{co}(y, r)$, we have that $u(y) > u(q)$. Thus,
\[U(q \mid q, y).= (1-\gamma)u(q) < (1-\gamma)u(y) = U(y \mid q, y).\]
By continuity of $U$, there exists $p \in \text{int}(\text{co}(y, \underline{q}))$ such that
\[U(q \mid q, y, p) = U(p \mid q, y, p).\]
Since $v(y) > v(\underline{q})$ for all $v \in \mathcal{V}$ and since $p \in \text{int}(\text{co}(y, \underline{q}))$, we have that $v(y) > v(p)$ for all $v \in \mathcal{V}$. This implies
\[(1-\gamma)u(q) = U(q \mid q, y, p) = U(p \mid q, y, p) < (1-\gamma)u(p),\]
so $u(p) > u(q)$. 

By Lemma \ref{lem:closeworse_rep} and interiority of $y$, there exists $\bar{y}$ arbitrarily close to $y$ such that $v(\bar{y}) > v(y)$ for all $v \in \mathcal{V}$. Since $v^*(q) > v^*(y)$, we can require that $v^*(q) > v^*(\bar{y})$ by choosing $\bar{y}$ sufficiently close to $y$. Fix any 
\[\alpha \in \argmin_{\tilde{\alpha} \in [0, 1]}\max_{v \in \mathcal{V}}\left(v(p) - v\left(\tilde{\alpha} y + (1-\tilde{\alpha})q\right) \right).\] Since $u(p) > u(q)$ and since $v(y) > v(p)$ for all $v \in \mathcal{V}$, we have that
\[\max_{v \in \mathcal{V}}\left(v(p) - v(q)\right) = 0 > \max_{v \in \mathcal{V}}\left(v(p) - v(y)\right).\]
This implies $\alpha > 0$. Since $v(\bar{y}) > v(y)$ for all $v \in \mathcal{V}$, we have
\[v(p) - v(\alpha \bar{y} +(1-\alpha)q) < v(p) - v(\alpha y + (1-\alpha)q)\]
for all $v \in \mathcal{V}$. This implies
\begin{equation}
\label{eq:yvsybar}
\max_{v \in \mathcal{V}}\left(v(p) - v(\alpha \bar{y} + (1-\alpha)q) \right) < \max_{v \in \mathcal{V}}\left(v(p) - v(\alpha y + (1-\alpha)q) \right).
\end{equation}
Using (\ref{eq:yvsybar}) and the definition of $\alpha$, we have
\begin{align}
\frac{1}{\gamma}\left(U(p \mid q, \bar{y}, p) - (1-\gamma)u(p) \right) &= \min_{x \in \text{co}(q, \bar{y})}\max_{v \in \mathcal{V}}\left(v(p) - v(x) \right) \\
&\leq \max_{v \in \mathcal{V}}\left(v(p) - v\left(\alpha \bar{y} + (1-\alpha)q\right) \right) \\
&< \max_{v \in \mathcal{V}}\left(v(p) - v\left(\alpha y + (1-\alpha)q\right)\right) \\
&= \min_{x \in \text{co}(q, y)}\max_{v \in \mathcal{V}}\left(v(p) - v(x)\right) \\
&= \frac{1}{\gamma}\left(U(p \mid q, y, p) - (1-\gamma)u(p) \right)\\
&\Longrightarrow \quad U(p \mid q, \bar{y}, p) < U(p \mid q, y, p).
\end{align}
Since
\[U(p \mid q, \bar{y}, p) < U(p \mid q, y, p) < U(p \mid q, p),\]
there exists $y^* \in \text{int}(\text{co}(q, \bar{y}))$ such that
\[U(p \mid q, y^*, p) = U(p \mid q, y, p).\]
For any $x^* \in \argmin_{x \in \text{co}(q, y^*)}U(p \mid p, x)$, we have that
\[U(p \mid q, y^*, p) = U(p \mid p, x^*) = U(p \mid q, x^*, p).\]
Thus, it is without loss to assume that 
\[U(p \mid p, y^*) = U(p \mid q, y^*, p).\]

Let $L \equiv \{x \in \Delta(Z): U(p \mid p, x) < U(p \mid p, y^*)\}$. We show that $L$ is convex. It suffices to show that for any $\alpha \in (0, 1)$ and any $x_1$ and $x_2$ such that $\max\{U(p \mid p, x_1), U(p \mid p, x_2)\} < (1-\gamma)u(p)$, 
\[U(p \mid p, \alpha x_1 + (1-\alpha)x_2) \leq \max\left\{U(p \mid p, x_1), U(p \mid p, x_2) \right\}.\]
Since $U(p \mid p, x_i) < (1-\gamma)u(p)$ for $i = 1, 2$, we have that $v(p) < v(x_i)$ for all $v \in \mathcal{V}$ for $i = 1, 2$. Thus, $v(p) < v(\alpha x_1 + (1-\alpha)x_2)$ for all $v \in \mathcal{V}$. We have
\begin{align}
U(p \mid \alpha x_1 + (1-\alpha)x_2) &= (1-\gamma)u(p) + \gamma \max_{v \in \mathcal{V}}\left(v(p) - v(\alpha x_1 + (1-\alpha)x_2) \right)\\
&\leq (1-\gamma)u(p) + \gamma \left(\alpha \max_{v \in \mathcal{V}}\left(v(p) - v(x_1)\right) + (1-\alpha)\max_{v \in \mathcal{V}}\left(v(p) - v(x_2)\right) \right) \\
&= \alpha U(p \mid p, x_1) + (1-\alpha)u(p \mid p, x_2)\\
&\leq \max\left\{U(p \mid p, x_1), U(p \mid p, x_2)\right\}.
\end{align}
Conclude that $L$ is convex.

Since
\[U(p \mid q, \bar{y}, p) < U(p \mid q, y^*, p) = U(p \mid p, y^*),\]
and since $y^* \in \text{int}(\text{co}(\bar{y}, q))$, there exists $\hat{y} \in \text{co}(y^*, \bar{y}) \setminus \{y^*\}$ such that $\hat{y} \in L$. By continuity of $U$, there exists $\epsilon_1 > 0$ such that $\tilde{y} \in L$ for all $\tilde{y} \in B_{\epsilon_1}(\hat{y})$. Since $y^* \in \text{int}(\text{co}(q, \hat{y}))$, there exists $\epsilon_2 > 0$ such that any member of $\text{co}(B_{\epsilon_2}(q) \cup \{y^*\}) \setminus \{y^*\}$ can be written $y^* + \lambda(y^* - \tilde{y})$ for some $\lambda > 0$ and some $\tilde{y} \in B_{\epsilon_1}(\hat{y})$. Since $\tilde{y} \in L$, $y^* \notin L$, and $L$ is convex, we have that no member of $\text{co}(B_{\epsilon_2}(q) \cup \{y^*\})$ belongs to $L$. Fix any $\epsilon \in (0, 1]$ such that $r_\epsilon \equiv \epsilon r + (1-\epsilon)q \in B_{\epsilon_2}(q)$. Since $\text{co}(q, r_\epsilon, y^*) \subset \text{co}(B_{\epsilon_2}(q) \cup \{y^*\})$, we have that
\begin{align}
U(p \mid q, r_\epsilon, y^*, p) &= U(p \mid p, y^*) \\
&= U(p \mid q, y^*, p) \\
&= U(p \mid q, y, p) \\
&= U(q \mid q, y, p) \\
&= (1-\gamma)u(q).
\end{align}

Since $v^*(q) > \max\{v^*(y), v^*(\underline{q})\}$, and since $p \in \text{co}(y, \underline{q})$, we have that $v^*(q) > v^*(p)$. Since $v^*(q) > v^*(\bar{y})$, and since $y^* \in \text{int}(\text{co}(q, \bar{y}))$, we have that $v^*(q) > v^*(y^*)$. Thus,
\[U(q \mid q, y^*, p) = (1-\gamma)u(q) = U(p \mid q, y^*, p).\]
Conclude that $q \in c_2(p, q \mid p, q, y^*)$.

We show that there exists $x \in \text{co}(y^*, r_\epsilon)$ such that $v(x) > v(q)$ for all $v \in \mathcal{V}$. Recall that there exists $\alpha \in (0, 1)$ such that $v(\alpha y + (1-\alpha)r) > v(q)$ for all $v \in \mathcal{V}$. Since $v(\bar{y}) > v(y)$ for all $v \in \mathcal{V}$, we have $v(\alpha \bar{y} + (1-\alpha)r) > v(q)$ for all $v \in \mathcal{V}$. Since $y^* = \beta \bar{y} + (1-\beta)q$ for some $\beta \in (0, 1)$, we have that
\begin{multline}
    v\left(\frac{\alpha}{\alpha + \beta(1-\alpha)}y^*+\frac{\beta(1-\alpha)}{\alpha + \beta(1-\alpha)}r \right) \\= v\left(\frac{\beta}{\alpha + \beta(1-\alpha)}\left(\alpha \bar{y} + (1-\alpha)r\right) + \frac{\alpha(1-\beta)}{\alpha + \beta(1-\alpha)}q \right) > v(q)
    \end{multline}
for all $v \in \mathcal{V}$. Thus, there exists $\hat{\alpha} \in (0, 1)$ such that $v(\hat{\alpha}y^* + (1-\hat{\alpha})r) > v(q)$ for all $v \in \mathcal{V}$. We have
\begin{multline}
v\left(\frac{\epsilon\hat{\alpha}}{1-\hat{\alpha}(1-\epsilon)}y^* + \frac{1-\hat{\alpha}}{1-\hat{\alpha}(1-\epsilon)}r_\epsilon\right) \\
=v\left(\frac{\epsilon}{1-\hat{\alpha}(1-\epsilon)}\left(\hat{\alpha}y^* + (1-\hat{\alpha})r\right) + \frac{(1-\epsilon)(1-\hat{\alpha})}{1-\hat{\alpha}(1-\epsilon)}q\right)> v(q)
\end{multline}
for all $v \in \mathcal{V}$. This implies
\[U(q \mid q, r_\epsilon, y^*, p) < (1-\gamma)u(q) = U(p \mid q, r_\epsilon, y^*, p).\]
Conclude that $q \notin c_2(q, p \mid q, p, y^*, r_\epsilon)$. Since $q \in c_2(q, p \mid q, p, y^*)$, we have that $r$ is not worse than $q$. 

Finally, consider the case that $v(q) \geq v(r)$ for all $v \in \mathcal{V}$, but $v(q) = v(r)$ for some $v \in \mathcal{V}$. Then, there are $\tilde{r}$ arbitrarily close to $r$ such that $v(\tilde{r}) > v(q)$ for some $v \in \mathcal{V}$. For all such $\tilde{r}$, we can find $y \in \text{co}(q, \tilde{r}) \setminus \{q\}$ and $(M, \mathcal{M})$ such that $q \in c_2(M \setminus \{y\} \mid \mathcal{M} \setminus \{y\})$, but $q \notin c_2(M \mid \mathcal{M})$. Thus, $r$ is not worse than $q$. 
\end{proof}

\begin{lemma}
\label{lem:UE2U}
For any $q \in \text{int}(\Delta(Z))$: if $\ME(q \mid x)$ is nonempty, then $(1-\gamma)\UE(q \mid x) = U(q \mid q, x)$.
\end{lemma}

\begin{proof}
Suppose $q$ is not worse than $x$. By Lemma \ref{lem:V2worse}, $q$ is not worse than any member of $\text{co}(q, x)$, so $q \in \ME(q \mid x)$. We have 
\[(1-\gamma)\UE(q \mid x) = (1-\gamma)u(q).\]
By Lemma \ref{lem:V2worse},
\[(1-\gamma)u(q) = U(q \mid q, x).\]

Now suppose that $q$ is worse than $x$. By Lemma \ref{lem:V2worse} and $u \in \mathcal{V}$, we have $u(q) < u(x)$. Fix $r \in \ME(q \mid x)$. Since $r$ is not worse than any member of $\text{co}(q, x)$, Lemma \ref{lem:V2worse} implies
\[U(r \mid q, r, x) = (1-\gamma)u(r).\]
Since $q \in c_2(q, r \mid q, r, x)$,
\[U(q \mid q, r, x) \geq (1-\gamma)u(r).\]
Since 
\[(1-\gamma)u(r) \leq U(q \mid q, r, x) \leq U(q \mid q, x) < (1-\gamma)u(q),\]
we have $u(r) < u(q)$. 

Suppose that $U(q \mid q, r, x) > (1-\gamma)u(r)$. For each $\epsilon \in (0, 1]$, let $r_\epsilon \equiv \epsilon r + (1-\epsilon)x$. Since $r$ is interior, $r_\epsilon$ is interior for all $\epsilon \in (0, 1]$. Since $u(r) < u(q) < u(x)$, we have $u(r_\epsilon) > u(r)$ for all $\epsilon \in (0, 1)$. Since $r$ is not worse than any member of $\text{co}(q, x)$, Lemma \ref{lem:V2worse} implies the existence of $v \in \mathcal{V}$ such that $v(r) \geq v(x) > v(q)$. This same $v \in \mathcal{V}$ satisfies $v(r_\epsilon) \geq v(x) > v(q)$ for all $\epsilon \in (0, 1]$. By Lemma \ref{lem:V2worse}, no $r_\epsilon$ is worse than any member of $\text{co}(q, x)$. Since $U(q \mid q, r, x) > (1-\gamma)u(r)$, we can ensure $U(q \mid q, r_\epsilon, x) \geq (1-\gamma)u(r_\epsilon)$ by choosing $\epsilon$ sufficiently close to 1. This implies $q \in c_2(q, r_\epsilon \mid q, r_\epsilon, x)$. Since $r_\epsilon$ is interior and is not worse than any member of $\text{co}(q, x)$, and since $u(r_\epsilon) > u(r)$, we have a contradiction to $r \in \ME(q \mid x)$. Conclude that $U(q \mid q, r, x) = (1-\gamma)u(r)$.

Suppose that $U(q \mid q, r, x) < U(q \mid q, x)$. This could not be the case if $v(x) \geq v(r)$ for all $v \in \mathcal{V}$. Thus, $v(r) > v(x)$ for some $v \in \mathcal{V}$. By Lemma \ref{lem:closeworse_rep}, there exists $\underline{r}$ arbitrarily close to $r$ such that $v(r) > v(\underline{r})$ for all $v \in \mathcal{V}$. By choosing $\underline{r}$ sufficiently close to $r$, we can ensure $\underline{r} \in \text{int}(\Delta(Z))$ and $v(\underline{r}) > v(x)$ for some $v \in \mathcal{V}$. For any $\beta \in (0, 1]$, for all $v \in \mathcal{V}$, 
\[v(q) - v(\beta \underline{r} + (1-\beta)x) > v(q) - v(\beta r + (1-\beta)x).\]
This implies
\begin{equation}
\label{eq:beta1}
\max_{v \in \mathcal{V}}\left(v(q) - v(\beta \underline{r} + (1-\beta)x)\right) > \max_{v \in \mathcal{V}}\left(v(q) - v(\beta r + (1-\beta)x)\right) \text{ for all } \beta \in (0, 1].
\end{equation}
Since $U(q \mid q, r, x) < U(q \mid q, x)$, 
\begin{equation}
\label{eq:beta2}
\max_{v \in \mathcal{V}}\left(v(q) - v(x)\right) > \max_{v \in \mathcal{V}}\left(v(q) - v(\beta r + (1-\beta)x)\right) \text{ for some } \beta \in (0, 1].
\end{equation}
By (\ref{eq:beta1}) and (\ref{eq:beta2}),
\begin{equation}
\label{eq:beta3}
\min_{y \in \text{co}(\underline{r}, x)}\max_{v \in \mathcal{V}}\left(v(q) - v(y) \right) > \min_{y \in \text{co}(r, x)}\max_{v \in \mathcal{V}}\left(v(q) - v(y)\right).
\end{equation}
Since $u \in \mathcal{V}$, we have $u(\underline{r}) < u(r) < u(q) < u(x)$. Thus, there exists $r^* \in \text{int}(\text{co}(x, \underline{r}))$ such that $u(r^*) = u(r)$. Since $\underline{r}$ is interior, so is $r^*$. Since $v(\underline{r}) > v(x)$ for some $v \in \mathcal{V}$, we have $v(r^*) > v(x) > v(q)$ for some $v \in \mathcal{V}$. By Lemma \ref{lem:V2worse}, $r^*$ is not worse than any member of $\text{co}(q, x)$. 

For any $y \in \text{co}(\underline{r}, r^*)$, we have $u(q) > u(y)$, so $v(q) \geq v(y)$ for some $v \in \mathcal{V}$. Thus, no member of $\text{co}(\underline{r}, r^*)$ can belong to $\argmin_{y \in \text{co}(\underline{r}, x)}\max_{v \in \mathcal{V}}\left(v(q) - v(y) \right)$. This allows us to replace $\underline{r}$ with $r^*$ in (\ref{eq:beta3}):
\[\min_{y \in \text{co}(r^*, x)}\max_{v \in \mathcal{V}}\left(v(q) - v(y) \right) > \min_{y \in \text{co}(r, x)}\max_{v \in \mathcal{V}}\left(v(q) - v(y)\right).\]
This implies 
\[U(q \mid q, r^*, x) > U(q \mid q, r, x) = (1-\gamma)u(r) = (1-\gamma)u(r^*) = U(r^* \mid q, r^*, x),\]
Thus, $\{q\} = c_2(q, r^* \mid q, r^*, x)$ and $r^* \in \ME(q \mid x)$. But we showed above that this is impossible. Conclude that, for any $r \in \ME(q \mid x)$, 
\[U(q \mid q, r, x) = U(q \mid q, x).\]

Since 
\[U(q \mid q, r, x) = (1-\gamma)u(r)\]
for any $r \in \ME(q \mid x)$, we have
\[(1-\gamma)\UE(q \mid x) = U(q \mid q, r, x) = U(q \mid q, x).\]
\end{proof}

\textbf{Existence:} Fix any $q \in \text{int}(\Delta(Z))$. Fix any $x$ such that $q$ is not worse than $x$. We show that $\ME(q \mid x) \neq \emptyset$. We have $q \in c_2(q \mid q, x)$. By Lemma \ref{lem:V2worse}, $q$ is not worse than any member of $\text{co}(q, x)$. Suppose there exists $r$ such that $u(r) > u(q)$, $r$ is not worse than any member of $\text{co}(q, x)$, and $q \in c_2(q, r \mid q, r, x)$. By Lemma \ref{lem:V2worse}, 
\[U(r \mid q, r, x) = (1-\gamma)u(r).\]
Since $q \in c_2(q, r \mid q, r, x)$, we have
\[U(q \mid q, r, x) \geq (1-\gamma)u(r) > (1-\gamma)u(q),\]
which is impossible. Conclude that $q \in \ME(q \mid x)$.

Since $u \in \mathcal{V}$ but $\mathcal{V}$ contains at least one preference distinct from $u$, there exists $\underline{q} \in \Delta(Z)$ such that $u(q) > u(\underline{q})$, $v(q) \geq v(\underline{q})$ for all $v \in \mathcal{V}$, and $v(q) = v(\underline{q})$ for some $v \in \mathcal{V}$. It is without loss to assume that $\underline{q} \in \text{int}(\Delta(Z))$. For some $\epsilon_1 > 0$, three conditions are satisfied. First, the distance between $q$ and the boundary of $\Delta(Z)$ is greater than $\epsilon_1$. 
Second, for all $x \in B_{\epsilon_1}(q)$, the distance between $\underline{q} + x - q$ and the boundary of $\Delta(Z)$ is at least $\epsilon_1$. Third, $\max_{x \in \bar{B}_{\epsilon_1}(q)}u(\underline{q} + x - q) < u(q)$. 

We claim that, for any $\alpha \in (0, 1)$,
\begin{equation}
\label{eq:shrink}
\min_{x \in \bar{B}_{\alpha\epsilon_1}(q)}U(q \mid q, x) = \alpha \min_{x \in \bar{B}_{\epsilon_1}(q)}U(q \mid q, x) + (1-\alpha)u(q).
\end{equation}
Fix any $x^* \in \argmin_{x \in \bar{B}_{\epsilon_1}(q)}U(q \mid q, x)$. We have
\begin{align}
\min_{x \in \bar{B}_{\alpha\epsilon_1}(q)}U(q \mid q, x) &\leq U(q \mid q, \alpha x^* + (1-\alpha)q) \\
&= \alpha U(q \mid q, x^*) + (1-\alpha)u(q) \\
&= \alpha \min_{x \in \bar{B}_{\epsilon_1}(q)}U(q \mid q, x) + (1-\alpha)u(q).
\end{align}
Now fix any $x^*_\alpha \in \argmin_{x \in \bar{B}_{\alpha\epsilon_1}(q)}U(q \mid q, x)$. Since the distance between $q$ and the boundary of $\Delta(Z)$ is greater than $\epsilon_1$, 
\[x^* \equiv \frac{1}{\alpha}x^*_\alpha - \frac{1-\alpha}{\alpha}q \in \Delta(Z).\]
Since $x^* \in B_\epsilon(q)$, we have
\begin{align}
\min_{x \in \bar{B}_{\epsilon_1}(q)}U(q \mid q, x) &\leq U(q \mid q, x^*)\\
&= \frac{1}{\alpha}U(q \mid q, x^*_\alpha) - \frac{1-\alpha}{\alpha}u(q)\\
&= \frac{1}{\alpha}\min_{x \in \bar{B}_{\alpha\epsilon_1}(q)}U(q \mid q, x) - \frac{1-\alpha}{\alpha}u(q) \\
&\Longrightarrow \quad \alpha \min_{x \in \bar{B}_{\epsilon_1}(q)}U(q \mid q, x) + (1-\alpha)u(q) \leq \min_{x \in \bar{B}_{\alpha\epsilon_1}(q)}U(q \mid q, x).
\end{align}
By (\ref{eq:shrink}) and $u(q) > \max_{x \in \bar{B}_{\epsilon_1}(q)}u(\underline{q} + x - q)$, there exists $\epsilon_2 \in (0, \epsilon_1)$ such that
\[\min_{x \in \bar{B}_{\epsilon_2}(q)}U(q \mid q, x) > (1-\gamma)\max_{x \in \bar{B}_\epsilon(q)}u(\underline{q} + x - q).\]
For each $x \in B_{\epsilon_2}(q)$ such that $q$ is worse than $x$, we have
\[(1-\gamma)u(x) > (1-\gamma)u(q) > U(q \mid q, x) > (1-\gamma)u(\underline{q} + x - q).\]
Thus, there exists $r \in \text{int}(\text{co}(x, \underline{q} + x - q))$ such that $(1-\gamma)u(r) = U(q \mid q, x)$.

Since $q$ is worse than $x$, Lemma \ref{lem:V2worse} implies $v(x) > v(q)$ for all $v \in \mathcal{V}$. Recall that there exists $v^* \in \mathcal{V}$ such that $v^*(q) = v^*(\underline{q})$, so $v^*(\underline{q}+x-q) = v^*(x) > v^*(q)$. Since $r \in \text{co}(x, \underline{q} + x - q)$, we have $v^*(r) \geq v^*(y)$ for all $y \in \text{co}(q, x)$. By Lemma \ref{lem:V2worse}, $r$ is not worse than any member of $\text{co}(q, x)$. This implies
\[U(r \mid q, r, x) = (1-\gamma)u(r).\]
Since $v(\underline{q} + x - q) \leq v(x)$ for all $v \in \mathcal{V}$, we have $v(r) \leq v(x)$ for all $v \in \mathcal{V}$. Since $v(x) > v(q)$ for all $v \in \mathcal{V}$, 
\begin{align}
U(q \mid q, r, x) &= (1-\gamma)u(q) + \max_{v \in \mathcal{V}}\left(v(q) - \max_{q, r, x}v\right) \\
&= (1-\gamma)u(q) + \max_{v \in \mathcal{V}}\left(v(q) - v(x)\right) \\
&= U(q \mid q, x) \\
&= (1-\gamma)u(r) \\
&= U(r \mid q, r, x).
\end{align}

Now fix any $p, x \in B_{\epsilon_2/2}(q)$. We show that $\ME(p \mid x)$ is nonempty. If $p$ is not worse than $x$, we already showed that $p \in \ME(p \mid x)$. Suppose $p$ is worse than $x$. Since $d(p, x) < \epsilon_2$, we have that $q + x - p \in B_{\epsilon_2}(q)$. Thus, there exists $r \in \text{co}(q + x - p, \underline{q} + x - p)$ such that $v(r) \leq v(q + x - p)$ for all $v \in \mathcal{V}$ and 
\[U(r \mid q, r, q + x - p) = (1-\gamma)u(r) = U(q \mid q, q + x - p) = U(q \mid q, r, q + x - p).\]
Since $q + x - p \in B_{\epsilon_2}(q)$, the distance between $\underline{q} + x - p$ and the boundary of $\Delta(Z)$ is greater than $\epsilon_2$. Since $p \in B_{\epsilon_2/2}(q)$, we have that $(\underline{q} + x - p)+ (p - q) \in \text{int}(\Delta(Z))$. Also, $(q + x - p) + (p - q) = x \in \text{int}(\Delta(Z))$. Since $r \in \text{co}(q + x - p, \underline{q} + x - p)$, we have $r + p - q \in \text{int}(\Delta(Z))$. 

Since
\[U(r \mid q, r, q + x - p) = (1-\gamma)u(r),\]
we have
\[U(r + p - q \mid p, r + p - q, x) = (1-\gamma)u(r + p - q).\]
By Lemma \ref{lem:V2worse}, $r+p-q$ is not worse than any member of $\text{co}(p, x)$. We also have
\begin{align}
U(p \mid p, r + p - q, x) &= (1-\gamma)\left(u(p) - u(q)\right) + U(q \mid q, r, q + x - p) \\
&= (1-\gamma)\left(u(p) - u(q)\right) + U(q \mid q, q + x - p) \\
&= (1-\gamma)u(r + p - q) \\
&= U(r + p - q \mid p, r + p - q, x).
\end{align}
Thus, $p \in c_2(p, r + p - q \mid p, r + p - q, x)$. 

Since $p$ is worse than $x$, we have $v(x) > v(p)$ for all $v \in \mathcal{V}$. Since $v(q + x - p) \geq v(r)$ for all $v \in \mathcal{V}$, we have $v(x) \geq v(r + p - q)$ for all $v \in \mathcal{V}$. This implies
\[U(p \mid p, r + p - q, x) = U(p \mid p, x).\]
Take any $r^*$ that is not worse than any member of $\text{co}(p, x)$ and such that $u(r^*) > u(r + p - q)$. We have
\begin{align}
U(r^* \mid p, r^*, x) &= (1-\gamma)u(r^*) \\
&> (1-\gamma)u(r + p - q) \\
&= U(p \mid p, r + p - q, x)\\
&= U(p \mid p, x) \\
&\geq U(p \mid p, r^*, x).
\end{align}
Thus, $p \notin c_2(p, r^* \mid p, r^*, x)$. Conclude that $r+p-q \in \ME(p \mid x)$, so $\ME(p \mid x) \neq \emptyset$. Since $p$ and $x$ were arbitrary members of $B_{\epsilon_2/2}(q)$, we can take $S \equiv B_{\epsilon_2/2}(q)$. 

\textbf{Rationalization:} By Lemma \ref{lem:UE2U}, Rationalization can be rewritten
\[c_2(M \mid \mathcal{M}) = \argmax_{q \in M}\min_{y \in \text{co}(\mathcal{M})}U(q \mid q, y).\]
By the minimax theorem, 
\begin{align}
\argmax_{q \in M}\min_{y \in \text{co}(\mathcal{M})}U(q \mid q, y) &=
\argmax_{q \in M}\left((1-\gamma)u(q) + \gamma \min_{y \in \text{co}(\mathcal{M})}\max_{v \in \mathcal{V}}\left(v(q) - v(y)\right) \right)\\
&= \argmax_{q \in M}\left((1-\gamma)u(q) + \gamma \max_{v \in \mathcal{V}}\left(v(q) - \max_{y \in \mathcal{M}}v(y)\right)\right)\\
&= \argmax_{q \in M}U(q \mid \mathcal{M})\\
&= c_2(M \mid \mathcal{M}).
\end{align} 

\textbf{Monotonicity:} Suppose $q$ is worse than $x$ and $x$ is worse than $y$. By Lemma \ref{lem:UE2U}, it suffices to show that $U(q \mid y) < U(q \mid x)$. By Lemma \ref{lem:V2worse}, $v(q) < v(x) < v(y)$ for all $v \in \mathcal{V}$. Thus,
\[\max_{v \in \mathcal{V}}\left(v(q) - v(y)\right) < \max_{v \in \mathcal{V}}\left(v(q) - v(x)\right).\]
This implies $U(q \mid y) < U(q \mid x)$.

\textbf{Quasiconvexity:} By Lemma \ref{lem:UE2U}, it suffices to show that 
\[U(q \mid \alpha x + (1-\alpha)y) \leq \max\left\{U(q \mid x), \UE(q \mid y) \right\}.\]
If $q$ is not worse than $x$, we have 
\[U(q \mid x) = (1-\gamma)u(q) \geq U(q \mid \alpha x + (1-\alpha)y),\]
and similarly if $q$ is not worse than $y$. Suppose that $q$ is worse than $x$ and $y$. We have
\begin{align}
\max_{v \in \mathcal{V}}\left(v(q) - v(\alpha x + (1-\alpha)y)\right) &\leq \alpha \max_{v \in \mathcal{V}}\left(v(q) - v(x) \right) + (1-\alpha)\max_{v \in \mathcal{V}}\left(v(q) - v(y) \right) \\
&\leq \max\left\{\max_{v \in \mathcal{V}}\left(v(q) - v(x) \right), \max_{v \in \mathcal{V}}\left(v(q) - v(y) \right) \right\}.
\end{align}
This implies the desired result. 

\textbf{Continuity:} By Lemma \ref{lem:UE2U}, it suffices to show that $U(q \mid q, x)$ is Lipschitz continuous. It suffices to show Lipschitz continuity in each argument. 

For the first argument: we need to show that $\frac{|U(q \mid q, x) - U(r \mid r, x)|}{d(q, r)}$ is bounded. If $q$ and $r$ are both worse than $x$, 
\begin{multline}
|U(q \mid q, x) - U(r \mid r, x)| \\\leq (1-\gamma)|u(q) - u(r)| + \gamma \left|\max_{v \in \mathcal{V}}\left(v(q) - v(x) \right)- \max_{v \in \mathcal{V}}\left(v(r) - v(x) \right)\right|.
\end{multline}
If neither $q$ nor $r$ is worse than $x$,
\[|U(q \mid x) - U(r \mid x)| = (1-\gamma)|u(q) - u(r)|.\]
If $q$ is worse than $x$ but $r$ is not, and if $U(q \mid x) \geq U(r \mid x)$,
\begin{align}
|U(q \mid x) - U(r \mid x)| &= (1-\gamma)\left(u(q) - u(r)\right) + \gamma\max_{v \in \mathcal{V}}\left(v(q) - v(x)\right)\\
&< (1-\gamma)\left(u(q) - u(r)\right)\\
&\leq (1-\gamma)|u(q) - u(r)|.
\end{align}.
If $q$ is worse than $x$ but $r$ is not, and if $U(q \mid x) \leq U(r \mid x)$,
\begin{align}
|U(q \mid x) - U(r \mid x)| &= (1-\gamma)\left(u(r) - u(q)\right) - \gamma\max_{v \in \mathcal{V}}\left(v(q) - v(x) \right)\\
&\leq (1-\gamma)\left(u(r) - u(q)\right) + \gamma\left(\max_{v \in \mathcal{V}}\left(v(r) - v(x)\right) - \max_{v \in \mathcal{V}}\left(v(q) - v(x)\right)\right)\\
&\leq (1-\gamma)|u(r) - u(q)| + \gamma\left|\max_{v \in \mathcal{V}}\left(v(r) - v(x)\right) - \max_{v \in \mathcal{V}}\left(v(q) - v(x)\right)\right|.
\end{align}
The case in which $r$ is worse than $x$ but $q$ is not can be handled in the same way as the previous two cases. Thus, in all cases,
\begin{multline}
\frac{|U(q \mid x) - U(r \mid x)|}{d(q, r)} \\\leq \gamma \frac{|u(q) - u(r)|}{d(q, r)} + \gamma \frac{|\max_{v \in \mathcal{V}}\left(v(r) - v(x)\right) - \max_{v \in \mathcal{V}}\left(v(q) - v(x)\right)|}{d(q, r)}.
\end{multline}
We have
\[\frac{|u(q) - u(r)|}{d(q, r)} = \frac{u}{\lVert u \rVert}\cdot \frac{q - r}{\lVert q - r \rVert}\lVert u \rVert \leq \max_{v \in \mathcal{V}}\lVert v \rVert.\]
It is without loss to assume $\max_{v \in \mathcal{V}}\left(v(q) - v(x) \right) \geq \max_{v \in \mathcal{V}}\left(v(r) - v(x) \right)$. Fix some 
\[v_q \in \argmax_{v \in \mathcal{V}}\left(v(q) - v(x) \right).\] 
We have
\begin{align}
\frac{|\max_{v \in \mathcal{V}}\left(v(q) - v(x) \right)- \max_{v \in \mathcal{V}}\left(v(r) - v(x) \right)|}{d(q, r)} &\leq \frac{v_q(q) - v_q(x) - \left(v_q(r) - v_q(x) \right)}{d(q, r)} \\
& = \frac{v_q(q) - v_q(r)}{d(q, r)}\\
&= \frac{v_q}{\lVert v_q \rVert}\cdot \frac{q - r}{\lVert q - r \rVert}\lVert v_q \rVert \\
&\leq \max_{v \in \mathcal{V}}\lVert v \rVert.
\end{align}
Thus,
\begin{align}
\frac{|U(q \mid x) - U(r \mid x)|}{d(q, r)} \leq \max_{v \in \mathcal{V}}\lVert v \rVert.
\end{align}
Since $\mathcal{V}$ is compact, the right-hand side exists. 

For the second argument: we need to show that $\frac{|U(q \mid x) - U(q \mid y)|}{d(x, y)}$ is bounded. If $q$ is worse than both $x$ and $y$, we have
\[|U(q \mid x) - U(q \mid y)| = \gamma\left|\max_{v \in \mathcal{V}}\left(v(q) - v(x)\right) - \max_{v \in \mathcal{V}}\left(v(q) - v(y)\right)\right|.\]
If $q$ is worse than neither $x$ nor $y$, we have
\[|U(q \mid x) - U(q \mid y)| = 0.\]
If $q$ is worse than $x$ but not $y$, then
\begin{align}
|U(q \mid x) - U(q \mid y)| &= -\gamma\max_{v \in \mathcal{V}}\left(v(q) - v(x)\right)\\
&\leq \gamma \left(\max_{v \in \mathcal{V}}\left(v(q) - v(y) \right) - \max_{v \in \mathcal{V}}\left(v(q) - v(x)\right)\right)\\
&\leq \gamma \left|\max_{v \in \mathcal{V}}\left(v(q) - v(x)\right) - \max_{v \in \mathcal{V}}\left(v(q) - v(y)\right)\right|.
\end{align}
The case in which $q$ is worse than $y$ but not $x$ can be handled in the same way. Thus, in all cases,
\[\frac{|U(q \mid x) - U(q \mid y)|}{d(x, y)} \leq \gamma\frac{|\max_{v \in \mathcal{V}}\left(v(q) - v(x)\right) - \max_{v \in \mathcal{V}}\left(v(q) - v(y)\right)|}{d(x,y)}.\]
It is without loss to assume that $\max_{v \in \mathcal{V}}\left(v(q) - v(x)\right) > \max_{v \in \mathcal{V}}\left(v(q) - v(y)\right)$. Fix some \[v_x \in \argmax_{v \in \mathcal{V}}\left(v(q) - v(x)\right).\] We have
\begin{align}
\frac{|\max_{v \in \mathcal{V}}\left(v(q) - v(x) \right)- \max_{v \in \mathcal{V}}\left(v(q) - v(y) \right)|}{d(x, y)} &\leq \frac{v_x(q) - v_x(x)- \left(v_x(q) - v_x(y)\right)}{d(x, y)} \\
&= \frac{v_x(y) - v_x(x)}{d(x, y)}\\
&\leq \frac{v_x}{\lVert v_x \rVert}\cdot \frac{y - x}{\lVert y - x\rVert}\lVert v_x \rVert\\
&\leq \max_{v \in \mathcal{V}}\lVert v \rVert.
\end{align}
Thus,
\[\frac{|U(q \mid x) - U(q \mid y)|}{d(x, y)} \leq \gamma \max_{v \in \mathcal{V}}\lVert v \rVert.\]

\subsection{Sufficiency}

\begin{lemma}
\label{lem:Vpref}
For any nonempty, closed, convex $V_\text{pref} \subset \mathcal{U}_\text{pref}$, any $\hat{\succsim} \in \mathcal{U}_\text{pref}$, and any $q \in \text{int}(\Delta(Z))$: if
\begin{equation}
\label{eq:congruence}
q \succ r \text{ for all } \succsim \in V_\text{pref} \quad \Longrightarrow \quad q \; \hat{\succ} \; r \text{ for all } r \in \Delta(Z),
\end{equation}
then $\hat{\succsim} \in V_\text{pref}$. 
\end{lemma}

\begin{proof}
If $|V_\text{pref}| = 1$, the result is trivial. For the rest of the proof, we assume that $|V_\text{pref}| > 1$. 

Let 
\[W(q) \equiv \left\{r \in \Delta(Z): q \succ r \text{ for all } v \in V_\text{pref}\right\}.\]
If there does not exist $\succsim^* \in \mathcal{U}_\text{pref}$ such that
\[r \in \bar{W}(q) \setminus \{q\} \quad \Longrightarrow \quad q \succ^* r \text{ for all } r \in \Delta(Z),\]
then there exists $r \in \Delta(Z)$ such that $r \in \bar{W}(q) \setminus \{q\}$ and $2q - r \in \bar{W}(q) \setminus \{q\}$. This implies $q \sim r$ for all $\succsim$ satisfying (\ref{eq:congruence}). In turn, this implies the existence of $z^* \in Z$ and $\rho_1 \in \Delta(Z \setminus \{z^*\})$ such that
\[\delta_{z^*} \sim \rho_1 \text{ for all } \succsim \text{ satisfying (\ref{eq:congruence})}.\]
It is without loss to assume that $z^* = z_n$, where $Z = \{z_1, \ldots, z_n\}$. Let $Z_1 \equiv \{z_1, \ldots, z_{n-1}\}$. For any lottery $r \in \Delta(Z)$, let
\[r_1 \equiv r + r(z_n)\left(\rho_1 - \delta_{z_n}\right).\]
We have that $r_1 \in \Delta(Z_1)$ and that $r_1 \sim r$ for all $\succsim$ satisfying (\ref{eq:congruence}). 

Now suppose there does not exist $\succsim^* \in \mathcal{U}_\text{pref}$ such that
\[r \in \bar{W}(q_1) \setminus \{q_1\} \quad \Longrightarrow \quad q_1 \succ^* r \text{ for all } r \in \Delta(Z_1).\]
We can repeat the above argument with $z_{n-1}$ in place of $z_n$. Since
\[z_n \sim \rho_1 \sim (\rho_1)_2 \in \Delta(Z_2) \text{ for all } \succsim \text{ satisfying (\ref{eq:congruence})},\]
it is without loss to assume that $\rho_1 \in \Delta(Z_2)$. We then have $\rho_1, \rho_2 \in \Delta(Z_2)$ such that $\rho_1 \sim \delta_{z_n}$ and $\rho_2 \sim \delta_{z_{n-1}}$ for all $\succsim$ satisfying (\ref{eq:congruence}). 

We claim that there exists $K \in \{0, \ldots, n-3\}$ satisfying two conditions. First, for each $k \in \{1, \ldots, K\}$, there exists $\rho_k \in \Delta(Z_K)$ such that $\rho_k \sim \delta_{z_{n-k+1}}$ for all $\succsim$ satisfying (\ref{eq:congruence}). Second, there exists $\succsim^*$ that satisfies the same indifference conditions and
\begin{equation}
\label{eq:K}
r \in \bar{W}(q_K) \setminus \{q_K\} \quad \Longrightarrow \quad q_K \succ^* r \text{ for all } r \in \Delta(Z_K).
\end{equation}
To see this, suppose that we have iterated back to $K = n-3$. We have $Z_{n-3} = \{z_1, z_2, z_3\}$. It there is still no $\succsim^*$ satisfying (\ref{eq:K}), then the restriction of $\bar{W}(q_{n-3})$ to $\Delta(Z_{n-3})$ must be a half-plane. This implies that all $\succsim \in V_\text{pref}$ agree on their restriction to $\Delta(Z_{n-3})$. This implies $|V_\text{pref}| = 1$, which we have ruled out. 

Now that we have established the existence of $\succsim^*$, we show that $\succsim^* \in V_\text{pref}$. By (\ref{eq:K}), 
\[\left\{r \in \Delta(Z_K): r \succsim^* q_K\right\} \setminus \{q_K\} \subseteq \bigcup_{\succsim \in V_\text{pref}}\left\{r \in \Delta(Z_K): r \succ q_K  \right\}.\]
Fix some $\epsilon > 0$ such that $B^K_\epsilon(q_K) \equiv \{r \in \Delta(Z_K): d(q_K, r) < \epsilon\} \subset \text{int}(\Delta(Z_K))$. We have
\[\left\{r \in \Delta(Z_K): r \succsim^* q_K\right\} \setminus B^K_\epsilon(q) \subseteq \bigcup_{\succsim \in V_\text{pref}}\left\{r \in \Delta(Z_K): r \succ q_K  \right\}.\]
By the Heine-Borel theorem, there exists some finite $W_\text{pref} \subseteq V_\text{pref}$ such that 
\begin{equation}
\label{eq:HB1}
\left\{r \in \Delta(Z_K): r \succsim^* q_K\right\} \setminus B^K_\epsilon(q) \subseteq \bigcup_{\succsim \in W_\text{pref}}\left\{r \in \Delta(Z_K): r \succ q_K  \right\}.
\end{equation}
We claim that
\begin{equation}
\label{eq:HB2}
\left\{r \in \Delta(Z_K): r \succsim^* q_K\right\} \setminus \{q_K\} \subseteq \bigcup_{\succsim \in W_\text{pref}}\left\{r \in \Delta(Z_K): r \succ q_K  \right\}.
\end{equation}
Suppose not. Then, there exists $r \in B^K_\epsilon(q_K) \setminus \{q_K\}$ such that $r \succsim^* q_K$, but $q_K \succsim r$ for all $\succsim \in W_\text{pref}$. For some $\lambda > 0$, we have $r + \lambda(r - q_K) \in \Delta(Z_K) \setminus B^K_\epsilon(q_K)$. We have $r + \lambda(r-q_K) \succsim^* q_K$, but $q_K \succsim r + \lambda(r - q_K)$ for all $\succsim \in W_\text{pref}$. This contradicts (\ref{eq:HB1}).

Assign $\succsim^*$ a representation $v^*$ such that $v^*(q_K) = 0$. Assign each $\succsim \in W_\text{pref}$ a representation $v$ such that $v(q_K) = 0$, and let $W$ denote the resulting set of utilities. By Farkas' lemma, we can find $\alpha > 0$ such that $\alpha \in \text{co}(W)$ provided there does not exist $\phi \in \mathbb{R}^n$ such that $\phi'v^* < 0$ and $\phi'v \geq 0$ for all $v \in W$. Suppose there does exist some such $\phi$. Since $\rho_k \sim \delta_{z_{n-k+1}}$ for all $k \in \{1, \ldots, K\}$ and all $\succsim \in W_\text{pref} \cup \{\succsim^*\}$, it is without loss to assume that $\phi(k) = 0$ for all $k \in \{K+1, \ldots, n\}$. Consider the case $\sum_{\{k: \phi(k) > 0\}}\phi(k) \geq \sum_{\{k: \phi(k) < 0\}}(-\phi(k))$. We can always normalize $\phi$ by dividing each $\phi(k)$ by $\sum_{\{k: \phi(k) > 0\}}\phi(k)$. (Since $\phi \neq 0$, this term must be strictly positive.) By definition of $\phi$,
\begin{align}
&\sum_{k: \phi(k) > 0}\phi(k) v^*(z_k) < \sum_{k: \phi(k) < 0}(-\phi(k)) v^*(z_k) \\
&\sum_{k: \phi(k) > 0}\phi(k) v(z_k) \geq \sum_{k: \phi(k) < 0}(-\phi(k)) v(z_k) \text{ for all } v \in W.
\end{align}
Since the normalization ensures that $\sum_{\{k: \phi(k) > 0\}}\phi(k) = 1$, the left-hand side is the valuation of a lottery in $\Delta(Z_K)$, which we label $\phi^+$. Since $v^*(q_K) = v(q_K) = 0$ for all $v \in W$, we have
\begin{align}
&v^*(\phi^+) < \sum_{k: \phi(k) < 0}(-\phi(k)) v^*(z_k) + \left(1 - \sum_{k: \phi(k) < 0}(-\phi(k))\right)v^*(q_K) \\
&v^*(\phi^-) \geq \sum_{k: \phi(k) < 0}(-\phi(k)) v(z_k) + \left(1 - \sum_{k: \phi(k) < 0}(-\phi(k))\right)v(q_K) \text{ for all } v \in W.
\end{align}
Now the right-hand side is also the valuation of a lottery in $\Delta(Z_K)$, which we label $\phi^-$. We have
\begin{align}
&v^*(\phi^+) < v^*(\phi^-) \\
&v(\phi^+) \geq v(\phi^-) \text{ for all } v \in W.
\end{align}
For $\epsilon > 0$ sufficiently small, we have
\begin{align}
&v^*(q_K) < v^*(q_K + \epsilon(\phi^- - \phi^+)) \\
&v(q_K) \geq v(q_K + \epsilon(\phi^- - \phi^+)) \text{ for all } v \in W.
\end{align}
This contradicts (\ref{eq:HB2}). Conclude that $\alpha v^* \in \text{co}(W)$ for some $\alpha > 0$, so $\succsim^* \in \text{co}(W_\text{pref})$. Since $\text{co}(W_\text{pref}) \subseteq V_\text{pref}$, we have $\succsim^* \in V_\text{pref}$. 

Now take any $\hat{\succsim}$ that satisfies (\ref{eq:congruence}). Assign $\hat{\succsim}$ a representation $\hat{v}$. For each $n \in \mathbb{N}$, let
\[v_n \equiv \frac{1}{n}v^* + \left(1-\frac{1}{n}\right)\hat{v}.\]
Let $\succsim_n$ denote the preference represented by $v_n$. Since $\hat{v}(\rho_k) = \hat{v}(z_{n-k+1})$ and $v^*(\rho_k) = v^*(z_{n-k+1})$ for all $k \in \{1, \ldots, K\}$, we have $v_n(\rho_k) = v_n(z_{n-k+1})$ for all $k \in \{1, \ldots, K\}$. Additionally, since $\hat{\succsim}$ satisfies (\ref{eq:congruence}) and since
\[r \in \bar{W}(q_K) \setminus \{q_K\} \quad \Longrightarrow \quad v^*(q_K) \succ^* v^*(r)\]
for all $r \in \Delta(Z_K)$, we have 
\[r \in \bar{W}(q_K) \setminus \{q_K\} \quad \Longrightarrow \quad v_n(q_K) \succ^* v_n(r)\]
for all $r \in \Delta(Z_K)$ and all $n$. By the previous argument, $\succsim_n \in V_\text{pref}$ for all $n$. Since $v_n \to \hat{v}$ and $\hat{v}$ represents $\hat{\succsim}$, we have $\hat{\succsim} \in V_\text{pref}$. 
\end{proof}

\begin{lemma}
\label{lem:convexworse}
\begin{enumerate} \
    \item If $r$ is not worse than $q$, then for any $p$, $\alpha r + (1-\alpha)p$ is not worse than $\alpha q + (1-\alpha)p$. 
    \item For any interior $q$: if $r$ is worse than $q$, then for any $p$, $\alpha r + (1-\alpha)p$ is worse than $\alpha q + (1-\alpha)p$.
    \end{enumerate}
\end{lemma}

\begin{proof}
\textbf{First part:} Suppose $r$ is not worse than $q$. We can choose $\tilde{r}$ arbitrarily close to $r$ such that, for some $y \in \text{co}(q, \tilde{r})$ and some $(M, \mathcal{M})$, either $q \in c_2(M \mid \mathcal{M})$ or $q \in c_2(M \setminus \{y\} \mid \mathcal{M} \setminus \{y\})$, but not both. Let $\tilde{r}_\alpha \equiv \alpha \tilde{r} + (1-\alpha)p$, and likewise for $q$ and $y$. Since $y \in \text{co}(q, \tilde{r}) \setminus \{q\}$, we have $y_\alpha \in \text{co}(q_\alpha, \tilde{r}_\alpha) \setminus \{q_\alpha\}$. Let $M_\alpha \equiv \alpha M + (1-\alpha)\{p\}$, and likewise for $\mathcal{M}$. By Linearity, $q_\alpha$ is in $c_2(M_\alpha \mid \mathcal{M}_\alpha)$ or $c_2(M_\alpha \setminus \{y_\alpha\} \mid \mathcal{M}_\alpha \setminus \{y_\alpha\})$, but not both. Conclude that $r_\alpha$ is not worse than $q_\alpha$. 

\textbf{Second part:} Suppose $r$ is worse than $q$. Let $q_\alpha \equiv \alpha q + (1-\alpha)p$, and likewise for $r$. Fix $\lambda \in (0, 1)$ such that $(1+\lambda)q - \lambda q_\alpha \in \Delta(Z)$. Let $\beta = \lambda/(1+\lambda)$. We have that 
\begin{align}
&\beta q_\alpha + (1-\beta)((1+\lambda)q - \lambda q_\alpha) = q \\
&\beta r_\alpha + (1-\beta)((1+\lambda)q - \lambda q_\alpha) = (1-\beta\alpha)q + (\beta\alpha) r \in \text{co}(q, r).
\end{align}
Suppose $r_\alpha$ is not worse than $q_\alpha$. By the first part, $(1-\beta\alpha)q + \beta\alpha r$ is not worse than $q$. Thus, there exists $y \in \text{co}(q, r) \setminus \{q\}$ that is not worse than $q$. This contradicts the definition of ``worse than.''
\end{proof}

\begin{lemma}
\label{lem:self}
For any $r \in S$, $\UE(r \mid r) = u(r)$.
\end{lemma}

\begin{proof}
By Existence, $\UE(r \mid r)$ exists. Suppose that $r$ is worse than $r$. By Monotonicity, $\UE(r \mid r) < \UE(r \mid r)$---contradiction. Since $r$ is not worse than $r$ and since $r \in c_2(r \mid r)$, we have $\UE(r \mid r) \geq u(r)$. If $\UE(r \mid r) > u(r)$, then there exists $q$ such that $u(q) > u(r)$ and $r \in c_2(q, r \mid q, r)$. This contradicts the definition of $u$. Thus, $\UE(r \mid r) = u(r)$. 
\end{proof}

\begin{lemma}
\label{lem:regretfree}
For any $r, x \in S$: if $r$ is not worse than $x$, then $\UE(r \mid x) = u(r)$. 
\end{lemma}

\begin{proof}
By Existence, $\UE(r \mid x)$ exists. By Lemma \ref{lem:convexworse}, $r$ is not worse than any member of $\text{co}(r, x)$. Since $r \in c_2(r \mid r, x)$, we have $\UE(r \mid x) \geq u(r)$. 

Suppose $\UE(r \mid x) > u(r)$. Then, there exists $q$ such that $u(q) > u(r)$, $q$ is not worse than any member of $\text{co}(r, x)$, and $r \in c_2(q, r \mid q, r, x)$. 

For some $\alpha \in (0, 1]$, $\alpha q + (1-\alpha)r \in S$. Let $q_\alpha \equiv \alpha q + (1-\alpha)r$, and likewise for $x$. By Existence, $\UE(q_\alpha \mid y)$ and $\UE(r \mid y)$ exist for all $y \in \text{co}(q_\alpha, r, x_\alpha)$. By Lemma \ref{lem:convexworse}, $q_\alpha$ is not worse than any member of $\text{co}(q_\alpha, r, x_\alpha)$. We have already seen that this implies $\min_{y \in \text{co}(q_\alpha, r, x_\alpha)}\UE(q_\alpha \mid y) \geq u(q_\alpha)$. Since $\UE(r \mid r) = u(r)$ by Lemma \ref{lem:self}, we have
\[\min_{y \in \text{co}(q_\alpha, r, x_\alpha)}\UE(r \mid y) \leq u(r) < u(q_\alpha) \leq \min_{y \in \text{co}(q_\alpha, r, x_\alpha)}\UE(q_\alpha \mid y).\]
By Rationalization, $r \notin c_2(q_\alpha, r \mid q_\alpha, r, x_\alpha)$. By Linearity, $r \notin c_2(q, r \mid q, r, x)$, which is a contradiction. Conclude that $\UE(r \mid x) = u(r)$.  
\end{proof}

\begin{lemma}
\label{lem:UEbelow}
For any $r, x \in S$, $\UE(r \mid x) \leq u(r)$.
\end{lemma}

\begin{proof}
Suppose not. Then there exists $q$ such that $u(q) > u(r)$, $q$ is not worse than any member of $\text{co}(r, x)$, and $r \in c_2(q, r \mid q, r, x)$. For some $\alpha \in (0, 1]$, $q_\alpha \equiv \alpha q + (1-\alpha)r \in S$ and $x_\alpha \equiv \alpha x + (1-\alpha)r \in S$. By Linearity, $r \in c_2(q_\alpha, r \mid q_\alpha, r, x_\alpha)$. By Lemma \ref{lem:convexworse}, $q_\alpha$ is not worse than any member of $\text{co}(r, x_\alpha)$. By Existence, for all $y \in \text{co}(q_\alpha, r, x_\alpha)$, both $\UE(q_\alpha \mid y)$ and $\UE(r \mid y)$ exist. By Rationalization and $r \in c_2(q_\alpha, r \mid q_\alpha, r, x_\alpha)$, we have 
\[\min_{y \in \text{co}(q_\alpha, r, x_\alpha)}\UE(r \mid y) \geq \min_{y \in \text{co}(q_\alpha, r, x_\alpha)}\UE(q \mid y).\]
By Lemma \ref{lem:regretfree} and the fact that $q_\alpha$ is not worse than any member of $\text{co}(r, x_\alpha)$, the right-hand side equals $u(q)$. Since $u(q) > u(r)$, we have
\[\min_{y \in \text{co}(q_\alpha, r, x_\alpha)}\UE(r \mid y) > u(r).\]
But this cannot hold since $\UE(r \mid r) = u(r)$ by Lemma \ref{lem:self}.
\end{proof}

\begin{lemma}
\label{lem:regret}
For any $r, x \in S$: if $r$ is worse than $x$, then $\UE(r \mid x) < u(r)$.
\end{lemma}

\begin{proof}
By Lemma \ref{lem:convexworse}, $r$ is worse than $\alpha r + (1-\alpha)x$, and $\alpha r + (1-\alpha)x$ is worse than $x$. By Monotonicity, $\UE(r \mid \alpha r + (1-\alpha)x) > \UE(r \mid x)$. By Lemma \ref{lem:UEbelow}, $u(r) \geq \UE(r \mid \alpha r + (1-\alpha)x)$. Conclude that $u(r) > \UE(r \mid x)$. 
\end{proof}

\begin{lemma}
\label{lem:utilworse}
For any $r, x \in S$: if $r$ is worse than $x$, then $u(x) > u(r)$. 
\end{lemma}

\begin{proof}
Suppose that $r$ is worse than $x$ but $u(r) > u(x)$. We have $x \in c_2(x \mid x)$. By definition of $u$, we have $x \notin c_2(x, r \mid x, r)$. This contradicts the definition of ``worse than.'' 

Now suppose that $r$ is worse than $x$ and $u(r) = u(x)$. By Lemma \ref{lem:regret}, we have $\UE(r \mid x) < u(r) = u(x)$. By Continuity, there exists $\tilde{r}$ such that $\UE(\tilde{r} \mid x) < u(x) = u(r) < u(\tilde{r})$. By Lemma \ref{lem:regretfree}, $\tilde{r}$ is worse than $x$. But since $u(\tilde{r}) > u(x)$, this cannot be. 
\end{proof}

\begin{lemma}
\label{lem:closeworse}
For each $q \in S$, there exists $p$ arbitrarily close to $q$ such that $p$ is worse than $q$, and there exists $p'$ arbitrarily close to $q$ such that $q$ is worse than $p'$.
\end{lemma}

\begin{proof}
By assumption, there exist $(M, \mathcal{M})$ and $y$ such that $c_2(M \mid \mathcal{M}) \neq c_2(M \mid \mathcal{M} \setminus \{y\})$. There exists $\alpha \in (0, 1]$ such that $\mathcal{M}_\alpha \equiv \alpha \mathcal{M} + (1-\alpha)\{q\} \subset S$. Let $y_\alpha \equiv \alpha y + (1-\alpha)q$, and let $M_\alpha \equiv \alpha M + (1-\alpha)\{q\}$. By Linearity, we have $c_2(M_\alpha \mid \mathcal{M}_\alpha) \neq c_2(M_\alpha \mid \mathcal{M}_\alpha \setminus \{y_\alpha\})$. Suppose that no member of $M_\alpha$ is worse than any member of $\text{co}(\mathcal{M}_\alpha)$. By Lemma \ref{lem:regretfree}, we have $\UE(r \mid x) = u(r)$ for all $r \in M_\alpha$ and all $x \in \text{co}(\mathcal{M}_\alpha)$. By Rationalization,
\[c_2(M_\alpha \mid \mathcal{M}_\alpha) = \argmax_{M_\alpha}u = c_2(M_\alpha \mid \mathcal{M}_\alpha \setminus \{y_\alpha\}).\]
This is a contradiction, so there exist $r \in M_\alpha$ and $x \in \text{co}(\mathcal{M}_\alpha)$ such that $r$ is worse than $x$.

Choose any $p \in S$ such that $q - p = \epsilon(x - r)$ for some $\epsilon \in (0, 1)$. Since $x \in S$, there exists $\lambda \in (0, 1/2)$ such that $(1+\lambda)x - \lambda q \in S$. Let $\alpha = 1/(1+\lambda)$. We have
\begin{align}
&\alpha((1+\lambda)x - \lambda q) + (1-\alpha)q = x\\
&\alpha((1+\lambda)x - \lambda q) + (1-\alpha)p = \left(1-\epsilon\frac{\lambda}{1-\lambda}\right)x + \left(\frac{\lambda}{1-\lambda}\epsilon\right) r \in \text{co}(x, r) \setminus \{x\}.
\end{align}
Suppose that $p$ is not worse than $q$. By Lemma \ref{lem:convexworse}, there exists $y \in \text{co}(x, r) \setminus \{x\}$ that is not worse than $x$. Since $r$ is worse than $x$, this contradicts Lemma \ref{lem:convexworse}. Conclude that $p$ is worse than $q$. By Lemma \ref{lem:convexworse}, the same is true for any $y \in \text{co}(p, q) \setminus \{q\}$. 

A parallel argument establishes the existence of $p'$ such that $q$ is worse than $p'$. By Lemma \ref{lem:convexworse}, $q$ is worse than any $y \in \text{co}(q, p') \setminus \{q\}$. 
\end{proof}

\begin{lemma}
\label{lem:mix}
For any $q, x \in S$: $\UE(q \mid \alpha x + (1-\alpha)q) = \alpha \UE(q \mid x) + (1-\alpha)u(q)$.
\end{lemma}

\begin{proof}
First, suppose that $q$ is not worse than $x$. By Lemma \ref{lem:convexworse}, $q$ is not worse than $\alpha x + (1-\alpha)q$. By Lemma \ref{lem:regretfree}, we have
\[\UE(q \mid \alpha x + (1-\alpha)q) = u(q) = \alpha \UE(q \mid x) + (1-\alpha)u(q).\]

Now suppose that $q$ is worse than $x$. By Linearity and Lemma \ref{lem:convexworse}, 
\[\UE(q \mid \alpha x + (1-\alpha)q) \geq \alpha \UE(q \mid x) + (1-\alpha)u(q).\] 
Toward a contradiction, suppose the inequality holds strictly. Let $r^*$ be a member of 
\begin{multline}
\argmax\{u(r): r \text{ is interior and not worse than any member of } \text{co}(q, x), \\ \text{ and } q \in c_2(q, r \mid q, r, x)\}.
\end{multline}
Let $x_\alpha \equiv \alpha x + (1-\alpha)q$, and likewise for $r^*$. Let $p^*_\alpha$ be a member of 
\begin{multline}
\argmax\{u(r): r \text{ is interior and not worse than any member of } \text{co}(q, x_\alpha), \\\text{ and } q \in c_2(q, r \mid q, r, x_\alpha)\}.
\end{multline}
By assumption, $u(r^*_\alpha) < u(p^*_\alpha)$. 

Let
\[p^*_\beta \equiv \left(\frac{\beta}{\alpha}\right)p^*_\alpha + \left(1-\frac{\beta}{\alpha} \right)q.\] Fix $\beta \in (0, \alpha]$ such that $r^*_\beta, p^*_\beta \in S$. Since $u(r^*_\alpha) < u(p^*_\alpha)$, we have $u(r^*_\beta) < u(p^*_\beta)$. 

We show that $\{q, r^*\} = c_2(q, r^* \mid q, r^*, x)$. Since $r^*$ is not worse than any member of $\text{co}(q, x)$, Lemma \ref{lem:convexworse} implies that $r^*_\beta$ is not worse than any member of $\text{co}(q, x_\beta)$. Suppose $r^* \notin c_2(q, r^* \mid q, r^*, x)$. By Linearity, $r^*_\beta \notin c_2(q, r^*_\beta \mid q, r^*_\beta, x_\beta)$. By Rationalization and Lemma \ref{lem:regretfree}, $\min_{y \in \text{co}(q, r^*_\beta, x_\beta)}\UE(q \mid y) > u(r^*_\beta)$. By Lemma \ref{lem:closeworse}, we can choose $\tilde{r}_\beta$ arbitrarily close to $r^*_\beta$ such that $r^*_\beta$ is worse than $\tilde{r}_\beta$. 
If $\tilde{r}_\beta$ is sufficiently close to $r^*_\beta$, then it will satisfy the following three conditions. First, by Existence, $\UE(q \mid y)$ and $\UE(\tilde{r}_\beta \mid y)$ exist for all $y \in \text{co}(q, \tilde{r}_\beta, x_\beta)$. Second, by Continuity, $\min_{y \in \text{co}(q, \tilde{r}_\beta, x_\beta)}\UE(q \mid y) \geq u(\tilde{r}_\beta)$. This implies $q \in c_2(q, \tilde{r}_\beta \mid q, \tilde{r}_\beta, x_\beta)$. Third, 
\[\tilde{r} \equiv \frac{1}{\beta}\tilde{r}_\beta - \frac{1-\beta}{\beta}q \in \text{int}(\Delta(Z)).\] Since $q \in c_2(q, \tilde{r}_\beta \mid q, \tilde{r}_\beta, x_\beta)$, Linearity implies $q \in c_2(q, \tilde{r} \mid q, \tilde{r}, x)$. Suppose that $\tilde{r}$ is worse than some $y \in \text{co}(q, x)$. By Lemma \ref{lem:convexworse}, $\tilde{r}_\beta$ is worse than $y_\beta \in \text{co}(q, x_\beta)$. By Monotonicity, $\UE(r^*_\beta \mid \tilde{r}_\beta) > \UE(r^*_\beta \mid y_\beta)$. Since $r^*_\beta$ is not worse than any member of $\text{co}(q, x_\beta)$, Lemma \ref{lem:regretfree} implies $\UE(r^*_\beta \mid y_\beta) = u(r^*_\beta)$. Thus, $\UE(r^*_\beta \mid \tilde{r}_\beta) > u(r^*_\beta)$. This contradicts Lemma \ref{lem:UEbelow}. Conclude that $\tilde{r}$ is not worse than any $y \in \text{co}(q, x)$. Since $q \in c_2(q, \tilde{r} \mid q, \tilde{r}, x)$, we have $\UE(q \mid x) \geq u(\tilde{r})$. Since $r^*_\beta$ is worse than $\tilde{r}_\beta$, Lemma \ref{lem:utilworse} implies $u(r^*_\beta) < u(\tilde{r}_\beta)$. This implies $u(r^*) < u(\tilde{r})$, so $\UE(q \mid x) > u(r^*)$. This contradicts the definition of $r^*$. Conclude that $\{q, r^*\} = c_2(q, r^* \mid q, r^*, x)$. By Linearity, $\{q, r^*_\beta\} = c_2(q, r^*_\beta \mid q, r^*_\beta, x_\beta)$. By Rationalization and Lemma \ref{lem:regretfree}, $\min_{y \in \text{co}(q, r^*_\beta, x_\beta)}\UE(q \mid y) = u(r^*_\beta)$. 

We break the rest of the proof into two cases. \textbf{Case 1:}
\[x_\beta \notin \argmin_{y \in \text{co}(q, r^*_\beta, x_\beta)}\UE(q \mid y).\]

First, we show that each member of $\argmin_{y \in \text{co}(q, r^*_\beta, x_\beta)}\UE(q \mid y)$ can be written $\lambda r^*_\beta + (1-\lambda)x_\beta$ for some $\lambda \in (0, 1]$. Suppose otherwise, so we have $\lambda q + (1-\lambda)y^* \in \argmin_{y \in \text{co}(q, r^*_\beta, x_\beta)}\UE(q \mid y)$ for some $\lambda \in (0, 1]$ and some $y^* \in \text{co}(r^*_\beta, x_\beta)$. Since $q$ is worse than $x$, Lemma \ref{lem:convexworse} implies that $q$ is worse than $x_\beta$. By Lemma \ref{lem:regret}, $\min_{y \in \text{co}(q, r^*_\beta, x_\beta)}\UE(q \mid y) < u(q)$. Thus, $\UE(q \mid \lambda q + (1-\lambda)y^*) < u(q)$. By Lemma \ref{lem:regretfree}, $q$ is worse than $\lambda q + (1-\lambda)y^*$. By Lemma \ref{lem:convexworse}, $q$ is worse than $y^*$. Applying Lemma \ref{lem:convexworse} again, $\lambda q + (1-\lambda)y^*$ is worse than $y^*$. By Monotonicity, $\UE(q \mid \lambda q + (1-\lambda)y^*) > \UE(q \mid y^*)$. This contradicts the assumption about $\lambda q + (1-\lambda)y^*$. 

Next, we show that there exists $\epsilon > 0$ such that  $B_\epsilon(r_\beta^*) \subset S$ and no member of $B_\epsilon(r_\beta^*)$ is worse than any member of $\text{co}(q, x_\beta)$. Suppose otherwise, so there exist lotteries arbitrarily close to $r^*_\beta$ that are worse than some member of $\text{co}(q, x_\beta)$. Fix any $\tilde{r}_\beta \in S$ such that $\tilde{r}_\beta$ is worse than $\lambda q + (1-\lambda)x_\beta$ for some $\lambda \in [0, 1]$. Since $q$ is worse than $x_\beta$, Lemma \ref{lem:convexworse} implies that $\lambda q + (1-\lambda)x_\beta$ is worse than $x_\beta$. By Monotonicity, $\UE(\tilde{r}_\beta \mid \lambda q + (1-\lambda)x_\beta) > \UE(\tilde{r}_\beta \mid x_\beta)$. By Lemma \ref{lem:UEbelow}, $u(\tilde{r}_\beta) \geq \UE(\tilde{r}_\beta \mid \lambda q + (1-\lambda)x_\beta)$. Thus, $u(\tilde{r}_\beta) > \UE(\tilde{r}_\beta \mid x_\beta)$. By Lemma \ref{lem:regretfree}, $\tilde{r}_\beta$ is worse than $x_\beta$. Thus, there exist lotteries arbitrarily close to $r^*_\beta$ that are worse than $x_\beta$. 

Fix any $\lambda \in (0, 1]$ such that $\lambda r^*_\beta + (1-\lambda)x_\beta \in \argmin_{y \in \text{co}(q, r^*_\beta, x_\beta)}\UE(q \mid y)$. We have that $u(q) > \UE(q \mid \lambda r^*_\beta + (1-\lambda)x_\beta)$. Recall that we can find $\tilde{r}_\beta$ arbitrarily close to $r^*_\beta$ such that $\tilde{r}_\beta$ is worse than $x_\beta$. By Continuity, we can ensure $u(q) > \UE(q \mid \lambda \tilde{r}_\beta + (1-\lambda)x_\beta)$ by choosing $\tilde{r}_\beta$ sufficiently close to $r^*_\beta$. By Lemma \ref{lem:regretfree}, $q$ is worse than $\lambda \tilde{r}_\beta + (1-\lambda)x_\beta$. Since $\tilde{r}_\beta$ is worse than $x_\beta$, Lemma \ref{lem:convexworse} implies that $\lambda \tilde{r}_\beta + (1-\lambda)x_\beta$ is worse than $x_\beta$. By Monotonicity, $\UE(q \mid \lambda \tilde{r}_\beta + (1-\lambda)x_\beta) > \UE(q \mid x_\beta)$. Since this holds for $\tilde{r}_\beta$ arbitrarily close to $r^*_\beta$, Continuity implies $\UE(q \mid \lambda r^*_\beta + (1-\lambda)x_\beta) \geq \UE(q \mid x_\beta)$. This contradicts the assumption that $x_\beta \notin \argmin_{y \in \text{co}(q, r^*_\beta, x_\beta)}\UE(q \mid y)$. 

By Lemma \ref{lem:closeworse}, we can find $\hat{r}_\beta$ arbitrarily close to $r^*_\beta$ such that $\hat{r}_\beta$ is worse than $r^*_\beta$. We have that $u(q) > \UE(q \mid \lambda r^*_\beta + (1-\lambda)x_\beta)$. By Continuity, we can ensure $u(q) > \UE(q \mid \lambda \hat{r}_\beta + (1-\lambda)x_\beta)$ by choosing $\hat{r}_\beta$ sufficiently close to $r^*_\beta$. Fix $\lambda \in [0, 1]$ such that $\lambda \hat{r}_\beta + (1-\lambda)x_\beta \in \argmin_{y \in \text{co}(q, \hat{r}_\beta, x_\beta)}\UE(q \mid y)$. If $\lambda = 0$, we have
\[\min_{y \in \text{co}(q, \hat{r}_\beta, x_\beta)}\UE(q \mid y) = \UE(q \mid x_\beta) > \min_{y \in \text{co}(q, r^*_\beta, x_\beta)}\UE(q \mid y).\]
Now suppose $\lambda > 0$. Since $\hat{r}_\beta$ is worse than $r^*_\beta$, Lemma \ref{lem:convexworse} implies that $\lambda \hat{r}_\beta + (1-\lambda)x_\beta$ is worse than $\lambda r^*_\beta + (1-\lambda)x_\beta$. By Monotonicity, $\UE(q \mid \lambda \hat{r}_\beta + (1-\lambda)x_\beta) > \UE(q \mid \lambda r^*_\beta + (1-\lambda)x_\beta)$. This implies
\[\min_{y \in \text{co}(q, \hat{r}_\beta, x_\beta)}\UE(q \mid y) > \UE(q \mid \lambda r^*_\beta + (1-\lambda)x_\beta) \geq \min_{y \in \text{co}(q, r^*_\beta, x_\beta)}\UE(q \mid y).\]
Thus, regardless of the value of $\lambda$, 
\begin{equation}
\label{eq:hatr_rstar}
\min_{y \in \text{co}(q, \hat{r}_\beta, x_\beta)}\UE(q \mid y) > \min_{y \in \text{co}(q, r^*_\beta, x_\beta)}\UE(q \mid y).
\end{equation}

By definition of $r^*$, we have $u(r^*) = \UE(q \mid x)$. Since $\UE(q \mid x) < u(q)$, we have $u(r^*) < u(q)$. This implies $u(r^*_\beta) < u(q)$. Since $\hat{r}_\beta$ is worse than $r^*_\beta$ and $q$ is worse than $x_\beta$, Lemma \ref{lem:utilworse} implies $u(\hat{r}_\beta) < u(r^*_\beta) < u(q) < u(x_\beta)$. Thus, we can find $\bar{r}_\beta \in \text{co}(\hat{r}_\beta, x_\beta)$ such that $u(r^*_\beta) < u(\bar{r}_\beta) < u(q)$. By choosing $\hat{r}_\beta$ sufficiently close to $r^*_\beta$, we can require $\bar{r}_\beta \in B_\epsilon(r^*_\beta)$. This ensures that $\bar{r}_\beta$ is not worse than any member of $\text{co}(q, x_\beta)$. Since $r^*$ is interior, we can also require that
\[\bar{r} \equiv \frac{1}{\beta}\bar{r}_\beta - \frac{1-\beta}{\beta}q \in \text{int}(\Delta(Z)).\]
By Lemma \ref{lem:convexworse}, $\bar{r}$ will not be worse than any member of $\text{co}(q, x)$. 

Since $u(q) > u(\bar{r}_\beta) > u(\hat{r}_\beta)$, Lemma \ref{lem:utilworse} implies that $q$ is not worse than any $y \in \text{co}(\hat{r}_\beta, \bar{r}_\beta)$. By Lemma \ref{lem:self}, $u(q) = \UE(q \mid y)$ for all $y \in \text{co}(\hat{r}_\beta, \bar{r}_\beta)$. Since $u(q) > \UE(q \mid x_\beta)$, no member of $\text{co}(\hat{r}_\beta, \bar{r}_\beta)$ can belong to $\argmin_{y \in \text{co}(q, \hat{r}_\beta, x_\beta)}\UE(q \mid y)$. Thus, 
\[\min_{y \in \text{co}(q, \hat{r}_\beta, x_\beta)}\UE(q \mid y) = \min_{y \in \text{co}(q, \bar{r}_\beta, x_\beta)}\UE(q \mid y).\]
Combining this with (\ref{eq:hatr_rstar}), we get
\[\min_{y \in \text{co}(q, \bar{r}_\beta, x_\beta)}\UE(q \mid y) > \min_{y \in \text{co}(q, r^*_\beta, x_\beta)}\UE(q \mid y).\]
Since $\min_{y \in \text{co}(q, r^*_\beta, x_\beta)}\UE(q \mid y) = u(r^*_\beta)$, we have
\[\min_{y \in \text{co}(q, \bar{r}_\beta, x_\beta)}\UE(q \mid y) > u(r^*_\beta).\]
By choosing $\bar{r}_\beta$ so that $u(\bar{r}_\beta)$ is sufficiently close to $u(r^*_\beta)$, we can require
\[\min_{y \in \text{co}(q, \bar{r}_\beta, x_\beta)}\UE(q \mid y) \geq u(\bar{r}_\beta).\]
By Rationalization, $q \in c_2(q, \bar{r}_\beta \mid q, \bar{r}_\beta, x_\beta)$. By Linearity, $q \in c_2(q, \bar{r} \mid q, \bar{r}, x)$. Since $\bar{r}$ is interior and is not worse than any member of $\text{co}(q, x)$, we have $\UE(q \mid x) \geq u(\bar{r}) > u(r^*)$. This contradicts the definition of $u(r^*)$. 

\textbf{Case 2:}
\[x_\beta \in \argmin_{y \in \text{co}(q, r^*_\beta, x_\beta)}\UE(q \mid y).\]
Since $\min_{y \in \text{co}(q, r^*_\beta, x_\beta)}\UE(q \mid y) = u(r^*_\beta)$, we have 
\[u(r^*_\beta) = \UE(q \mid x_\beta) \geq \min_{y \in \text{co}(q, p^*_\beta, x_\beta)}\UE(q \mid y).\]
Since $p^*_\alpha$ is not worse than any member of $\text{co}(q, x_\alpha)$ and since $\alpha \leq \beta$, Lemma \ref{lem:convexworse} implies that $p^*_\beta$ is not worse than any member of $\text{co}(q, x_\beta)$. By Lemma \ref{lem:regretfree}, $\min_{y \in \text{co}(q, p^*_\beta, x_\beta)}\UE(p^*_\beta \mid y) = u(p^*_\beta)$. Since $u(p^*_\beta) > u(r^*_\beta)$, we have
\[u(p^*_\beta) > \min_{y \in \text{co}(q, p^*_\beta, x_\beta)}\UE(q \mid y).\]
By Rationalization, $q \notin c_2(q, p^*_\beta \mid q, p^*_\beta, x_\beta)$. By Linearity, $q \notin c_2(q, p^*_\alpha \mid q, p^*_\alpha, x_\alpha)$. This contradicts the definition of $p^*_\alpha$.
\end{proof}

\begin{lemma}
\label{lem:shift}
For any $p, q, r, s \in S$: if $p - s = q - x$, then $\UE(p \mid s) - u(p) = \UE(q \mid x) - u(q)$.
\end{lemma}

\begin{proof}
Fix any $y$ in
\begin{multline}
\argmax\{u(r): r \text{ is interior and is not worse than any member of } \text{co}(q, x), \\\text{ and } q \in c_2(q, r \mid q, r, x).
\end{multline}
We have $\UE(q \mid x) = u(y)$. Fix any $\alpha \in (0, 1]$ such that $p + \alpha(y - q) \in S$, $p + \alpha(x -q) \in S$, and $p + \frac{\alpha}{1-\alpha}(p - q) \in \Delta(Z)$. Since $y$ is not worse than $\beta q + (1-\beta)x$ for any $\beta \in [0, 1]$, Lemma \ref{lem:convexworse} implies that $\alpha y + (1-\alpha)(p + \frac{\alpha}{1-\alpha}(p - q))$ is not worse than $\alpha(\beta q + (1-\beta)x) + (1-\alpha)(p + \frac{\alpha}{1-\alpha}(p - q))$. Rearranging these lotteries, we have that $p + \alpha(y - q)$ is not worse than $\beta p + (1-\beta)(p + \alpha(x-q))$ for any $\beta \in [0, 1]$. By Linearity and the fact that $q \in c_2(q, y \mid q, y, x)$, we have that 
\[p \in c_2(p, p + \alpha(y - q) \mid p, p + \alpha(y - q), p + \alpha(x-q)).\] 
Thus, $\UE(p \mid p + \alpha(x-q)) \geq u(p + \alpha(y - q))$. This can be rewritten 
\[\UE(p \mid \alpha (p + x - q) + (1-\alpha)p) \geq u(p) - \alpha u(q) + \alpha \UE(q \mid x).\]
Applying Lemma \ref{lem:mix} gives
\[\UE(p \mid p + x - q) \geq \UE(q \mid x) + u(p) - u(q).\]
Since $p - s = q - x$, we have
\[\UE(p \mid s) - u(p) \geq \UE(q \mid x) - u(q).\]
A symmetric argument establishes the opposite inequality. 
\end{proof}

Fix some $q \in S$. By Lemma \ref{lem:closeworse}, there exists $x \in S$ such that $q$ is worse than $x$. By Quasiconvexity and Continuity, the set $L(q, x) \equiv \{y \in S: \UE(q \mid y) \leq \UE(q \mid x)\}$ is closed and convex. By Lemma \ref{lem:closeworse}, there exists $x'$ arbitrarily close to $x$ such that $x'$ is worse than $x$. By choosing $x'$ sufficiently close to $x$, we can ensure that $x' \in S$. Since $q$ is worse than $x$, we have $\UE(q \mid x) < u(q)$ by Lemma \ref{lem:regret}. By Continuity, we have $\UE(q \mid x') < u(q)$ provided $x'$ is close enough to $q$. By Lemma \ref{lem:regretfree}, we have that $q$ is worse than $x'$. By Monotonicity, $\UE(q \mid x') > \UE(q \mid x)$, so $x' \notin L(q, x)$. Since we can find $x'$ arbitrarily close to $x$ that does not belong to $L(q, x)$, we have that $x$ is on the boundary of $L(q, x)$. Fix any hyperplane $H_x$ that supports $L(q, x)$ at $x$. We show that $H_x$ does not contain $q$. By Monotonicity, every lottery $x' \in S$ such that $x$ is worse than $x'$ belongs to $L(q, x)$. Since $q$ is worse than $x$, Lemma \ref{lem:convexworse} implies that $q$ is worse than $\alpha x + (1-\alpha)q$ for all $\alpha \in (0, 1]$. By Lemma \ref{lem:regret}, $\UE(q \mid \alpha x + (1-\alpha)q) < u(q)$. For $\alpha$ sufficiently small, $x + \alpha(x - q) \in S$. By Lemma \ref{lem:shift}, $\UE(x \mid x + \alpha(x - q)) < u(x)$. By Lemma \ref{lem:regretfree}, $x$ is worse than $x + \alpha(x-q)$. By Monotonicity, $\UE(q \mid x) > \UE(q \mid x + \alpha(x-q))$. By Continuity, $x + \alpha(x-q)$ is in the interior of $L(q, x)$. since $H_x$ supports $L(q, x)$, we have $x + \alpha(x-q) \notin H_x$. Since $x \in \text{co}(x + \alpha(x-q), q)$ and $x \in H_x$, we have that $q \notin H_x$. 
Since $q \notin H_x$, there exists an expected-utility preference $\succsim_{H_x}$ that has indifference curve $H_x$ through $x$ and that strictly prefers $x$ to $q$. The representation of $\succsim_{H_x}$ is unique up to a positive affine transformation. Let $w_{H_x}$ denote the representation that satisfies
\begin{align}
&w_{H_x}(q) = u(q) \\
&w_{H_x}(x) - w_{H_x}(q) = u(q) - \UE(q \mid x).
\end{align}

Let
\[\hat{\mathcal{W}} \equiv \{w_{H_x} \in \mathcal{U}: x \in S, \; q \text{ is worse than } x,\; H_x \text{ supports } L(q, x) \text{ at } x\}.\]
The following lemmas establish useful properties of $\hat{\mathcal{W}}$.

\begin{lemma}
\label{lem:correctranking}
For any $y \in S$ and any $w_{H_x} \in \hat{\mathcal{W}}$: if $q$ is worse than $y$, then $w_{H_x}(q) < w_{H_x}(y)$.
\end{lemma}

\begin{proof}
Suppose $q$ is worse than $y$. Since $x \in S$, there exists $\lambda > 0$ such that $x + \lambda(y-q) \in S$. By Lemma \ref{lem:regret}, $\UE(q \mid y) < u(q)$. By Lemma \ref{lem:mix}, $\UE(q \mid \lambda y + (1-\lambda)q) < u(q)$. By Lemma \ref{lem:shift}, $\UE(x \mid x + \lambda(y-q)) < u(x)$. By Lemma \ref{lem:regretfree}, $x$ is worse than $x + \lambda(y-q)$. Since $q$ is worse than $x$, Monotonicity implies that $\UE(q \mid x) > \UE(q \mid x + \lambda(y - q))$. By Continuity, for all $\tilde{x}$ sufficiently close to $x + \lambda(y-q)$, we have $\UE(q \mid x) > \UE(q \mid \tilde{x})$. Thus, $x + \lambda(y-q)$ is in the interior of $L(q, x)$. By definition of $w_{H_x}$, we have $w_{H_x}(x + \lambda(y-q)) > w_{H_x}(x)$. This implies $w_{H_x}(y) > w_{H_x}(q)$.
\end{proof}

\begin{lemma}
\label{lem:mixwithq}
For any $w_{H_x} \in \hat{\mathcal{W}}$ and any $\alpha \in (0, 1)$: for any $p \in L(q, \alpha x + (1-\alpha)q)$, we have $w_{H_x}(p) \geq w_{H_x}(\alpha x + (1-\alpha)q) \leq w_{H_x}(p)$. 
\end{lemma}

\begin{proof}
Let $x_\alpha \equiv \alpha x + (1-\alpha)q$. We show that $w_{H_x}(p) \geq w_{H_x}(x_\alpha)$ for all $p \in L(q, x_\alpha)$. 

Fix any $p \in L(q, x_\alpha)$. Fix $\lambda \in (0, 1]$ such that $x + (\lambda/\alpha)(x_\alpha - p) \in S$ and $x_\alpha + \lambda(p - x_\alpha) \in S$. Since $L(q, x_\alpha)$ is convex,
$\lambda p + (1-\lambda)x_\alpha \in L(q, x_\alpha)$. By definition of $L(q, x_\alpha)$, we have
\[\UE(q \mid x_\alpha + \lambda(p - x_\alpha)) \leq \UE(q \mid x_\alpha).\]
By Lemma \ref{lem:mix},
\begin{align}
\UE(q \mid x_\alpha + \lambda(p - x_\alpha)) &= \UE\left(q \mid \alpha\left(x + \frac{\lambda}{\alpha}(p - x_\alpha)\right) + (1-\alpha)q\right)\\
&\alpha \UE\left(q \mid x + \frac{\lambda}{\alpha}(p - x_\alpha)\right) + (1-\alpha)u(q).
\end{align}
Also by Lemma \ref{lem:mix},
\[\UE(q \mid x_\alpha) = \alpha \UE(q \mid x) + (1-\alpha)u(q).\]
Thus,
\[\UE\left(q \mid x + \frac{\lambda}{\alpha}(p - x_\alpha)\right) \leq \UE(q \mid x),\]
so $x + (\lambda/\alpha)(p - x_\alpha) \in L(q, x)$. By definition of $w_{H_x}$, 
\[w_{H_x}\left(x + \frac{\lambda}{\alpha}(p - x_\alpha)\right) \geq w_{H_x}(x).\]
Since $w_{H_x}$ is linear and $\lambda > 0$, we have
\[w_{H_x}(p) \geq w_{H_x}(x_\alpha).\]
\end{proof}

\begin{lemma}
\label{lem:min}
For each $w_{H_x}, w_{H_y} \in \hat{\mathcal{W}}$, 
\begin{equation}
\label{eq:min1}
u(q) - \UE(q \mid x) = w_{H_x}(x) - w_{H_x}(q) \leq w_{H_y}(x) - w_{H_y}(q).
\end{equation}
\end{lemma}

\begin{proof}
We break the argument into two cases.

\textbf{Case 1:} Suppose that $\UE(q \mid x) \leq \UE(q \mid y)$. By Lemma \ref{lem:mix}, there exists $\alpha \in (0, 1]$ such that 
\[\UE(q \mid \alpha x + (1-\alpha)q) = \UE(q \mid y).\]
We have
\begin{align}
w_{H_x}(x) - w_{H_x}(q) &= \frac{1}{\alpha}\left(w_{H_x}(\alpha x + (1-\alpha)q) - w_{H_x}(q)\right) \\
&= \frac{1}{\alpha}\left(u(q) - \UE(q \mid \alpha x + (1-\alpha)q)\right) \\
&= \frac{1}{\alpha}\left(u(q) - \UE(q \mid y)\right) \\
&= \frac{1}{\alpha}\left(w_{H_y}(y) - w_{H_y}(q)\right) \\
&\leq \frac{1}{\alpha}\left(w_{H_y}(\alpha x + (1-\alpha)q) - w_{H_y}(q)\right)\\
&= w_{H_y}(x) - w_{H_y}(q).
\end{align}
second equality uses the definition of $w_{H_x}$ and Lemma \ref{lem:mix}, the third uses the definition of $\alpha$, the fourth uses the definition of $w_{H_y}$, and the fifth uses the fact that $\alpha x + (1-\alpha)q \in L(q, y)$. 

\textbf{Case 2:} Suppose that $\UE(q \mid y) \leq \UE(q \mid x)$. By Lemma \ref{lem:mix}, there exists $\alpha \in (0, 1]$ such that
\[\UE(q \mid \alpha y + (1-\alpha)q) = \UE(q \mid x).\]
We have
\begin{align}
w_{H_x}(x) - w_{H_x}(q) &= u(q) - U(q \mid x) \\
&= u(q) - \UE(q \mid \alpha y + (1-\alpha)q) \\
&= \alpha\left(u(q) - \UE(q \mid y)\right) \\
&= \alpha\left(w_{H_y}(y) - w_{H_y}(q) \right)\\
&= w_{H_y}(\alpha y + (1-\alpha)q) - w_{H_y}(q) \\
&\leq w_{H_y}(x) - w_{H_y}(q),
\end{align}
where the first equality uses the definition of $w_{H_x}$, the second uses the definition of $\alpha$, the third uses Lemma \ref{lem:mix}, the fourth uses the definition of $W_{H_y}$, and the sixth uses Lemma \ref{lem:mixwithq} and the fact that $x \in L(q, \alpha y + (1-\alpha)q)$. 
\end{proof}

\begin{lemma}
\label{lem:bounded}
$\hat{\mathcal{W}}$ is bounded.
\end{lemma}

\begin{proof}
Fix any $w_{H_x} \in \hat{\mathcal{W}}$. Let $H_q \equiv H_x + \{q\} - \{x\}$. Let $\hat{w}_{H_x}$ denote the utility that represents the same preference as $w_{H_x}$, but has $\hat{w}_{H_x}(x) - \hat{w}_{H_x}(q) = d(H_q, H_x)$. For all $y \in \Delta(Z)$, we have
\[w_{H_x}(y) - w_{H_x}(q) = \frac{u(q) - \UE(q \mid x)}{d(H_q, H_x)}\left(\hat{w}_{H_x}(y) - \hat{w}_{H_x}(q)\right) = \frac{u(q) - \UE(q \mid x)}{d(H_q, H_x)}d(H_q, H_y).\]
Since the simplex is bounded, it suffices to show that
\[\frac{u(q) - \UE(q \mid x)}{d(H_q, H_x)}\]
is bounded (holding $q$ fixed but allowing $x$ and $H_x$ to vary). 

There exists $\alpha \in (0, 1]$ such that $\text{proj}(\alpha x + (1-\alpha)q \mid H_q) \in S$. Since $H_q$ is the indifference curve of $w_{H_x}$ through $q$, we have that
\[w_{H_x}(q) = w_{H_x}\left(\text{proj}(\alpha x + (1-\alpha)q \mid H_q)\right).\] By Lemma \ref{lem:correctranking}, $q$ is not worse than $\text{proj}(\alpha x + (1-\alpha)q \mid H_q)$. By Lemma \ref{lem:regretfree}, 
\[\UE(q \mid \text{proj}(\alpha x + (1-\alpha)q \mid H_q)) = u(q).\] 
Using this fact and Lemma \ref{lem:mix}, we have
\begin{align}
\frac{u(q) - \UE(q \mid x)}{d(H_q, H_x)} &= \frac{u(q) - \UE(q \mid \alpha x + (1-\alpha)q)}{d(\alpha x + (1-\alpha)q, H_q)} \\
&= \frac{\UE(q \mid \text{proj}(\alpha x + (1-\alpha)q \mid H_q)) - \UE(q \mid \alpha x + (1-\alpha)q)}{d(\alpha x + (1-\alpha)q, \text{proj}(\alpha x + (1-\alpha)q \mid H_q))}.
\end{align}
Since $\UE$ is Lipschitz continuous in its second argument, the final quantity above is bounded.
\end{proof}

Let $\mathcal{W}$ denote the closed convex hull of $\hat{\mathcal{W}}$. By Lemma \ref{lem:bounded}, $\mathcal{W}$ is compact. Since $\hat{\mathcal{W}}$ satisfies (\ref{eq:min1}), we have
\begin{equation}
\label{eq:min2}
u(q) - \UE(q \mid x) = \min_{w \in \mathcal{W}}\left(w(x) - w(q) \right) \text{ for all } x \in S \text{ such that } q \text{ is worse than } x.
\end{equation}

We show that 
\begin{equation}
\label{eq:agree}
\{x \in S: w(x) > w(q) \text{ for all } w \in \mathcal{W}\} = \{x \in S: q \text{ is worse than } x\}.
\end{equation}
First, take any $x \in S$ such that $q$ is worse than $x$. By Lemma \ref{lem:regret}, $u(q) - \UE(q \mid x) > 0$. By (\ref{eq:min2}), $\min_{w \in \mathcal{W}}\left(w(x) - w(q)\right) > 0$. 

Now take any $y \in S$ such that $q$ is not worse than $y$. By Lemma \ref{lem:closeworse}, there exists $x^* \in S$ such that $q$ is worse than $x^*$. There exists $\bar{x} \in \text{co}(y, x^*)$ such that $\bar{x}$ is on the boundary of $\{x \in S: q \text{ is worse than } x\}$. By Lemmas \ref{lem:regret} and \ref{lem:regretfree}, 
\[\{x \in S: q \text{ is worse than } x\} = \{x \in S: u(q) > \UE(q \mid x)\}.\]
By Continuity, the right-hand side is an open set. Since $\bar{x}$ is on the boundary of this set, it is not the case that $q$ is worse than $\bar{x}$. By Lemma \ref{lem:regretfree}, $\UE(q \mid \bar{x}) = u(q)$. There exists a sequence $x_i \to \bar{x}$ such that $q$ is worse than $x_i$ for all $i$. By construction of $\mathcal{W}$, there also exists a sequence $w_{H_{x_i}} \in \mathcal{W}$ such that $w_{H_{x_i}}(x_i) - w_{H_{x_i}}(q) = u(q) - \UE(q \mid x_i)$ for all $i$. Since $\mathcal{W}$ is compact, we can pass to a convergent subsequence of the $w_{H_{x_i}}$. Let $\bar{w}$ denote the limit. By Continuity,
\begin{align}
\bar{w}(\bar{x}) - \bar{w}(q) &= \lim_{i \to \infty}\left(w_{H_{x_i}}(x_i) - w_{H_{x_i}}(q)\right) \\
&= \lim_{i \to \infty}\left(u(q) - \UE(q \mid x_i)\right)\\
&= u(q) - \UE(q \mid \bar{x})\\
&= 0.
\end{align}
Conclude that there exists $\bar{w} \in \mathcal{W}$ such that $\bar{w}(q) \geq \bar{w}(\bar{x})$. Since $\bar{x} \in \text{co}(y, x^*)$ and since $\bar{w}(x^*) > \bar{w}(q) \geq \bar{w}(\bar{x})$, we have that $\bar{w}(q) \geq \bar{w}(y)$. Conclude that (\ref{eq:agree}) holds. 

\begin{lemma}
\label{lem:nonsing}
$\mathcal{W}$ contains representations of at least two distinct preferences, and does not contain a constant utility. 
\end{lemma}

\begin{proof}
Suppose that $\mathcal{W}$ contains representations of only one preference. By (\ref{eq:agree}) and Lemma \ref{lem:utilworse}, $q$ is worse than $x \in S$ if and only if $u(q) < u(x)$. Fix any $x \in S$ such that $q$ is worse than $x$. By Lemma \ref{lem:regret}, $\UE(q \mid x) < u(q)$. By definition of $\UE$, there exists $r$ such that $u(q) > u(r)$ and $r$ is not worse than $q$. This is a contradiction. 

Suppose that $|\mathcal{W}|$ contains a constant utility. By (\ref{eq:agree}), there is no $x \in S$ such that $q$ is worse than $x$. But this contradicts Lemma \ref{lem:closeworse}. 
\end{proof}

We show that $\alpha u \in \mathcal{W}$ for some $\alpha > 0$. By (\ref{eq:agree}) and Lemma \ref{lem:Vpref}, every $\succsim \in \mathcal{U}$ that satisfies
\begin{equation}
\label{eq:correctranking}
q \text{ is worse than } x \quad \Longrightarrow \quad x \succ q
\end{equation}
has a representation in $\mathcal{W}$. By Lemma \ref{lem:utilworse}, the preference represented by $u$ satisfies (\ref{eq:correctranking}). Thus, $\alpha u + \beta \in \mathcal{W}$ for some $\alpha > 0$. Since $w(q) = u(q)$ for all $w \in \mathcal{W}$, we have $\beta = 0$. 

Let
\[\mathcal{V} \equiv \frac{1}{\alpha}\mathcal{W}.\] Since $\alpha u \in \mathcal{W}$, we have $u \in \mathcal{V}$. Since $\mathcal{W}$ is compact and convex, so is $\mathcal{V}$. By Lemma \ref{lem:nonsing}, $\mathcal{V}$ contains representations of at least two distinct preferences and does not contain a constant utility. By (\ref{eq:agree}), $\mathcal{V}$ satisfies
\begin{equation}
\label{eq:agree2}
\{x \in S: v(x) > v(q) \text{ for all } v \in \mathcal{V}\} = \{x \in S: q \text{ is worse than } x\}.
\end{equation}
By (\ref{eq:min2}),
\[u(q) - \UE(q \mid x) = \alpha \min_{v \in \mathcal{V}}\left(v(x) - v(q)\right) \text{ for all } x \in S \text{ such that } q \text{ is worse than } x.\]
Let $\gamma \equiv \frac{\alpha}{1+\alpha}$. Since $\alpha > 0$, we have $\gamma \in (0, 1)$. We also have
\begin{equation}
\label{eq:min3}
u(q) - \UE(q \mid x) = \frac{\gamma}{1-\gamma}\min_{v \in \mathcal{V}}\left(v(x) - v(q)\right) \text{ for all } x \in S \text{ such that } q \text{ is worse than } x.
\end{equation}

We show that $(\gamma, u, \mathcal{V})$ represents $c_2$. Suppose that $x^* \in c_2(M \mid \mathcal{M})$. For some $\alpha \in (0, 1]$, $\alpha \mathcal{M} + (1-\alpha)\{q\} \subset S$. Let $x^*_\alpha \equiv \alpha x^* + (1-\alpha)q$, and likewise for $M$ and $\mathcal{M}$. By Linearity, $x_\alpha^* \in c_2(M_\alpha \mid \mathcal{M}_\alpha)$. By Rationalization, for all $x_\alpha \in M_\alpha$, we have
\begin{equation}
\label{eq:choice}
\min_{y \in \text{co}(\mathcal{M}_\alpha)}\UE(x^*_\alpha \mid y) \geq \min_{y \in \text{co}(\mathcal{M}_\alpha)}\UE(x_\alpha \mid y).
\end{equation}
Fix any $x_\alpha \in M_\alpha$. Suppose that $x_\alpha$ is not worse than any $y \in \text{co}(\mathcal{M}_\alpha)$. By Lemma \ref{lem:regretfree},
\begin{equation}
\label{eq:noregretcase}
\min_{y \in \text{co}(\mathcal{M}_\alpha)}\UE(x_\alpha \mid y) = u(x_\alpha).
\end{equation}
For any $y \in \text{co}(\mathcal{M}_\alpha)$, for some $\beta \in (0, 1]$, we have $q + \beta(y - x_\alpha)$ in $S$. By Lemma \ref{lem:mix}, we have $\UE(x_\alpha \mid (1-\beta)x_\alpha + \beta y) = u(x_\alpha)$. By Lemma \ref{lem:shift}, $\UE(q \mid q + \beta(y - x_\alpha)) = u(q)$. By Lemma \ref{lem:regret}, $q$ is not worse than $q + \beta(y - x_\alpha)$. By (\ref{eq:agree2}), there exists $v \in \mathcal{V}$ such that $v(q) \geq v(q + \beta(y - x_\alpha))$. Thus, there exists $v \in \mathcal{V}$ such that $v(x_\alpha) \geq v(y)$. Since $x_\alpha \in \mathcal{M}_\alpha$, we have
\[\min_{y \in \text{co}(\mathcal{M}_\alpha)}\max_{v \in \mathcal{V}}\left(v(x_\alpha) - v(y)\right) = 0.\]
Combining with (\ref{eq:noregretcase}), we have
\begin{equation}
\label{eq:noregretcase2}
\min_{y \in \text{co}(\mathcal{M}_\alpha)}\UE(x_\alpha \mid y) = u(x_\alpha) + \frac{\gamma}{1-\gamma}\min_{y \in \text{co}(\mathcal{M}}\max_{v \in \mathcal{V}}\left(v(x_\alpha) - v(y)\right).
\end{equation}
Now suppose that $x_\alpha$ is worse than some $y \in \text{co}(\mathcal{M}_\alpha)$. Fix any $y^* \in \text{co}(\mathcal{M}_\alpha)$ such that $x_\alpha$ is worse than $y^*$. For some $\beta \in (0, 1]$, we have $q + \beta(y^* - x_\alpha) \in S$. By Lemma \ref{lem:mix},
\begin{equation}
\label{eq:alphabeta}
\UE(x_\alpha \mid \beta y^* + (1-\beta)x_\alpha) = \beta \UE(x_\alpha \mid y^*) + (1-\beta)u(x_\alpha).
\end{equation}
Since $\UE(x_\alpha \mid y^*) < u(x_\alpha)$, we have $\UE(x_\alpha \mid \beta y^* + (1-\beta)x_\alpha) < u(x_\alpha)$. 
By Lemma \ref{lem:shift}, 
\begin{equation}
\label{eq:alphabeta2}
\UE(q \mid q + \beta(y^* - x_\alpha)) - u(q) = \UE(x_\alpha \mid \beta y^* + (1-\beta)x_\alpha) - u(x_\alpha).
\end{equation}
Since $\UE(x_\alpha \mid \beta y^* + (1-\beta)x_\alpha) < u(x_\alpha)$, we have $U(q \mid q + \beta(y^*- x_\alpha)) < u(q)$. By Lemma \ref{lem:regretfree}, $q$ is worse than $q + \beta(y^*-x_\alpha)$. Combining (\ref{eq:alphabeta}) and (\ref{eq:alphabeta2}), 
\begin{equation}
\label{eq:alphabeta3}
\UE(x_\alpha \mid y^*) = \frac{1}{\beta}\left(\UE(q \mid q + \beta(y^*-x_\alpha))-u(q)\right) + u(x_\alpha).
\end{equation}
By (\ref{eq:min3}) and the fact that $q$ is worse than $q + \beta(y^*-x_\alpha)$, 
\begin{align}
\UE(q \mid q + \beta(y^* - x_\alpha)) - u(q) &= \frac{\gamma}{1-\gamma}\max_{v \in \mathcal{V}}\left(v(q) - v(q + \beta(y^* - x_\alpha))\right) \\
&= \beta \frac{\gamma}{1-\gamma}\max_{v \in \mathcal{V}}\left(v(x_\alpha) - v(y^*)\right).
\end{align}
Substituting into (\ref{eq:alphabeta3}), 
\[\UE(x_\alpha \mid y^*) = u(x_\alpha) + \frac{\gamma}{1-\gamma}\max_{v \in \mathcal{V}}\left(v(x_\alpha) - v(y^*)\right).\]
Since $y^*$ was an arbitrary member of $\{y \in \text{co}(\mathcal{M}_\alpha): x_\alpha \text{ is worse than } y\}$, we have
\[\min_{y \in \text{co}(\mathcal{M}_\alpha)}\UE(x_\alpha \mid y) = u(x_\alpha) + \frac{\gamma}{1-\gamma}\min_{y \in \text{co}(\mathcal{M}_\alpha)}\max_{v \in \mathcal{V}}\left(v(x_\alpha) - v(y)\right).\]
This is the same as (\ref{eq:noregretcase2}), so (\ref{eq:noregretcase2}) holds regardless of whether $x_\alpha$ is worse than some $y \in \text{co}(\mathcal{M}_\alpha)$. Substituting into (\ref{eq:choice}), 
\begin{multline}
(1-\gamma)u(x^*_\alpha) + \gamma \min_{y \in \text{co}(\mathcal{M}_\alpha)}\max_{v \in \mathcal{V}}(v(x^*_\alpha) - v(y)) \\\geq (1-\gamma)u(x_\alpha) + \gamma \min_{y \in \text{co}(\mathcal{M}_\alpha)}\max_{v \in \mathcal{V}}(v(x_\alpha) - v(y)).
\end{multline}
By the Minimax Theorem (which uses the fact that $\mathcal{V}$ is compact and convex), this is equivalent to
\begin{multline}
(1-\gamma)u(x^*_\alpha) + \gamma\max_{v \in \mathcal{V}}\left(v(x^*_\alpha) - \max_{y \in \mathcal{M}_\alpha}v(y)\right) \\\geq (1-\gamma)u(x_\alpha) + \gamma\max_{v \in \mathcal{V}}\left(v(x_\alpha) - \max_{y \in \mathcal{M}_\alpha}v(y)\right).
\end{multline}
Finally, by linearity of $u$ and each $v$, this is equivalent to
\[(1-\gamma)u(x^*) + \gamma\max_{v \in \mathcal{V}}\left(v(x^*) - \max_{y \in \mathcal{M}}v(y)\right) \geq (1-\gamma)u(x) + \gamma\max_{v \in \mathcal{V}}\left(v(x) - \max_{y \in \mathcal{M}}v(y)\right).\]
A parallel argument establishes: for any $\hat{x} \notin c_2(M \mid \mathcal{M})$, there exists $x \in M$ such that
\[(1-\gamma)u(\hat{x}) + \gamma\max_{v \in \mathcal{V}}\left(v(\hat{x}) - \max_{y \in \mathcal{M}}v(y)\right) < (1-\gamma)u(x) + \gamma\max_{v \in \mathcal{V}}\left(v(x) - \max_{y \in \mathcal{M}}v(y)\right).\] Conclude that $(\gamma, u, \mathcal{V})$ represents $c_2$. 

\section{Proof of Theorem \ref{thm:id}}\label{app:id}

Since the preference represented by material utility is pinned down by
\[q \succsim_u x \quad \Longleftrightarrow \quad q \in c_2(q, x \mid q, x),\]
material utility is pinned down up to a positive affine transformation. Thus, there exist $\alpha > 0$ and $\beta_u \in \mathbb{R}$ such that $u_2 = \alpha u_1 + \beta_u$. 

Fix any interior $q$. By Lemma \ref{lem:V2worse}, we have
\[\{r \in \Delta(Z): r \text{ is worse than } q\} = \{r \in \Delta(Z): v_i(q) > v_i(r) \text{ for all } v \in \mathcal{V}_i\} \text{ for i = 1, 2}.\]
By Lemma \ref{lem:Vpref}, a preference $\succsim$ has a representation in $\mathcal{V}_i$ if and only if $\succsim$ satisfies
\[r \text{ is worse than } q \quad \Longrightarrow \quad q \succsim r.\]
Thus, $\mathcal{V}_1$ and $\mathcal{V}_2$ represent the same set of preferences, $\mathcal{V}_\text{pref}$. We have
\[\mathcal{V}_\text{pref} = \{\succsim \in \mathcal{U}: q \succsim r \text{ for any } r \text{ that is worse than } q\}.\]

Now we show that, for any $q, x \in \Delta(Z)$ such that $q$ is worse than $x$, the two representations agree on the set
\[L(q, x) \equiv \{y \in \Delta(Z): U(q \mid q, x) \geq U(q \mid q, y)\}.\]
Fix any $y$ such that $U_1(q \mid q, x) \geq U_1(q \mid q, y)$. Fix $r \in S$. For all $\epsilon \in (0, 1]$, let $y_\epsilon \equiv \epsilon y + (1-\epsilon)r$, and likewise for $q$ and $x$. For $\epsilon$ sufficiently small, we have $q_\epsilon, x_\epsilon, y_\epsilon \in S$. Since $U_1(q_\epsilon \mid q_\epsilon, x_\epsilon) \geq U_1(q_\epsilon \mid q_\epsilon, y_\epsilon)$, Lemma \ref{lem:UE2U} implies $u_1(\ME(q_\epsilon \mid x_\epsilon)) \geq u_1(\ME(q_\epsilon \mid y_\epsilon))$. Since $u_1$ and $u_2$ represent the same preference, we have $u_2(\ME(q_\epsilon \mid x_\epsilon)) \geq u_2(\ME(q_\epsilon \mid y_\epsilon))$, which implies $U_2(q_\epsilon \mid q_\epsilon, x_\epsilon) \geq U_2(q_\epsilon \mid q_\epsilon, y_\epsilon)$ by Lemma \ref{lem:UE2U}. In turn, this implies $U_2(q \mid q, x) \geq U_2(q \mid q, y)$. 

\begin{lemma}
Suppose that $c_2$ has a rationalization representation $(\gamma, u, \mathcal{V})$. For any $q \in \text{int}(\Delta(Z))$, the set of preferences with representations in $\mathcal{V}$ is the closure of the set of preferences that satisfy
\begin{equation}
\label{eq:tangency}
\text{ for some } x \in \text{int}(\Delta(Z)): \quad y \in L(q, x) \quad \Longrightarrow \quad y \succsim x. 
\end{equation}
\end{lemma}

\begin{proof}
First, take any $\succsim$ that has a representation $v \in \mathcal{V}$. Suppose that there exists $p \in \text{int}(\Delta(Z))$ such that $p \in \min_{L(q, x)}v$. We have
\[y \in L(q, x) \quad \Longrightarrow \quad y \succsim p.\]
We claim that $L(q, x) = L(q, p)$. It suffices to show that $U(q \mid q, x) = U(q \mid q, p)$. Since $p \in L(q, x)$, we have $U(q \mid q, x) \geq U(q \mid q, p)$. Suppose that the inequality holds strictly. By Lemma \ref{lem:closeworse_rep} and interiority of $p$, there exists $\underline{p}$ arbitrarily close to $p$ such that $\underline{p}$ is worse than $p$. By choosing $\underline{p}$ sufficiently close to $p$, we can ensure $U(q \mid q, x) \geq U(q \mid q, \underline{p})$. Thus, $\underline{p} \in L(q, x)$. Since $v \in \mathcal{V}$ and $\underline{p}$ is worse than $p$, we have $v(\underline{p}) < v(p)$. This contradicts the definition of $p$. Conclude that $L(q, x) = L(q, p)$, so $\succsim$ satisfies (\ref{eq:tangency}). 

Suppose that $\hat{\succsim} \in \mathcal{V}_\text{pref}$ does not satisfy (\ref{eq:tangency}). Choose any representation $\hat{v}$ for $\hat{\succsim}$. For all $n \in \mathbb{N}$, let
\[L_n(q) \equiv \left\{x \in \Delta(Z): u(q) - U(q \mid q, x) \geq \frac{1}{n}\right\}.\]
Since each $L_n(q)$ is compact, $\argmin_{L_n(q)}\hat{v}$ is nonempty for each $n$. For each $n$, fix some $x_n \in \argmin_{L_n(q)}\hat{v}$. By the previous argument and the assumption that $\hat{\succsim}$ does not satisfy (\ref{eq:tangency}), we have $x_n \notin \text{int}(\Delta(Z))$ for all $n$. Pass to a convergent subseqence of $\{x_n\}$, and let $\bar{x}$ denote the limit. Since each $x_n$ is on the boundary of $\Delta(Z)$, so is $\bar{x}$. Since $q \in \text{int}(\Delta(Z))$, we have $\bar{x} \neq q$. We show that $\hat{v}(q) = \hat{v}(\bar{x})$. Fix any $p$ such that $u(q) - U(q \mid p, q) > 0$. Let
\[\alpha_n \equiv \frac{1}{n\left(u(q) - U(q \mid q, p) \right)}.\]
Let $p_n \equiv \alpha_n p + (1-\alpha_n)q$. We have that
\[u(q) - U(q \mid q, p_n) = \alpha_n\left(u(q) - U(q \mid q, p)\right) = \frac{1}{n}.\]
Thus, $p_n \in L_n(q)$. Since $x_n \in \argmin_{L_n(q)}\hat{v}$, we have
\[\hat{v}(q) < \hat{v}(x_n) \leq \hat{v}(p_n).\]
Since $p_n \to q$ and $x_n \to \bar{x}$, we have $\hat{v}(q) = \hat{v}(\bar{x})$. Thus, $q \; \hat{\sim} \; \bar{x}$. Since $\hat{\succsim} \in \mathcal{V}_\text{pref}$, we have that $q \notin W(\bar{x})$. But since $q$ is worse than $x_n$ for all $n$, we have that $q + (1/n)(\bar{x} - x_n)$ is worse than $\bar{x}$ for all $n$ sufficiently large, so $q \in \bar{W}(\bar{x})$. 

We can use the same arguments from the proof of Lemma \ref{lem:Vpref} to show that there exists $K \in \{0, \ldots, n-3\}$ that satisfies two conditions. First, for each $k \in \{1, \ldots, K\}$, there exists $\rho_k \in \Delta(z_1, \ldots, z_{n-K})$ such that
\[\rho_k \sim z_{n-k+1} \text{ for all } \succsim \in \mathcal{V}_\text{pref}.\]
Second, there exists $\succsim^*$ such that
\begin{equation}
\label{eq:prefstar1}
x \in \bar{W}(q_K) \setminus \{q_K\} \quad \Longrightarrow \quad q_K \succ^* x \text{ for all } x \in \Delta(Z_K),
\end{equation}
where $Z_K = \{z_1, \ldots, z_{n-K}\}$ and 
\[q_K \equiv q + \sum_{k = 1}^K q(z_{n-k+1})\left(\rho_k - \delta_{z_{n-k+1}}\right).\]
For any $x \in \Delta(Z_K)$, let $L_K(q_K, x)$ denote the restriction of $L(q_K, x)$ to $\Delta(Z_K)$. Suppose $\succsim^*$ does not satisfy (\ref{eq:tangency}) on $\Delta(Z_K)$. The previous argument implies the existence of $x \in \Delta(Z_K)$ such that $x \in \bar{W}(q_K) \setminus \{q_K\}$ and $q_K \sim^* x$. But the definition of $\succsim^*$ rules this out. Conclude that there exists $x^* \in \text{int}(\Delta(Z_K))$ such that
\begin{equation}
\label{eq:prefstar2}
y \in L_K(q_K, x^*) \quad \Longrightarrow \quad y \succsim^* x^*.
\end{equation}
For some $\epsilon > 0$, 
\[x^{**} \equiv x + \epsilon \sum_{k = 1}^K \left( \delta_{z_{n-k+1}}-\rho_k\right) \in \text{int}(\Delta(Z)).\]
Notice that $x^{**} \in \text{int}(\Delta(Z))$. 
We claim that 
\[y \in L(q, x^{**}) \quad \Longrightarrow \quad y \succsim^* x^{**}.\]
Fix some $y \in L(q, x^{**})$, and let
\[y_K \equiv y + \sum_{k = 1}^K y(z_{n-k+1})\left(\rho_k - \delta_{z_{n-k+1}}\right).\] We have that $U(q \mid y) \geq U(q \mid x^{**})$. Since $q \sim q_K$, $y \sim y_K$, and $x^* \sim x^{**}$ for all $\succsim \in \mathcal{V}_\text{pref}$, we have that $U(q_K \mid y_K) \geq U(q_K \mid x^*)$. That is, $y_K \in L(q_K, x^*)$. By (\ref{eq:prefstar2}), $y_K \succsim^* x^*$. Since $\succsim^* \in \mathcal{V}_\text{pref}$, we have $y_K \sim^* y$ and $x^* \sim^* x^{**}$, so $y \succsim^* x^{**}$.

Fix any representation $v^*$ for $\succsim^*$, and let
\[v_n \equiv \frac{1}{n}v^* + \left(1-\frac{1}{n} \right)\hat{v}.\] 
Let $\succsim_n$ denote the preference represented by $v_n$. Since $\succsim^*$ satisfies (\ref{eq:prefstar1}), so does each $\succsim_n$. By the previous argument, each $\succsim_n$ satisfies (\ref{eq:tangency}). Since $v_n \to v$ and $v$ represents $\hat{\succsim}$, we have that $\hat{\succsim}$ is a limit of preferences satisfying (\ref{eq:tangency}). 
\end{proof}

Fix any $q \in S$ and any $\succsim \in \mathcal{V}_\text{pref}$ satisfying (\ref{eq:tangency}). Fix $\epsilon$ such that $x_\epsilon \equiv \epsilon x + (1-\epsilon)q \in S$. We show that
\[y_\epsilon \in L(q, x_\epsilon) \quad \Longrightarrow \quad y_\epsilon \succsim x_\epsilon.\]
Fix any $y_\epsilon \in L(q, x_\epsilon)$, and suppose that $x_\epsilon \succ y_\epsilon$. Since $L(q, x_\epsilon)$ is convex, we have $\alpha y_\epsilon + (1-\alpha)x_\epsilon \in L(q, x_\epsilon)$ for all $\alpha \in (0, 1]$. We also have $x_\epsilon \succ \alpha y_\epsilon + (1-\alpha)x_\epsilon$. Thus, it is without loss to assume that $y_\epsilon$ is close enough to $x_\epsilon$ that 
\[y \equiv \frac{1}{\epsilon}y_\epsilon-\frac{1-\epsilon}{\epsilon}q \in \Delta(Z).\]
Since $y_\epsilon \in L(q, x_\epsilon)$, we have $y \in L(q, x)$. Since $x_\epsilon \succ y_\epsilon$, we have $x \succ y$. This contradicts the definition of $\succsim$. Thus, $\mathcal{V}_\text{pref}$ is the closure of the set of preferences that satisfy 
\begin{equation}
\label{eq:tangency2}
\text{for some } x \in S: \quad y \in L(q, x) \quad \Longrightarrow \quad y \succsim x.
\end{equation}

Fix any $\succsim^* \in \mathcal{V}_\text{pref}$ that satisfies
\[y \in L(q, x^*) \quad \Longrightarrow \quad y \succsim x^*\]
for some $x^* \in S$. Fix any menu $X$ such that $x^*$ is in the interior of $\text{co}(X)$ relative to the indifference curve of $\succsim^*$ through $x^*$. Since no $x \in X$ is in the interior of $L(q, x^*)$, all representations satisfy
\[U(q \mid q, X) = \min_{x \in X}U(q \mid q, x) = U(q \mid q, x^*).\]
Suppose that, in some representation, no utility representing $\succsim^*$ belongs to \[\argmax_{v \in \mathcal{V}}\left(v(q) - v(x^*)\right).\] For any $\hat{v}$ that does belong to the argmax, we have
\[\max_{v \in \mathcal{V}}(v(q) - v(x^*)) = \hat{v}(q) - \hat{v}(x^*) > \hat{v}(q) - \max_{x \in X}\hat{v}(x).\]
For any $\hat{v}$ outside the argmax, we have
\[\max_{v \in \mathcal{V}}(v(q) - v(x^*)) > \hat{v}(q) - \hat{v}(x^*) \geq \hat{v}(q) - \max_{x \in X}\hat{v}(x).\]
Putting both cases together,
\[\max_{v \in \mathcal{V}}\left(v(q) - v(x^*)\right) > \max_{v \in \mathcal{V}}\left(v(q) - \max_{x \in X}v(x)\right),\]
which implies $U(q \mid q, x^*) > U(q \mid q, X)$. This is a contradiction. Thus, in each representation $i$, there exists $v^*_i \in \mathcal{V}_i$ that represents $\succsim^*$ and belongs to 
\[\argmax_{v \in \mathcal{V}_i}\left(v(q) - v(x^*) \right).\] 
Each $v^*_i$ is non-redundant. We have
\begin{align}
\frac{\gamma_2}{1-\gamma_2}\left(v_2^*(x^*) - v_2^*(q)\right) &= u_2(q) - \frac{1}{1-\gamma_2}U_2(q \mid x^*) \\
&= u_2(q) - u_2(\ME(q \mid x^*)) \\
&= \alpha\left(u_1(q) - u_1(\ME(q \mid x^*)) \right) \\
&= \alpha\left(u_1(q) - \frac{1}{1-\gamma_1}U_1(q \mid x^*) \right) \\
&= \alpha \frac{\gamma_1}{1-\gamma_1}\left(\left(v_1^*(x^*) - v_1^*(q)\right)\right)
\end{align}
where the second and fourth equalities use Lemma \ref{lem:UE2U}. Since $v_1^*$ and $v_2^*$ represent the same preference, this implies
\[v_2^* = \alpha \frac{\gamma_1(1-\gamma_2)}{(1-\gamma_1)\gamma_2}v_1^* + \beta^* \text{ for some } \beta^* \in \mathbb{R}.\]

Now consider an arbitrary $\succsim^* \in \mathcal{V}_\text{pref}$. We showed that there exists $\{\succsim_n\}$ converging to $\succsim^*$ such that each $\succsim_n$ satisfies (\ref{eq:tangency2}). By the previous argument, each $\succsim_n$ has representations $v^n_1 \in \underline{\mathcal{V}_1}$ and $v^n_2 \in \underline{\mathcal{V}_2}$ such that
\[\text{ for some } x^n \in S, \text{ for both } i \in \{1, 2\}: \quad v^n_i \in \argmax_{v \in \mathcal{V}_1}\left(v(q) - v(x^n) \right).\]
and 
\[v_2^n = \alpha \frac{\gamma_1(1-\gamma_2)}{(1-\gamma_1)\gamma_2}v_1^n + \beta^n \text{ for some } \beta^n \in \mathbb{R}.\]
Since $\mathcal{V}_1$ and $\mathcal{V}_2$ are compact, it is without loss to assume that $v_1^n$ and $v_2^n$ converge. Let $\bar{v}_1$ and $\bar{v}_2$ denote the respective limits, and let $\bar{\beta} \equiv \lim_{n \to \infty}\beta^n$. Both $\bar{v}_1$ and $\bar{v}_2$ represent $\succsim^*$, and 
\[\bar{v}_2 = \alpha\frac{\gamma_1(1-\gamma_2)}{(1-\gamma_1)\gamma_2}\bar{v}_1 + \bar{\beta}.\]
We show that $\bar{v}_1$ is non-redundant. Suppose there exist $\lambda \in (0, 1)$ and $\beta \in \mathbb{R}$ such that $\lambda \bar{v}_1 + \beta \in \mathcal{V}_1$. We can pass to a convergent subsequence of the $x^n$. For $n$ sufficiently large, we will have
\[\left(\lambda \bar{v}_1 + \beta\right)(q) - \left(\lambda \bar{v}_1 + \beta\right)(x^n) < v_1^n(q) - v_1^n(x^n).\]
This contradicts the definition of $v_1^n$. The same argument works for $\bar{v}_2$, so both $\bar{v}_1$ and $\bar{v}_2$ are non-redundant. Now consider any other $v_1 \in \underline{\mathcal{V}_1}$ and $v_2 \in \underline{\mathcal{V}_2}$ that represent $\succsim^*$. Since $v_1$ and $\bar{v}_1$ are both non-redundant, they differ by a constant at most. The same is true for $v_2$ and $\bar{v}_2$. Thus,
\begin{equation}
\label{eq:v2vsv1}
v_2 = \alpha \frac{\gamma_1(1-\gamma_2)}{(1-\gamma_1)\gamma_2}v_1 + \beta \text{ for some } \beta \in \mathbb{R}.
\end{equation}

Now we show that $\gamma_1 = \gamma_2$. Since $u \in \mathcal{V}_1$, we have that 
\[\alpha \frac{\gamma_1(1-\gamma_2)}{(1-\gamma_1)\gamma_2}u + \beta \in \mathcal{V}_2 \text{ for some } \beta \in \mathbb{R}.\]
Since $\alpha u + \beta_u \in \underline{\mathcal{V}_2}$, it must be that
\[\frac{\gamma_1(1-\gamma_2)}{(1-\gamma_1)\gamma_2} \geq 1.\]
This implies $\gamma_1 \geq \gamma_2$. Since $\alpha u + \beta_u \in \mathcal{V}_2$, we have that
\[\frac{\gamma_2(1-\gamma_1)}{(1-\gamma_2)\gamma_1}u + \beta \in \mathcal{V}_1 \text{ for some } \beta \in \mathbb{R}.\]
Since $u \in \underline{\mathcal{V}_1}$, it must be that
\[\frac{\gamma_2(1-\gamma_1)}{(1-\gamma_2)\gamma_1} \geq 1.\]
This implies $\gamma_2 \geq \gamma_1$. 

Finally, take any $v_1 \in \underline{\mathcal{V}_1}$. Since $v_1$ represents a preference in $\mathcal{V}_\text{pref}$, there exists $v_2 \in \underline{\mathcal{V}_2}$ that represents the same preference. Plugging $\gamma_1 = \gamma_2$ into (\ref{eq:v2vsv1}), we obtain
\[v_2 = \alpha v_1 + \beta \text{ for some } \beta \in \mathbb{R}.\]

\newpage
\section{Identification with unit demand and quasi-linear utility}\label{app:quasiliner_ID}

Consider an agent whose rationales have the form $\theta \mathbbm{1} - \tau$ for $\theta \in [0, \overline{\theta}]$, where $\mathbbm{1}$ is an indicator for whether the agent gets the object and $\tau$ is his net transfer.  The model has three free parameters; the material payoff from the object $\theta^*$, the highest rationale $\overline{\theta}$, and the weight on rationalization utility $\gamma$.

We can identify the model by observing the agent's choices in decision problems of this form:
\begin{enumerate}
    \item At time $1$, the agent chooses either to start the transaction or quit.
    \item The agent learns prices $p_1, p_2 \in \mathbb{R}_0^+$, and pays $p_1$ if he started the transaction.
    \item At time $2$, if the agent started the transaction, he can pay a further $p_2$ to finish the transaction and get the object.
\end{enumerate}
Consider a classical agent with $\gamma = 0$. Conditional on starting the transaction, the starting fee $p_1$ is sunk, and the highest $p_2$ that he would accept is equal to $\theta^*$. By contrast, for a rationalizer, the highest $p_2$ he would accept is equal to $\theta^* + \frac{\gamma}{1-\gamma} p_1$ if $p_1 < (1-\gamma)(\overline{\theta} - \theta^*)$ and equal to $(1-\gamma) \theta^* + \gamma \overline{\theta}$ otherwise. \Cref{fig:WTP_graph} illustrates. 
\begin{figure}[ht!]
    \centering
    \captionsetup{width = .65\textwidth}
    \includegraphics[width =.65\textwidth]{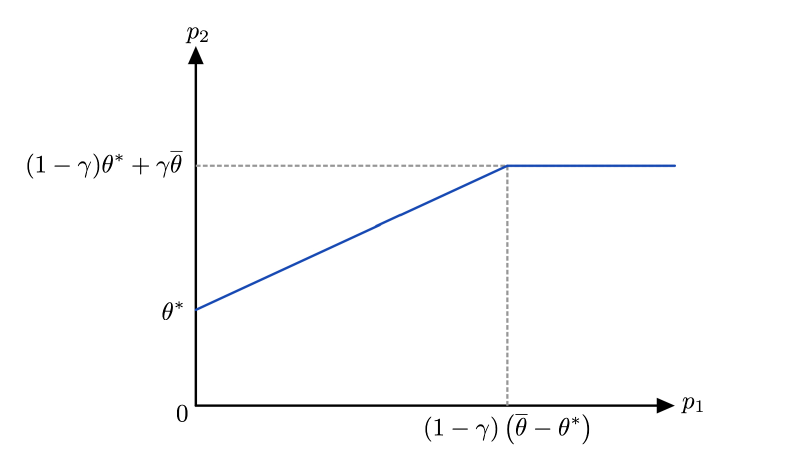}
    \caption{Highest $p_2$ that a rationalizer would accept, as a function of $p_1$.}
    \label{fig:WTP_graph}
\end{figure}

The relevant parameters are identified. $\theta^*$ is the intercept; $\gamma$ determines the slope of the increasing piece, and $\overline{\theta}$ is pinned down by the height of the plateau.  This graph is for an agent facing a take-it-or-leave-it offer at time-$2$, but a similar calculation would yield identification for an agent who learns $p_1$ and then faces a BDM mechanism \citep{becker1964measuring}.
\section{Identification with only on-path choices}\label{app:onpath_id}

One plausible objection to our identification procedure is that some pairs $(M,\mathcal{M})$ could be \textbf{off-path}, meaning that the agent would not choose $M$ from $\mathcal{M}$ at time $1$. (The off-path pairs depend on whether the agent is a na\"if, a sophisticate, or an empathetic sophisticate, as well as on the agent's prior.) If we rule out trembles, can we still elicit the counterfactual choice $c_2^s(M \mid \mathcal{M})$?

We can elicit $c_2^s(M \mid \mathcal{M})$ using only on-path choices, provided that we expand the state space to include a payoff-irrelevant horse race, and offer choices between race-contingent menus.\footnote{\citet[p. 90]{savage1972foundations} argued that some kinds of events are payoff-irrelevant: ``Consider, for example, a lottery in which numbered tickets are drawn from a drum. It seems clear that for an ordinary person the outcome of the lottery is utterly irrelevant to his life, except through the rules of the lottery itself."}

The construction is as follows: We expand the state space to be $S \times \{h, h'\}$, where $h$ and $h'$ represent different results for the horse race. By assumption, for all rationales $v$, we have $v^{s,h} = v^{s,h'} = v^s$.  A \textbf{race-contingent menu} is a function $N: \{h, h'\} \rightarrow \mathcal{K}_f(\Delta(Z))$. The agent now faces decision problems of this form:
\begin{enumerate}
    \item At $t=1$, the agent selects a race-contingent menu $N$ from a finite collection of such menus $\mathcal{N}$.
    \item The agent learns the state $s$ and the race result $\rho \in \{h,h'\}$.
    \item At $t = 2$, the agent chooses a lottery $\la$ from the menu $N(\rho)$.
\end{enumerate}

Suppose we wish to elicit $c_2^s(M \mid \mathcal{M})$ for some arbitrary pair $(M,\mathcal{M})$.  This is trivial if $M = \bigcup \mathcal{M}$, so suppose $M \subsetneq \bigcup \mathcal{M}$.  We construct two race-contingent menus, $N$ and $N'$, such that $N(h) = N'(h') = M$ and $N(h') = N'(h) = (\bigcup \mathcal{M}) \setminus M$.  We then offer the agent, at time $1$, the collection $\mathcal{N} = \{N,N'\}$.

By construction, $c_2^s(M \mid \mathcal{M})$ is equal to the agent's choice from $N(h)$ in state $(s,h)$ and also equal to the agent's choice from $N'(h')$ in state $(s,h')$, that is
\begin{equation}
    \begin{split}
        c_2^s(M \mid \mathcal{M})&\equiv\argmax_{\la \in {\color{red} M}}\left\{(1-\gamma)u^s(\la) + \gamma\max_{v^s \in \mathcal{V}^s}\left\{v^s(\la) - \max_{\hat{\la} \in {\color{red} \bigcup \mathcal{M}}} v^s(\hat{\la}) \right\}\right\} \\
        = c_2^{s,h}(N(h) \mid \mathcal{N}) &\equiv\argmax_{\la \in {\color{red} N(h)}}\left\{(1-\gamma)u^s(\la) + \gamma\max_{v^s \in \mathcal{V}^s}\left\{v^s(\la) - \max_{\hat{\la} \in {\color{red} N(h) \cup N'(h)}} v^s(\hat{\la}) \right\}\right\}\\
        = c_2^{s,h'}(N'(h') \mid \mathcal{N}) &\equiv\argmax_{\la \in {\color{red} N'(h')}}\left\{(1-\gamma)u^s(\la) + \gamma\max_{v^s \in \mathcal{V}^s}\left\{v^s(\la) - \max_{\hat{\la} \in {\color{red} N(h') \cup N'(h')}} v^s(\hat{\la}) \right\}\right\}.
    \end{split}
\end{equation}
If the agent chooses $N$, then we observe $c_2^s(M \mid \mathcal{M})$ in state $(s,h)$. Otherwise, he chooses $N'$, and we observe $c_2^s(M \mid \mathcal{M})$ in state $(s,h')$.

In summary, in the expanded model with race-contingent menus, the model primitives are identified as in \Cref{thm:id} even if we restrict the data to exclude off-path pairs $(N,\mathcal{N})$.

If we rule out contingent menus, is the model still identified by on-path behavior? Some further assumption is required. To illustrate, suppose that material utility $u$ does not depend on the state. Then the agent never makes an \textit{ex post} mistake, so on-path behavior does not identify $\gamma$ or $\mathcal{V}$. However, if we assume that the agent's \textit{ex ante} preferences over $\Delta(Z)$ are `sufficiently misaligned' with her \textit{ex post} preferences, then the model is still identified by on-path behavior. We omit this result for brevity.

\end{document}